\documentclass[a4paper,11pt]{article}

\usepackage{amsmath}    
\usepackage{amssymb}
\usepackage{amsthm} 
\usepackage{bbm}
\usepackage{booktabs}
\usepackage{microtype}
\usepackage{a4wide}
\usepackage{todonotes}

\usepackage[ruled,vlined]{algorithm2e}

\newtheorem{theorem}{Theorem}[section]

\newtheorem{lemma}[theorem]{Lemma}
\newtheorem{proposition}[theorem]{Proposition}

\theoremstyle{definition}
\newtheorem{definition}[theorem]{Definition}
\newtheorem{remark}[theorem]{Remark}

\newtheorem{assumption}{Assumption}

\usepackage{enumitem}
\usepackage{comment}
\usepackage[toc,page]{appendix}
\usepackage{color}
\usepackage{dsfont}
\usepackage{extarrows}
\usepackage{graphicx}   
\usepackage{mathrsfs}
\usepackage{verbatim}   
\usepackage{subfigure}  
\usepackage{hyperref}   

\newcommand{\Tr}{\mathrm{Tr}}

\title{\Large \bf
Measurement-Based Feedback of Open Quantum Systems: A Control-Theoretic Review and Tutorial
}
\date{ }

\author{Weichao Liang
\thanks{
{\small W. Liang is with the School of Automation Science and Engineering, Faculty of Electronic and Information Engineering, Xi’an Jiaotong University, Xi'an, Shaanxi, P.R. China (e-mail: weichao.liang@xjtu.edu.cn). }}
\and Jing Zhang
\thanks{
{\small J. Zhang is with the School of Automation Science and Engineering, Faculty of Electronic and Information Engineering, Xi’an Jiaotong University, Xi'an, Shaanxi, P.R. China (e-mail:zhangjing2022@xjtu.edu.cn). }}
\and Gaoyue Guo
\thanks{
{\small G. Guo is with the Laboratoire MICS and CNRS FR-3487, CentraleSupélec, Université Paris-Saclay, Gif-sur-Yvette, France (e-mail: gaoyue.guo@centralesupelec.fr). }}
\and Daoyi Dong
\thanks{
{\small D. Dong is with Australian Artificial Intelligence Institute, Faculty of  Engineering and Information Technology, University of Technology Sydney, Ultimo, Australia (e-mail: daoyidong@gmail.com). }}
}

\begin{document}

\maketitle

\begin{abstract}
This review develops a control-theoretic perspective on measurement-based feedback for continuously monitored open quantum systems, with the main analysis focused on finite-dimensional systems
governed by diffusive stochastic master equations. We introduce the relevant
state-space, invariant-subspace, and quantum non-demolition structures,
interpret quantum filtering as nonlinear observer dynamics, and review
open-loop asymptotics, filter stability, state-feedback stabilization,
robustness, and reduced-order observer-based control. Particular emphasis is
placed on a recurrence--contraction framework, in which Hamiltonian feedback
removes non-target invariant obstructions while measurement-induced dynamics
provide local exponential contraction. Although the detailed analysis is
developed for finite-dimensional diffusive models, the underlying
measurement--estimation--feedback architecture is relevant across a broad
range of quantum platforms.
We further discuss implementation challenges and open problems involving
scalable estimation, sampling and delay, adaptation, hybrid and
non-Markovian dynamics, practical stability, optimal control, and
learning-based design. By organizing these developments around
invariance, estimation, recurrence, and contraction, the review provides
a tutorial bridge between measurement-based quantum feedback and
nonlinear stochastic control.
\end{abstract}




\section{Introduction}

Quantum control provides a systematic framework for manipulating quantum
systems and plays an important role in quantum information processing,
quantum sensing, and quantum computing
~\cite{nielsen2010quantum,dong2010quantum,zhang2017quantum}.
In realistic devices, interactions with uncontrolled environments and
measurement apparatus make open-system descriptions unavoidable, while
model uncertainty, imperfect measurements, and actuator limitations further
constrain achievable performance. Feedback provides a natural mechanism for
using information acquired during the evolution to regulate the conditional
dynamics and stabilize prescribed quantum states or subspaces
~\cite{wiseman2009quantum,barchielli2009quantum,altafini2012modeling,
jacobs2014quantum,jordan2024quantum}.

Quantum feedback can be realized through different information structures.
In coherent feedback, the plant is interconnected directly with another
quantum dynamical system, without introducing an intermediate classical
measurement record or state estimate
~\cite{gough2009quantum,gough2009series,combes2017slh,nurdin2017linear}.
In measurement-based feedback, an output field is measured and the resulting
classical record is processed to determine the control action. One important
architecture is direct (or Markovian) feedback, in which the measured current
is fed back without explicitly reconstructing the conditional state
~\cite{wiseman1993quantum,wiseman1994quantum,wiseman2009quantum}.
A second architecture is filtering-based (or Bayesian) feedback, in which a
conditional state is propagated from the measurement record and used as the
information state for control. Early developments established continuous
state-estimation-based feedback and extended this architecture to tasks such
as continuous quantum error correction
~\cite{doherty1999feedback,doherty2000quantum,ahn2002continuous}.
Subsequent work developed state-based preparation and stochastic
stabilization methods for continuously monitored quantum systems
~\cite{stockton2004deterministic,van2005feedback,
mirrahimi2007stabilizing,tsumura2008global,qi2010measurement,qi2012further,
ticozzi2012stabilization,liu2016lyapunov,liang2019exponential}.
The core of this review concerns this filtering-based setting.

For filtering-based feedback, the conditional density operator serves as the
state variable relevant to control design. Under continuous diffusive
monitoring, conditioning on the measurement record yields a nonlinear
stochastic master equation (SME) whose trajectories remain in the compact
convex set of density operators. The stochasticity is not an externally
imposed disturbance added to a deterministic model; rather, it originates
from conditioning on a random measurement record, while the same
system--probe interaction that provides information also induces measurement
back-action. Thus, filtering-based quantum feedback couples nonlinear
quantum filtering with stochastic feedback design under the intrinsic
positivity and trace constraints of the quantum state. Its practical
realization further introduces limitations associated with real-time
estimation, computational complexity, finite detector and
controller bandwidth, feedback delay, and model uncertainty.

The present review focuses primarily on finite-dimensional open quantum
systems under continuous diffusive monitoring, including homodyne and
heterodyne detection. This setting already exhibits the principal structural
features relevant to stochastic-control analysis of quantum feedback:
nonlinear filtering, state-dependent multiplicative diffusion, measurement
back-action, invariant subspaces, and positivity and trace constraints on
the state. At the same time, finite-dimensional diffusive SMEs provide a
mathematically tractable framework in which invariance, recurrence,
stochastic stability, convergence rates, robustness, and observer-based
feedback can be studied rigorously. A substantial control-oriented literature
relevant to this program has been developed, ranging from invariant-subspace and
open-loop stability analysis to QND state reduction, feedback stabilization,
robustness to initialization and parameter mismatch, and reduced-order
filtering and control
~\cite{ticozzi2008quantum,benoist2014large,benoist2017exponential,
ticozzi2012stabilization,cardona2020exponential,liang2021GHZ,
liang2021robust,liang2024model,liang2025exploring,
amini2025feedback,liang2025reduced}.

However, the underlying measurement--estimation--feedback paradigm is not
tied to a particular physical platform. Measurement-based feedback has been
investigated and experimentally demonstrated in cavity-QED systems
~\cite{sayrin2011realtime}, superconducting circuits
~\cite{vijay2012stabilizing,riste2013deterministic}, trapped-ion systems
~\cite{bushev2006feedback}, collective atomic-spin ensembles
~\cite{inoue2013unconditional}, solid-state spin systems
~\cite{blok2014manipulating}, and mechanical or optomechanical platforms
~\cite{rossi2018measurement}. These realizations differ substantially in
their natural state spaces, measurement mechanisms, characteristic time
scales, and controller implementations. Depending on the platform and
measurement scheme, the appropriate conditional description may instead
involve discrete-time quantum trajectories, counting processes,
continuous-variable stochastic equations, or hybrid models. Therefore, the
detailed results reviewed below should not be interpreted as applying
without modification to every such realization. What is common across these
settings is the underlying information structure: measurement records provide
information about a conditional quantum state while simultaneously inducing
back-action, and this information is processed in real time to determine
subsequent actuation. In this sense, finite-dimensional diffusive SMEs provide
a useful benchmark in which fundamental problems of estimation, feedback,
stability, and robustness can be formulated and analyzed precisely.

Several monographs, surveys, and tutorial works provide broad treatments of
quantum measurement, filtering, feedback, and quantum control from
complementary physical and mathematical perspectives
~\cite{wiseman2009quantum,barchielli2009quantum,dong2010quantum,
altafini2012modeling,jacobs2014quantum,zhang2017quantum,
jordan2024quantum}.
The objective of the present review is not to provide an encyclopedic
survey of all quantum-feedback architectures, but to develop a
control-theoretic synthesis of the mechanisms by which continuous
measurement, estimation, and feedback generate closed-loop stochastic
stability. Accordingly, we regard the conditional density operator as an
information state and organize the discussion around the invariance and
attractivity of target subspaces, the asymptotic information extracted from
the measurement record, the stability of quantum filters under initialization
and model mismatch, the mechanisms responsible for global recurrence and
local contraction, and the effects of reduced-order estimation, uncertainty, and delay on closed-loop stability. This viewpoint allows results developed for different quantum-feedback models to be compared through their underlying control mechanisms rather than through particular physical realizations alone.

We first formulate continuously monitored quantum dynamics as
stochastic control systems and examine their open-loop asymptotic structure.
We then interpret quantum filtering as nonlinear observer dynamics before
turning to state-feedback and observer-based feedback stabilization.
Particular attention is given to quantum non-demolition (QND) models, for
which the measurement structure makes the relation among state reduction,
reduced-order filtering, and feedback stabilization especially transparent.
A recurring control-theoretic principle is the separation of global
recurrence from local contraction: feedback removes non-target invariant
obstructions, while continuous measurement supplies the local stochastic
contraction that determines the convergence rate. This separation between
global recurrence and local contraction provides a common control-theoretic
interpretation of stabilization results originally developed for different
model classes.

Section~2 formulates continuously monitored quantum dynamics as
state-constrained stochastic systems, Section~3 develops their open-loop
asymptotic structure, and Section~4 interprets quantum filtering as
nonlinear observer dynamics. Sections~5 and~6 then show how familiar
control-theoretic concepts---invariance, accessibility, support-theorem
arguments, stochastic Lyapunov methods, recurrence, and robustness---enter
the design and analysis of state-feedback and reduced-order
observer-based quantum controllers.
Sections~2--5.2 primarily review established results from the literature,
whereas Sections~5.3 and~6 provide a complementary tutorial synthesis.
Building on the filtering-based and observer-based stabilization results
developed in
\cite{liang2019exponential,liang2021GHZ,liang2021robust,
liang2024model,liang2025exploring,liang2025reduced},
we recast their assumptions and proof mechanisms within a common QND
framework according to four principal roles: target invariance, global
recurrence, local exponential contraction, and observer consistency and
robustness. Using transparent sufficient conditions, these sections expose
the shared control-theoretic structure underlying the different results
and clarify how recurrence, contraction, and robustness interact in
closed-loop stabilization. The final section examines how this framework
is affected by scalable estimation, finite-rate implementation,
uncertainty and adaptation, hybrid and non-Markovian dynamics,
non-invariant targets, stochastic optimal control, and learning-based
design.

\paragraph{Notation}
We denote by $\mathcal{B}(\mathcal{H})$ the space of all linear operators on a finite-dimensional Hilbert space $\mathcal{H}$, the adjoint $A\in\mathcal{B}(\mathcal{H})$ by $A^*$, and define $\mathcal{B}_{*}(\mathcal{H}):=\{X\in\mathcal{B}(\mathcal{H})|X=X^*\}$, $\mathcal{B}_{\geq 0}(\mathcal{H}):=\{X\in\mathcal{B}(\mathcal{H})|X\geq 0\}$ and $\mathcal{B}_{>0}(\mathcal{H}):=\{X\in\mathcal{B}(\mathcal{H})|X> 0\}$. 
We use $\mathbf{I}$ to denote the identity operator on $\mathcal{H}$, and $\mathds{1}$ for indicator functions. The imaginary unit is denoted by $i$. For any finite positive integer $n$, we define $[n]:=\{1,\dots,n\}$.
The commutator and anticommutator of  $A,B\in\mathcal{B}(\mathcal{H})$ are denoted by $[A,B]:=AB-BA$ and $\{A,B\}=AB+BA$, respectively.
$\Tr(A)$ denotes the trace of $A$. The trace norm of $A\in\mathcal{B}(\mathcal{H})$ is denoted by $\|A\|_1:=\mathrm{Tr}\sqrt{AA^*}$.  For $q\in\mathbb R^n$, its $\ell_1$ norm is denoted by $\|q\|_1$.
We denote by $\boldsymbol\Delta^{\circ}$ the relative interior of a simplex $\boldsymbol\Delta$. For $x\in\mathbb{C}$, $\Re\{x\}$ is the real part of $x$. 

\section{Continuously monitored systems and feedback information structures}
\label{sec:open-quantum-systems-sme}

This section introduces the finite-dimensional open quantum systems
considered throughout the review and establishes the notation used later
for SMEs, invariant subspaces, and feedback stabilization. The purpose is not to derive SMEs from quantum stochastic calculus, but to formulate them in a form suitable for stochastic-control analysis. For rigorous derivations, we refer the reader to~\cite{belavkin1989nondemolition,bouten2007introduction}. From this viewpoint, the state space is the compact convex set of density operators, the drift is a Lindblad generator, the observation process is a noisy measurement record, and the conditional state satisfies a nonlinear stochastic differential equation. We emphasize the structural properties that will be used later: invariance of the state space, target subspaces, the difference between unconditional and conditional dynamics, QND decompositions, and the information available to a feedback controller.

\subsection{Density operators and state space}
\label{subsec:density-operators-subspaces}

Let \(\mathcal H\simeq \mathbb C^N\) be a finite-dimensional Hilbert space. The state of a finite-dimensional quantum system is represented by a density operator, which is an element of
\begin{equation*}
\mathcal S(\mathcal H)
:=
\left\{
\rho\in\mathcal B(\mathcal H):
\rho=\rho^*,\ \rho\geq 0,\ \operatorname{Tr}(\rho)=1
\right\}.
\end{equation*}
The set \(\mathcal S(\mathcal H)\) is compact and convex. Its extreme points are precisely the rank-one orthogonal projectors $\rho=|\psi\rangle\langle\psi|$ with $|\psi\rangle\in\mathcal{H}$ satisfying $\langle\psi|\psi\rangle=1$,  which are called pure states. For \(\rho\in\mathcal S(\mathcal H)\), we have
\[
\rho \text{ is pure} \quad 
\Longleftrightarrow \quad 
\rho^2=\rho \quad 
\Longleftrightarrow \quad 
\operatorname{rank}(\rho)=1 \quad 
\Longleftrightarrow \quad 
\operatorname{Tr}(\rho^2)=1.
\]
A state that is not pure is called mixed. 
The distinction between pure and mixed states is fundamental in the analysis of open quantum systems. Hamiltonian dynamics preserve the spectrum of the density operator and therefore preserve purity. By contrast, interaction with an environment may change the spectrum of the reduced state, and may transform a pure state into a mixed state. Continuous observation affects the conditional state differently: under suitable observability or distinguishability conditions, the information acquired from the measurement record may lead to asymptotic purification or reduction toward a lower-dimensional invariant sector. Dissipation and measurement back-action provide non-unitary mechanisms, which are necessary for irreversible state preparation.

The compact convex geometry of \(\mathcal S(\mathcal H)\) is important for feedback stabilization; see~\cite{bengtsson2017geometry} for more details. In contrast with classical diffusions on Euclidean spaces, the SME evolves on a constrained state space, and many stabilization targets, especially pure states, lie on its boundary.

For the stabilization problems considered later, the target may be a subspace or a single state. Let \(\mathcal V\subseteq\mathcal H\) be a linear subspace and let \(\Pi_{\mathcal V}\) be the orthogonal projector onto \(\mathcal V\). We define the set of states supported on \(\mathcal V\) by
\begin{equation*}
\mathcal I(\mathcal V):=\left\{\rho\in\mathcal S(\mathcal H):\operatorname{supp}(\rho)\subseteq\mathcal V\right\}.
\end{equation*}
For a fixed \(\rho\in\mathcal S(\mathcal H)\), the following statements are equivalent:
\begin{equation*}
    \rho\in\mathcal I(\mathcal V) \quad \Longleftrightarrow \quad  \Pi_{\mathcal V}\rho\Pi_{\mathcal V}=\rho \quad  \Longleftrightarrow \quad  \Tr(\Pi_{\mathcal V}\rho)=1 \quad  \Longleftrightarrow \quad  \Tr(\Pi_{\mathcal V^\perp}\rho)=0.
\end{equation*}
If \(\dim\mathcal V=1\), \(\mathcal I(\mathcal V)\) consists of a single pure state. Higher-dimensional targets arise naturally in decoherence-free and encoding subspaces, including code spaces used in quantum error correction~\cite{lidar2013quantum}. The corresponding control objective is to render \(\mathcal I(\mathcal V)\) invariant and attractive.

\subsection{Hamiltonian dynamics and  the stabilization obstruction}
\label{subsec:closed-hamiltonian}

For an isolated finite-dimensional quantum system, a state vector \(|\psi_t\rangle\) evolves according to the Schrödinger equation. At the density operator level, the corresponding evolution is the Liouville--von Neumann equation
\begin{equation}
\dot{\rho}_t=-i[H_0,\rho_t],
\label{eq:von-neumann-free}
\end{equation}
where \(H_0=H_0^*\) is the free Hamiltonian. The solution is $\rho_t=U_t\rho_0U_t^*$ where the propagator satisfies $\dot U_t=-iH_0U_t$ with $U_0=\mathbf{I}.$
Thus, the evolution is unitary and preserves the spectrum of the density operator: $\operatorname{spec}(\rho_t)=\operatorname{spec}(\rho_0)$ for all $t\geq 0$.
In particular, the purity \(\operatorname{Tr}(\rho_t^2)\), and more generally every spectral invariant \(\operatorname{Tr}(\rho_t^k)\) is conserved.

External classical fields can often be modeled semiclassically through a controlled Hamiltonian $$H(u_t)=H_0+\sum_{j=1}^m u_{j,t}H_j$$ where $H_j=H_j^*$ and $u_{j,t}\in\mathbb R$; see~\cite{d2021introduction} for more details. The corresponding controlled Liouville--von Neumann equation is
\begin{equation}
\dot\rho_t
=
-i[H(u_t),\rho_t].
\label{eq:controlled-von-neumann}
\end{equation}
If \(u_t\) is prescribed as a deterministic function of time, independent of the state and of any measurement record, then \eqref{eq:controlled-von-neumann} is an open-loop coherent-control system. Such models are fundamental in pulse design, NMR, atomic physics, superconducting circuits, and quantum gate synthesis. The word \emph{coherent} emphasizes that the evolution remains unitary. Hence, it can rotate states and synthesize unitary transformations, but cannot change the spectrum of \(\rho_t\). Consequently, Hamiltonian control alone cannot purify an unknown mixed state, reduce entropy, or make all initial states in \(\mathcal S(\mathcal H)\) converge asymptotically to the same prescribed pure state or target subspace. This is the basic stabilization obstruction for purely coherent dynamics. Therefore, the feedback stabilization problems considered in this review require non-unitary mechanisms. Dissipation, continuous measurement, and measurement-based feedback provide precisely such mechanisms: they can create irreversible convergence. The next subsections introduce these mechanisms through Lindblad equations and stochastic master equations.

\subsection{Unconditional Markovian dynamics: Lindblad equations}
\label{subsec:lindblad}

The isospectral obstruction discussed above shows that purely Hamiltonian dynamics cannot generate irreversible convergence. This is not only a limitation from the control viewpoint, it also reflects the fact that perfectly isolated quantum systems are idealizations. In practice, a quantum system almost always interacts with uncontrolled environmental degrees of freedom, such as vacuum field modes, phonons, thermal reservoirs, surrounding spins, or measurement devices. Under the usual Markovian approximation, the reduced state of the system is described by a Lindblad equation~\cite{gorini1976completely,lindblad1976generators,alicki2007quantum}.

For \(L\in\mathcal B(\mathcal H)\), define the dissipator
\begin{equation*}
\mathcal D_L(\rho)
:=
L\rho L^*
-
\frac12 L^*L\rho
-
\frac12 \rho L^*L.
\end{equation*}
The operator \(L\) is usually called a Lindblad operator, noise operator, or jump operator. It encodes an effective irreversible channel through which the system exchanges information or energy with its environment. Depending on the physical setting, such a channel may describe spontaneous emission, photon loss, dephasing, thermal relaxation, or unobserved measurement back-action. A controlled Lindblad equation takes the form
\begin{equation}
\dot\rho_t
=
-i[H(u_t),\rho_t]
+
\sum_{\nu=1}^r \gamma_\nu\mathcal D_{L_\nu}(\rho_t),
\label{eq:lindblad-master-equation}
\end{equation}
where \(H(u_t)\) is the controlled Hamiltonian. The coefficient \(\gamma_\nu\geq 0\) is the corresponding coupling strength, decay rate, or decoherence rate. It determines the time scale on which the \(\nu\)-th channel affects the reduced dynamics of the system. It can also be absorbed into the
channel operator by replacing \(L_\nu\) with \(\sqrt{\gamma_\nu}L_\nu\).

For every prescribed measurable and locally bounded input \(u_t\),
equation~\eqref{eq:lindblad-master-equation} defines a completely positive
and trace-preserving evolution. In particular, if
\(\rho_0\in\mathcal S(\mathcal H)\), then
\(\rho_t\in\mathcal S(\mathcal H)\) for all \(t\geq0\).
It is obtained either when the environmental outputs are not monitored or
when the measurement outcomes are averaged out. The Hamiltonian term
generates unitary motion, whereas the dissipative terms may change the
spectrum of the density operator. Thus, Lindblad dynamics may alter purity,
suppress coherences, admit stationary states, and render invariant
subspaces attractive.
The presence of dissipative terms alone does not imply convergence to a
unique state or subspace. The asymptotic behavior depends on the algebraic
structure of the Hamiltonian and Lindblad operators. Conversely, suitably
engineered Markovian dynamics can provide mechanisms for state preparation,
subspace stabilization, and entanglement generation that are unavailable
under purely Hamiltonian control~\cite{ticozzi2008quantum,schirmer2010stabilizing,ticozzi2012hamiltonian}.

\subsection{Diffusive continuous monitoring and stochastic master equations}
\label{subsec:sme}

Suppose that part of the environment is continuously monitored through
diffusive measurements. The normalized state conditioned on the measurement
history satisfies a stochastic master equation, also called a quantum
filtering equation~\cite{belavkin1989nondemolition,
bouten2007introduction,barchielli2009quantum}.

For a single observation process \(Y_t\), let $\mathcal F_t^Y:=\sigma\{Y_s:0\leq s\leq t\}$ denote the corresponding observation filtration. The conditional state
\(\rho_t\) is adapted to \((\mathcal F_t^Y)_{t\geq 0}\) and determines the
conditional expectations of system observables given the measurement record
up to time \(t\).
For \(L\in\mathcal B(\mathcal H)\), define
\begin{equation*}
\mathcal G_L(\rho)
:=
L\rho+\rho L^*
-
\operatorname{Tr}\bigl((L+L^*)\rho\bigr)\rho.
\end{equation*}
The last term ensures normalization of the conditional state. In particular,
$
\operatorname{Tr}\bigl(\mathcal G_L(\rho)\bigr)=0.
$
The phase of the local oscillator in a homodyne measurement may be absorbed
into the definition of \(L\).
For a single monitored channel with coupling strength \(\gamma\geq 0\) and
detection efficiency \(\eta\in[0,1]\), the conditional state and the measurement output satisfy
\begin{align}
d\rho_t&=\left(-i[H(u_t),\rho_t]+\gamma\mathcal D_L(\rho_t)\right)dt+\sqrt{\eta\gamma}\mathcal G_L(\rho_t)dW_t,\label{eq:single-homodyne-sme}\\
dY_t&=\sqrt{\eta\gamma}\operatorname{Tr}\bigl((L+L^*)\rho_t\bigr)dt+dW_t,\label{eq:single-homodyne-output}
\end{align}
where the innovation process \(W_t\) is a standard Wiener process with respect to the observation filtration \((\mathcal F_t^Y)_{t\geq 0}\).
The drift of the observation process depends on the conditional
expectation of the measured quadrature \(L+L^*\), whereas the innovation
process determines the stochastic correction of the conditional state.
The same system--environment coupling generates the dissipative term
\(\gamma\mathcal D_L\), independently of whether the output is detected.
The detection efficiency determines the observed fraction of the output
information. If \(\eta=0\), the stochastic correction associated with the
channel vanishes, while the dissipative contribution
\(\gamma\mathcal D_L\) remains present. If \(\gamma=0\), the channel is
absent.

More generally, let \(C_k\), \(k\in[g]\), denote unobserved channels, and let
\(L_\nu\) with  \(\nu\in[r]\) denote monitored channels. Define the Lindblad
generator
\begin{equation*}
\mathcal L_u(\rho):=-i[H(u),\rho]+\sum_{k=1}^{g}\theta_k\mathcal D_{C_k}(\rho)+\sum_{\nu=1}^{r}\gamma_\nu\mathcal D_{L_\nu}(\rho),
\end{equation*}
where \(\theta_k\geq 0,\gamma_\nu> 0\). For each monitored channel, set $\iota_\nu:=\sqrt{\eta_\nu\gamma_\nu}>0$ with $\eta_\nu\in(0,1]$ and $\gamma_\nu> 0$. The conditional state and observation processes satisfy
\begin{align}
d\rho_t&=\mathcal L_{u_t}(\rho_t)dt+\sum_{\nu=1}^{r}\iota_\nu\mathcal G_{L_\nu}(\rho_t)dW_{\nu,t},\label{eq:multi-homodyne-sme}\\
dY_{\nu,t}&=\iota_\nu\operatorname{Tr}\bigl((L_\nu+L_\nu^*)\rho_t\bigr)dt+
dW_{\nu,t},\qquad\nu\in[r].\label{eq:multi-homodyne-output}
\end{align}
The joint observation filtration is $\mathcal F_t^Y:=\sigma\left\{Y_{\nu,s}:0\leq s\leq t,\ \nu\in[r]\right\}$, and $W_t=(W_{1,t},\ldots,W_{r,t})$ is an \(r\)-dimensional standard Wiener process with respect to this
filtration.
The term \(\theta_k\mathcal D_{C_k}\) represents the unconditional
contribution of the \(k\)-th unobserved channel. For the \(\nu\)-th
monitored channel, \(\gamma_\nu\mathcal D_{L_\nu}\) is its unconditional
dissipative contribution, whereas \(\iota_\nu\mathcal G_{L_\nu}\) determines the corresponding conditional measurement update. Heterodyne detection can be represented by two real observation channels associated with orthogonal field quadratures, with independent Wiener innovations and appropriately rescaled measurement operators~\cite{wiseman2009quantum}.

Under standard admissibility assumptions on \(u_t\), the finite-dimensional
SME~\eqref{eq:multi-homodyne-sme} admits a unique global strong solution.
In particular, this conclusion holds for the sufficiently regular feedback laws considered later. Moreover,  for any $\rho_0\in\mathcal S(\mathcal H)$, $\rho_t\in\mathcal S(\mathcal H)$ for all $t\geq 0$ almost surely; see, for instance, \cite[Section~3]{mirrahimi2007stabilizing} or \cite[Chapter~5]{barchielli2009quantum}.

\subsection{QND monitoring and invariant sectors}
\label{subsec:qnd}

The asymptotic behavior of a continuously monitored quantum system depends on the algebraic relation between the monitored operators and the uncontrolled dynamics. A principal setting considered in this review is quantum non-demolition (QND) monitoring~\cite{braginsky1980quantum,haroche2006exploring}. We use the term QND in the following finite-dimensional block-structured sense.

Let $$\mathcal H=\mathcal H_0\oplus\mathcal{H}_1\oplus \dots\oplus\mathcal{H}_d$$ be an orthogonal decomposition, and let $\Pi_j:=\Pi_{\mathcal H_j}$ denote the orthogonal projector onto \(\mathcal H_j\).
\begin{assumption}[QND structure]
\label{ass:QND}
$H_0=\mathrm{diag}[{H}_{0,0},\dots,{H}_{0,d}]$ with $ {H}_{0,j}\in  \mathcal{B}_{*}(\mathcal{H}_j)$,  $C_k=\mathrm{diag}[{C}_{k,0},\dots,{C}_{k,d}]$ with ${C}_{k,j}\in  \mathcal{B}(\mathcal{H}_j)$, and $L_\nu=\textstyle\sum^{d}_{j=0}l_{\nu,j}\Pi_{j}$ with $l_{\nu,j}\in\mathbb{C}$.
\end{assumption}

Under the QND structure and in the absence of a control Hamiltonian coupling distinct blocks, each set \(\mathcal I(\mathcal H_j)\) is invariant. Define the population of the \(j\)-th sector by 
\[ 
q_j(\rho) := \operatorname{Tr}(\Pi_j\rho), \qquad j=0,\ldots,d. 
\] 
For \(q=(q_0,\ldots,q_d)\), introduce 
\begin{equation}\label{eq:a_nuPsi}
a_\nu(q) := \sum_{m=0}^{d} \Re(l_{\nu,m})q_m, \qquad \Psi_\nu^j(q) := \Re(l_{\nu,j})-a_\nu(q). 
\end{equation}
Write $q_{j,t}=q_j(\rho_t)$, a direct computation from equations~\eqref{eq:multi-homodyne-sme}--\eqref{eq:multi-homodyne-output} yields
\begin{align}
dq_{j,t} &= 2q_{j,t} \sum_{\nu=1}^{r} \iota_\nu \Psi_\nu^j(q_t)dW_{\nu,t}, \qquad j=0,\ldots,d. \label{eq:qnd-population-dynamics} \\
dY_{\nu,t}&=2\iota_\nu\overline a_\nu(q_t)dt+dW_{\nu,t}.
\label{eq:qnd_reduced_output}
\end{align} 
Therefore, \(q_t\) is a bounded martingale and evolves in the probability simplex
\[
\boldsymbol\Delta_{d}:=\{q\in[0,1]^{d+1}:\textstyle\sum_{j=0}^{d}q_j=1\}.
\]
In particular, the unconditional QND dynamics preserve the sector populations, whereas the measurement record updates their conditional values through the martingale terms in equation~\eqref{eq:qnd-population-dynamics}. 
The distinguishability conditions governing asymptotic quantum-state reduction are introduced in the next section.

\subsection{Information structures for feedback control} \label{subsec:feedback-information-structure} 
The interpretation of the control input \(u_t\) depends on the information available to the controller. This distinction also determines whether the expectation of the conditional state satisfies a closed deterministic equation. If \(u_t\) is a prescribed deterministic function of time, then the control is open loop. Since \(\mathcal L_{u_t}\) is linear in \(\rho\) for every fixed \(u_t\), and the stochastic integrals in equation~\eqref{eq:multi-homodyne-sme} have zero expectation, the averaged state \(\overline\rho_t := \mathbb E[\rho_t] \) satisfies ${d \overline\rho_t}/{dt} = \mathcal L_{u_t}(\overline\rho_t).$ 
Thus, under deterministic open-loop control, the unconditional state is the expectation of the conditional state. 

A measurement-based control is a nonanticipating functional of the observation history. The present review focuses on filtering-based feedback. In the ideal conditional-state feedback formulation, $u_t = u(\rho_t)$, where \(\rho_t\) is the correctly initialized conditional state generated from the measurement record. This architecture is commonly referred to as Bayesian feedback in the physics literature. In control-theoretic terms, \(\rho_t\) is the information state on which the feedback law is evaluated. It is reconstructed from the observation record rather than measured directly. 

For state feedback $ u(\rho_t)$,  In general, \( \mathbb E[\mathcal L_{u(\rho_t)}(\rho_t)] \) cannot be expressed as a function of \(\mathbb E[\rho_t]\) alone. Therefore, closed-loop stability must be analyzed directly for the conditional stochastic dynamics. The ideal formulation $u(\rho_t)$ assumes that the initial state and all model parameters entering the quantum filter are known. If the filter is initialized from a nominal prior or constructed with nominal parameters, the controller propagates an estimated conditional state \(\hat\rho_t\), driven by the same measurement record. The resulting observer-based feedback law is $u_t = u(\hat{\rho}_t)$. The actual conditional state \(\rho_t\) and the estimated state \(\hat\rho_t\) are components of a coupled stochastic system driven by the common observation process. 

For a system satisfying the QND structure in Assumption~\ref{ass:QND}, the asymptotic sector information is encoded by the population vector \(q(\rho_t) \).  Accordingly, a reduced observer may propagate only an estimated population vector \( \hat q_t = ( \hat q_{0,t},\dots,\hat q_{d,t} ) \) and apply the reduced-order feedback law $u_t = u(\hat q_t)$. The distinction between the exact reduced filter associated with the uncontrolled QND model and the feedback-augmented reduced observer used in the closed-loop design will be specified in the corresponding observer-based stabilization section.  

Filtering-based feedback should be distinguished from Markovian feedback~\cite{wiseman1993quantum,wiseman1994quantum,wiseman2009quantum}. In the latter architecture, the measured current is fed directly into the Hamiltonian or dissipative generator, typically under an idealized negligible-delay approximation, without first reconstructing a conditional density operator. It should also be distinguished from coherent feedback~\cite{combes2017slh}, in which the controller is itself a quantum system and the interconnection does not involve an intermediate classical measurement record. These architectures are outside the principal scope of the present review. The information structure considered in the remainder of the paper is summarized by
\begin{equation*}
\boxed{
\begin{array}{ccc}
\text{controlled conditional dynamics}
&
\longrightarrow
&
\text{measurement record}
\\[0.4em]
\uparrow
&&
\downarrow
\\[0.4em]
\text{feedback law}
&
\longleftarrow
&
\text{quantum filter/reduced observer}.
\end{array}
}
\end{equation*}
The subsequent sections examine, respectively, the open-loop asymptotic structure, stability of the quantum filter, conditional-state feedback stabilization, and reduced observer-based stabilization under model uncertainty.

\section{Open-loop asymptotics of stochastic master equations}
\label{sec:open_loop_asymptotics}

This section reviews the open-loop asymptotic properties of the diffusive
SME~\eqref{eq:multi-homodyne-sme}. Throughout the section, the control
input \(u_t\) is deterministic and sufficiently regular. The averaged state satisfies
$\frac{d}{dt}\bar{\rho}_t=\mathcal L_{u_t}(\bar{\rho}_t).$
Thus, under deterministic open-loop control, invariant and attractive
subspaces of the averaged Lindblad dynamics provide structural information
about the corresponding conditional trajectories. We first recall the
general invariant-subspace criteria and the relation between mean and
almost-sure stability. The QND specialization and measurement-induced state
reduction are considered subsequently.

\subsection{Invariant subspaces and attractivity}
\label{subsec:target_subspaces_distances}

Let \(\mathcal H_S\subsetneq\mathcal H\) be a nonzero target subspace and
write $\mathcal H = \mathcal H_S\oplus\mathcal H_R$ with $\mathcal H_R := \mathcal H_S^\perp$. 
\begin{definition}
The target subspace \(\mathcal H_S\) is said to be:
\begin{enumerate}
    \item \emph{invariant in mean} if $\rho_0\in\mathcal I(\mathcal H_S)$ then $\mathbb E(\rho_t)\in\mathcal I(\mathcal H_S)$ for all $t\geq 0$;
    \item \emph{invariant almost surely} if $\rho_0\in\mathcal I(\mathcal H_S)$ then $\rho_t\in\mathcal I(\mathcal H_S)$ for all $t\geq 0$ almost surely.
\end{enumerate}
\end{definition}
Every operator \(X\in\mathcal B(\mathcal H)\) admits the block
decomposition
\[
X
=
\begin{pmatrix}
X_S & X_P\\
X_Q & X_R
\end{pmatrix}
\]
with respect to \(\mathcal H=\mathcal H_S\oplus\mathcal H_R\), where \(X_S:\mathcal H_S\to\mathcal H_S\), \(X_R:\mathcal H_R\to\mathcal H_R\),
\(X_P:\mathcal H_R\to\mathcal H_S\), and \(X_Q:\mathcal H_S\to\mathcal H_R\).
The following criterion characterizes invariance directly from the
coefficients of the SME; see
 \cite{ticozzi2008quantum, baumgartner2008analysis,baumgartner2012structures}.

\begin{lemma}[Invariance criterion]
\label{lem:open_loop_invariance}
Consider the SME~\eqref{eq:multi-homodyne-sme} with deterministic input
\(u_t\). The subspace \(\mathcal H_S\) is invariant in mean if and only if
it is invariant almost surely. These equivalent properties hold if and only if
$C_{k,Q}=0$ and $L_{\nu,Q}=0$ for all $k\in[g]$ and $\nu\in[r]$, and
\begin{equation}
iH_P(u_t) = \frac{1}{2}\left(\sum_{k=1}^{g}\theta_k C_{k,S}^*C_{k,P}+\sum_{\nu=1}^{r}\gamma_\nu L_{\nu,S}^*L_{\nu,P}\right),\quad t\geq0.
\label{eq:open-loop-invariance-condition}
\end{equation}
\end{lemma}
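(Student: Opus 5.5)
Everything will be reduced to the scalar quantity $\Tr(\Pi_R\rho_t)$, where $\Pi_R:=\Pi_{\mathcal H_R}$. By the equivalences in Section~\ref{subsec:density-operators-subspaces}, $\rho\in\mathcal I(\mathcal H_S)$ if and only if $\rho_R=0$, and for positive semidefinite $\rho$ this is equivalent to $\Tr(\Pi_R\rho)=0$. The argument then has three pieces: (i) almost sure invariance implies invariance in mean; (ii) invariance in mean forces the algebraic conditions; (iii) the algebraic conditions imply almost sure invariance.

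\textbf{Step (i).} This is immediate. $\mathcal I(\mathcal H_S)$ is convex and closed, so if $\rho_t\in\mathcal I(\mathcal H_S)$ almost surely, then $\mathbb E(\rho_t)\in\mathcal I(\mathcal H_S)$.

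\textbf{Step (ii).} Under a deterministic input, $\bar\rho_t=\mathbb E(\rho_t)$ solves the Lindblad equation $\dot{\bar\rho}_t=\mathcal L_{u_t}(\bar\rho_t)$. Invariance in mean therefore says that $\mathcal I(\mathcal H_S)$ is invariant under this linear flow. Take any $\rho_0=\rho_S\oplus 0$. Since $\bar\rho_{t,R}\equiv 0$ and $\bar\rho_t\ge0$, the off-diagonal blocks $\bar\rho_{t,P}$ and $\bar\rho_{t,Q}$ also vanish, so $\frac{d}{dt}\bar\rho_t$ must stay block-diagonal with zero $R$-block (at $t=0$, and at a.e.\ $t$ if $u$ is only measurable).

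Compute the blocks of $\mathcal L_u(\rho)$ for $\rho=\rho_S\oplus0$:
\begin{itemize}
\item The $R$-block is $\sum_k\theta_kC_{k,Q}\rho_SC_{k,Q}^*+\sum_\nu\gamma_\nu L_{\nu,Q}\rho_SL_{\nu,Q}^*$. Taking $\rho_S$ of full rank on $\mathcal H_S$, this vanishes only if $C_{k,Q}=0$ and $L_{\nu,Q}=0$, using $\theta_k,\gamma_\nu>0$; channels with $\theta_k=0$ can be discarded.
\item With the $Q$-blocks equal to zero, the $Q$-block of $\mathcal L_u(\rho)$ reduces to $\bigl(-iH_Q-\tfrac12 K_Q\bigr)\rho_S$, where $K:=\sum_k\theta_kC_k^*C_k+\sum_\nu\gamma_\nu L_\nu^*L_\nu$. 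Note that $K_Q=\sum_k\theta_k C_{k,P}^*C_{k,S}+\sum_\nu\gamma_\nu L_{\nu,P}^*L_{\nu,S}$ when the $Q$-blocks vanish.
\end{itemize}
Requiring the $Q$-block to vanish for all $\rho_S$ gives $iH_Q=-\tfrac12K_Q$. Taking adjoints, and using $H_Q^*=H_P$, yields exactly \eqref{eq:open-loop-invariance-condition}.

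\textbf{Step (iii).} This is the main obstacle, because it concerns the nonlinear stochastic equation rather than its mean. Under the conditions of the lemma, compute $d\,\Tr(\Pi_R\rho_t)$ from \eqref{eq:multi-homodyne-sme} for a general $\rho$, not just one supported on $\mathcal H_S$.

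\emph{Diffusion term.} Using $L_{\nu,Q}=0$,
\[
\Tr\bigl(\Pi_R\mathcal G_{L_\nu}(\rho)\bigr)=\Tr\bigl(L_{\nu,R}\rho_R+\rho_RL_{\nu,R}^*\bigr)-\Tr\bigl((L_\nu+L_\nu^*)\rho\bigr)\Tr(\rho_R).
\]
The absolute value of this is at most $c\Tr(\rho_R)$, because $\|\rho_R\|_1=\Tr(\rho_R)$.

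\emph{Drift term.} The cross terms in $\Tr(\Pi_R\mathcal L_u(\rho))$ involve $\rho_P$ and $\rho_Q$, with coefficients that are exactly the combination $iH_P-\tfrac12\sum(\cdots)$ and its adjoint. These cancel by \eqref{eq:open-loop-invariance-condition}. What remains is linear in $\rho_R$, hence bounded by $c\Tr(\rho_R)$.

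\emph{Conclusion.} So $x_t:=\Tr(\Pi_R\rho_t)$ satisfies
\[
dx_t=\alpha_t x_t\,dt+\sum_\nu\beta_{\nu,t}x_t\,dW_{\nu,t},
\]
with bounded adapted coefficients $\alpha_t=\Tr(\Pi_R\mathcal L_{u_t}(\rho_t))/x_t$ and $\beta_{\nu,t}=\iota_\nu\Tr(\Pi_R\mathcal G_{L_\nu}(\rho_t))/x_t$, each set to $0$ when $x_t=0$. This is a linear SDE, so
\[
x_t=x_0\exp\Bigl(\int_0^t\bigl(\alpha_s-\tfrac12\textstyle\sum_\nu\beta_{\nu,s}^2\bigr)\,ds+\int_0^t\textstyle\sum_\nu\beta_{\nu,s}\,dW_{\nu,s}\Bigr).
\]
Alternatively, apply Gronwall to $\mathbb E x_t$: since $x_t\ge0$, we get $\mathbb E x_t\le e^{ct}x_0$. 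Either way, $x_0=0$ gives $x_t=0$ for all $t$ almost surely, which is almost sure invariance and closes the chain of equivalences.

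The delicate point is verifying that the cross terms cancel exactly in the drift. I would do this by writing out the $R$-block of $H\rho-\rho H$ and of the anticommutator term blockwise, and checking that their $P$- and $Q$-block contributions pair as $\Tr\bigl((iH_P-\tfrac12 K_P)\rho_Q\bigr)+\text{c.c.}$.
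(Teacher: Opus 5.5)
Your argument is correct. The paper gives no proof of its own; it defers to \cite{ticozzi2008quantum,baumgartner2008analysis,baumgartner2012structures}, so the relevant comparison is with the usual route.

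Your Step (ii) is exactly the Ticozzi--Viola block computation of the Lindblad generator. You differ in the stochastic part. The usual argument never goes beyond the mean of the SME. Because $u$ is deterministic, $\bar\rho_t=\mathbb E\rho_t$ solves the Lindblad equation. Moreover $\mathbb E[\Tr(\Pi_R\rho_t)]=\Tr(\Pi_R\bar\rho_t)$ and $\Tr(\Pi_R\rho_t)\ge0$, so invariance in mean already forces $\Tr(\Pi_R\rho_t)=0$ almost surely for each $t$, and hence for all $t$ by path continuity. The algebraic conditions then only need to be matched with invariance of the deterministic flow. Your ``Gronwall on $\mathbb E x_t$'' alternative is this argument in disguise, so Step (iii) is not really the main obstacle.

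Your direct route costs the blockwise cross-term bookkeeping, which you have right. Under the conditions, the drift reduces to $-\Tr\bigl((\sum_k\theta_kC_{k,P}^*C_{k,P}+\sum_\nu\gamma_\nu L_{\nu,P}^*L_{\nu,P})\rho_R\bigr)\le0$, with no Hamiltonian contribution. This matches the paper's identity $\Tr(\mathfrak D_A^R(\sigma))=-\Tr(A_P^*A_P\sigma)$. In return, the representation $x_t=x_0E_t$, with $E_t$ a strictly positive stochastic exponential, gives more than invariance. In the open-loop invariant setting, a trajectory with $\Tr(\Pi_R\rho_0)>0$ never reaches $\mathcal I(\mathcal H_S)$ in finite time. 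This is the kind of non-attainment property the paper later imports from \cite[Lemma 4.2]{liang2019exponential} in the proof of Theorem~\ref{thm:QND-GES}.

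One small justification is missing in Step (ii) for time-varying inputs. The freedom to pick $\rho_S$ of full rank exists only at $t=0$, so it gives the Hamiltonian condition only at $u_0$. The fix is short:
\begin{itemize}
\item The $R$-block condition does not depend on $u$, so $C_{k,Q}=L_{\nu,Q}=0$ is already available.
\item The $S$-block of $\bar\rho_t$ then obeys a Lindblad equation on $\mathcal H_S$. Hence $\bar\rho_{t,S}\ge V_t\rho_{0,S}V_t^*$, where $V_t$ is the invertible propagator of $-iH_S(u_t)-\frac12\sum_k\theta_kC_{k,S}^*C_{k,S}-\frac12\sum_\nu\gamma_\nu L_{\nu,S}^*L_{\nu,S}$.
\item So $\bar\rho_{t,S}$ stays invertible. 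The identity $(-iH_Q(u_t)-\frac12K_Q)\bar\rho_{t,S}=0$, which holds for a.e.\ $t$, then yields \eqref{eq:open-loop-invariance-condition} for a.e.\ $t$, and for every $t$ if $u$ is continuous.
\end{itemize}
Your remark that channels with $\theta_k=0$ must be excluded from the requirement $C_{k,Q}=0$ is also a correct reading of the statement.
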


Invariance is necessary for exact asymptotic stabilization: if
\(\mathcal I(\mathcal H_S)\) is not invariant, a trajectory initialized in
the target set may leave it immediately.
For comparison with norm-based stability notions, define the subspace error
\begin{equation*}
d_S(\rho):=\|\rho-\Pi_{\mathcal{H}_S}\rho\Pi_{\mathcal{H}_S}\|_1.
\end{equation*}
Then, \(d_S(\rho)=0\) if and only if \(\rho\in\mathcal I(\mathcal H_S)\).  
We now introduce the stability notions used in this paper. They extend the classical concepts of stochastic stability~\cite{mao2007stochastic,khasminskii2011stochastic} to density matrices and target subspaces. Motivated by~\cite{ticozzi2008quantum,benoist2017exponential}, all convergence properties will be formulated in terms of  $d_S(\rho)$.

\begin{definition}
The invariant subspace $\mathcal H_S\subset\mathcal H$  or the related subset $\mathcal{I}(\mathcal{H}_S)$ is said to be
\begin{enumerate}
    \item
    \emph{stable for the averaged dynamics}, if for every $\varepsilon \in (0,1)$, there exists $\delta = \delta(\varepsilon)>0$ such that $d_S(\bar{\rho_t})\leq  \varepsilon$ for all $t\geq 0$  whenever $d_S(\rho_0)<\delta$. 
    
    \item \emph{globally asymptotically stable (GAS) for the averaged dynamics}, if it is stable for the averaged dynamics and $\lim_{t\rightarrow\infty} d_S(\bar{\rho_t})= 0$ for all $\rho_0\in\mathcal{S}(\mathcal{H})$.
    
    \item
    \emph{globally exponentially stable (GES) for the averaged dynamics}, if there exists a pair of  constants $\lambda, c>0$ such that $d_S(\bar \rho_t)\leq c d_S(\rho_0) e^{-\lambda t}$ for all $t\geq0$ and $\rho_0\in\mathcal{S}(\mathcal{H}).$
    
    \item \emph{stable in probability} if, for every $\varepsilon\in(0,1)$ and $r>0$, there exists $\delta=\delta(\varepsilon,r)>0$ such that $ \mathbb P\big(\textstyle\sup_{t\ge0}d_S(\rho_t)<r\big)\ge 1-\varepsilon$ whenever $d_S(\rho_0)<\delta$;

    \item \emph{globally asymptotically stable (GAS) almost surely} if it is stable in probability and $\mathbb P\big(\lim_{t\to\infty} d_S(\rho_t)=0\big)=1$ for all $\rho_0\in\mathcal{S}(\mathcal{H}).$

    \item \emph{globally exponentially stable (GES) almost surely} if $\mathbb P\big(\textstyle\limsup_{t\to\infty}\tfrac{1}{t}\log  d_S(\rho_t)<0\big)=1$ for all $\rho_0\in \mathcal S(\mathcal H)\setminus\mathcal I(\mathcal H_S)$.
\end{enumerate}
\end{definition}

We now restrict to an autonomous open-loop input $u_t\equiv u.$
Under the invariance conditions of
Lemma~\ref{lem:open_loop_invariance}, the \(R\)-block of the averaged state
evolves autonomously, whose generator is given by
\begin{equation*}
\mathcal L_R(\sigma)
=-i[H_R(u),\sigma]+\sum_{k=1}^{g}\theta_k\mathfrak D_{C_k}^R(\sigma)
+\sum_{\nu=1}^{r}\gamma_\nu\mathfrak D_{L_\nu}^R(\sigma),
\end{equation*}
where
$\mathfrak D_A^R(\sigma):=A_R\sigma A_R^*-\frac{1}{2}\{A_R^*A_R+A_P^*A_P,\sigma\}.$
For every \(\sigma\geq0\),
\[
\operatorname{Tr}
\bigl(
\mathfrak D_A^R(\sigma)
\bigr)
=
-
\operatorname{Tr}
\bigl(
A_P^*A_P\sigma
\bigr)
\leq0.
\]
Hence, \(\mathcal L_R\) generates a positive trace-nonincreasing
semigroup on \(\mathcal B(\mathcal H_R)\). Moreover,
\[
\bar{\rho}_R(t)
=
e^{t\mathcal L_R}\rho_R(0).
\]
Define the spectral decay rate
$$\alpha_R:=-\max\{\Re z:z\in\operatorname{spec}(\mathcal L_R)\}.$$
The following result collects the geometric, spectral, Lyapunov, and
trajectory-level characterizations of open-loop attractivity; see
\cite{ticozzi2008quantum,benoist2017exponential}.

\begin{theorem}[Equivalent characterizations of open-loop attractivity]
\label{thm:open-loop-attractivity-characterization}
Assume that \(u_t\equiv u\) and that \(\mathcal H_S\) is invariant. Then
the following statements are equivalent:
\begin{enumerate}
    \item
    \(\mathcal H_S\) is globally asymptotically stable for the averaged
dynamics;
    \item
    \(\mathcal H_S\) is globally asymptotically stable almost surely for the conditional dynamics;

    \item
    there is no nonzero subspace
    \(\mathcal W\subseteq\mathcal H_R\) such that
    \(\mathcal I(\mathcal W)\) is invariant for the averaged dynamics and
    \(
    \mathcal W  \subseteq  \bigcap_{k=1}^{g}\ker C_{k,P} \cap \bigcap_{\nu=1}^{r}\ker L_{\nu,P};
    \)
    \item
    \(\alpha_R>0\);

    \item
    there exist \(K_R\in\mathcal B_{>0}(\mathcal H_R)\) and \(c>0\)
    such that
   $
    \mathcal L_R^*(K_R) \leq -cK_R,
    $
    where \(\mathcal L_R^*\) is the adjoint of \(\mathcal L_R\) with respect to the Hilbert--Schmidt inner product.
\end{enumerate}
\end{theorem}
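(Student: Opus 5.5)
The plan is to route every equivalence through statement (4), using the block structure that invariance of \(\mathcal H_S\) (Lemma~\ref{lem:open_loop_invariance}) forces on the generator. The first preliminary is that, since \(C_{k,Q}=0\) and \(L_{\nu,Q}=0\), the \(R\)-block of the conditional SME~\eqref{eq:multi-homodyne-sme} reads
\[
d\rho_{R,t}=\mathcal L_R(\rho_{R,t})\,dt+\sum_{\nu}\iota_\nu\bigl(L_{\nu,R}\rho_{R,t}+\rho_{R,t}L_{\nu,R}^*-\operatorname{Tr}((L_\nu+L_\nu^*)\rho_t)\rho_{R,t}\bigr)dW_{\nu,t}.
\]
Its drift is exactly linear in \(\rho_{R,t}\), and its diffusion vanishes whenever \(\rho_{R,t}=0\). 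The second preliminary controls the off-diagonal block: positivity of \(\rho\) gives \(\|\rho_P\|_1\le\sqrt{\operatorname{Tr}\rho_R}\), hence
\[
\operatorname{Tr}\rho_R\le d_S(\rho)\le \operatorname{Tr}\rho_R+2\sqrt{\operatorname{Tr}\rho_R}.
\]
Consequently all convergence statements can be expressed through \(\operatorname{Tr}\rho_R\) alone.

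\emph{(4)\(\Leftrightarrow\)(5).} The semigroup \(e^{t\mathcal L_R}\) is positive on \(\mathcal B(\mathcal H_R)\). The Perron--Frobenius/Krein--Rutman theorem for positive maps therefore implies that the spectral bound \(-\alpha_R\) is itself an eigenvalue, with an eigenvector \(\sigma\ge0\), and the same holds for the adjoint \(\mathcal L_R^*\). Given \(\alpha_R>0\) and \(0<c<\alpha_R\), set
\[
K_R:=\int_0^\infty e^{ct}e^{t\mathcal L_R^*}(\mathbf I_R)\,dt .
\]
This is positive definite, because the integrand at \(t=0\) is \(\mathbf I_R\), and a direct computation gives \(\mathcal L_R^*(K_R)=-\mathbf I_R-cK_R\le -cK_R\). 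Conversely, suppose (5) holds. Then \(\operatorname{Tr}(K_R e^{t\mathcal L_R}\sigma)\le e^{-ct}\operatorname{Tr}(K_R\sigma)\) for all \(\sigma\ge0\). Applying this to the Perron eigenvector \(\sigma\) gives \(\alpha_R\ge c>0\).

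\emph{(1)\(\Leftrightarrow\)(4).} If \(\alpha_R>0\), then \(\operatorname{Tr}\bar\rho_R(t)\) decays exponentially, and the sandwich inequality above yields both stability and global convergence in the averaged sense. Conversely, GAS gives \(e^{t\mathcal L_R}\sigma\to0\) for every \(\sigma\ge0\). Positive operators span \(\mathcal B(\mathcal H_R)\), so \(e^{t\mathcal L_R}\to0\), and in finite dimension this forces \(\alpha_R>0\).

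\emph{(2)\(\Leftrightarrow\)(1).} For (2)\(\Rightarrow\)(1), use \(\bar\rho_t=\mathbb E\rho_t\) together with bounded convergence. For the reverse direction, pass through (5). Let \(V_t:=\operatorname{Tr}(K_R\rho_{R,t})\). By the linear \(R\)-block equation above, \(dV_t\le -cV_t\,dt+dM_t\) for a martingale \(M_t\), so \(e^{ct}V_t\) is a nonnegative supermartingale. The supermartingale maximal inequality then gives stability in probability, since \(V_0\) is small when \(d_S(\rho_0)\) is small. Supermartingale convergence gives \(\sup_t e^{ct}V_t<\infty\) almost surely. Hence \(V_t\to0\), indeed exponentially, and therefore \(d_S(\rho_t)\to0\) almost surely.

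\emph{(3)\(\Leftrightarrow\)(4).} This is the step I expect to be the main obstacle. Suppose first that \(\alpha_R=0\) (note \(\alpha_R\ge0\) always, by trace non-increase), and take the Perron eigenvector \(\sigma\ge0\) with \(\mathcal L_R(\sigma)=0\). Then
\[
0=\operatorname{Tr}\mathcal L_R(\sigma)=-\sum_k\theta_k\operatorname{Tr}(C_{k,P}^*C_{k,P}\sigma)-\sum_\nu\gamma_\nu\operatorname{Tr}(L_{\nu,P}^*L_{\nu,P}\sigma),
\]
so \(\mathcal W:=\operatorname{supp}\sigma\) lies in all the kernels \(\ker C_{k,P}\) and \(\ker L_{\nu,P}\). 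The delicate point is to show that \(\mathcal I(\mathcal W)\) is invariant for the full averaged dynamics. For this I would check that \(\sigma\), viewed as a state on \(\mathcal H\), is stationary for \(\mathcal L_u\); the \(P\)-block of \(\mathcal L_u(\sigma)\) must be verified to vanish using the kernel conditions. Granting stationarity, the standard fact that the support of a stationary state of a Lindblad generator is invariant (equivalently, that Lemma~\ref{lem:open_loop_invariance} applies with \(\mathcal W\) as target) gives a nonzero \(\mathcal W\) violating (3). Conversely, if such a \(\mathcal W\) exists, trajectories starting in \(\mathcal I(\mathcal W)\) stay in it and suffer no leakage to \(\mathcal H_S\). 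Their value \(\operatorname{Tr}\rho_R\) therefore remains equal to \(1\), which contradicts (1) and hence (4).
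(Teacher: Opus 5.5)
The paper gives no proof of this theorem. It defers to \cite{ticozzi2008quantum,benoist2017exponential} and only remarks that (5) furnishes the linear Lyapunov function $\Tr(K\rho)$ with $K=\operatorname{diag}(0,K_R)$. Your proposal is a correct, self-contained implementation of that mechanism, organized around (4). It uses Perron--Frobenius for the positive semigroup $e^{t\mathcal L_R}$, the integral formula for $K_R$, and the nonnegative supermartingale $e^{ct}\Tr(K_R\rho_{R,t})$. That supermartingale gives stability in probability and, more than (2) requires, almost-sure decay $\Tr\rho_{R,t}=O(e^{-ct})$ for every $c<\alpha_R$. Stability for the averaged dynamics is in fact automatic: by trace non-increase $\Tr\bar\rho_R(t)\le\Tr\rho_R(0)$, and your sandwich bound then converts this into a bound on $d_S(\bar\rho_t)$. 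This settles the stability half of (2)$\Rightarrow$(1), which you did not address explicitly.

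The step you flagged does go through, but not for the reason you give. With $\tilde\sigma:=0\oplus\sigma$,
\[
\bigl(\mathcal L_u(\tilde\sigma)\bigr)_P=-iH_P(u)\sigma+\sum_{k}\theta_k\Bigl(C_{k,P}\sigma C_{k,R}^*-\tfrac12 C_{k,S}^*C_{k,P}\sigma\Bigr)+\sum_{\nu}\gamma_\nu\Bigl(L_{\nu,P}\sigma L_{\nu,R}^*-\tfrac12 L_{\nu,S}^*L_{\nu,P}\sigma\Bigr),
\]
and the $S$-block is $\sum_k\theta_kC_{k,P}\sigma C_{k,P}^*+\sum_\nu\gamma_\nu L_{\nu,P}\sigma L_{\nu,P}^*$. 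The kernel conditions kill every dissipative term, but not $-iH_P(u)\sigma$. For that term you must invoke the invariance identity \eqref{eq:open-loop-invariance-condition}, which gives $iH_P(u)\sigma=\tfrac12\bigl(\sum_k\theta_kC_{k,S}^*C_{k,P}+\sum_\nu\gamma_\nu L_{\nu,S}^*L_{\nu,P}\bigr)\sigma=0$. Once $\tilde\sigma$ is stationary, the support fact is one line: any state $\tau$ supported on $\mathcal W$ satisfies $\tau\le c\tilde\sigma$ for some $c>0$, and positivity of $e^{t\mathcal L_u}$ preserves this bound.

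Two small remarks. First, reading off $\mathcal W\subseteq\ker C_{k,P}$ from $\Tr\mathcal L_R(\sigma)=0$ needs $\theta_k>0$, which the statement tacitly assumes. Second, your converse direction uses only invariance of $\mathcal I(\mathcal W)$, so your argument shows the kernel clause in (3) is redundant.
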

Condition~(3) excludes invariant components supported entirely in the complement of the target. Condition~(4) characterizes decay of the complementary semigroup, while condition~(5) provides the converse linear Lyapunov function \(\Tr(K\rho)\) where $K=\mathrm{diag}(0,K_R)\in \mathcal{B}(\mathcal{H})$.
The rate \(\alpha_R\) is determined by the averaged Lindblad dynamics and
does not need to coincide with the almost-sure decay rate of the conditional
trajectories. Different monitoring schemes may have the same Lindblad
generator, and the same value of \(\alpha_R\), while producing
different sample-path Lyapunov exponents. Therefore, continuous observation may accelerate almost-sure convergence without changing the averaged
dynamics.

\subsection{QND systems and quantum state reduction}
\label{subsec:qnd_state_reduction_open_loop}

We now specialize the open-loop analysis to the QND structure in Assumption~\ref{ass:QND}. The sector populations satisfy equation~\eqref{eq:qnd-population-dynamics} and are bounded martingales. Then, continuous observation may identify one of the invariant sectors, although the selected sector is random.

The following condition ensures that distinct sectors generate
distinguishable measurement signals.

\begin{assumption}[Measurement distinguishability]
\label{ass:measurement-distinguishability}
For every \(i\neq j\), $\sum_{\nu=1}^{r}\iota_\nu^2[\Re(l_{\nu,i})-\Re(l_{\nu,j})]^2>0.$
\end{assumption}
Define the measurement-separation rate
\begin{equation*}
\mathfrak E_l:=\min_{i\neq j}\sum_{\nu=1}^{r}\iota_\nu^2[\Re(l_{\nu,i})-\Re(l_{\nu,j})]^2.
\end{equation*}
Under the measurement-distinguishability condition, one has \(\mathfrak E_l>0\).
Consider
\begin{equation*}
V_{\mathrm{QND}}(\rho)
:=
\sum_{0\leq i<j\leq d}
\sqrt{q_i(\rho)q_j(\rho)}\geq 0,
\end{equation*}
which is a symmetric measure of population dispersion. It vanishes exactly in $\bigcup_{j=0}^{d}\mathcal I(\mathcal H_j)$. The following result, adapted from~\cite[Theorem 2.5]{liang2024model} and reformulated in the notation used here, quantifies both the exponential decay of this dispersion and the asymptotic selection of a single QND sector.
\begin{theorem}[{\cite[Theorem 2.5]{liang2024model}}]
\label{Thm:QSR}
Assume that \(u_t\equiv0\), and that  Assumptions~\ref{ass:QND} and \ref{ass:measurement-distinguishability} hold. For every \(\rho_0\in\mathcal S(\mathcal H)\), 
\begin{align}
&\mathbb E\left[V_{\mathrm{QND}}(\rho_t)\right]\leq e^{-\mathfrak E_l t/2}V_{\mathrm{QND}}(\rho_0),\quad t\geq0, \label{eq:qnd-reduction-mean-rate}\\
&\limsup_{t\to\infty}\frac{1}{t}\log V_{\mathrm{QND}}(\rho_t)\leq-\frac{\mathfrak E_l}{2}\quad\text{a.s.}\label{eq:qnd-reduction-as-rate}
\end{align}
Moreover, there exists a random variable
\(\mathcal R\in\{0,\ldots,d\}\) such that
\begin{align}
&\lim_{t\to\infty}q_j(\rho_t)=\mathds 1_{\{R=j\}},\quad\text{a.s.},
\label{eq:qnd-random-sector-selection}\\
&\mathbb P(\mathcal R=j)=q_j(\rho_0)=\Tr(\Pi_j\rho_0), \quad
j=0,\dots,d.
\label{eq:qnd-sector-selection-probability}
\end{align}
\end{theorem}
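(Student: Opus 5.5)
The plan is to work entirely with the reduced population dynamics \eqref{eq:qnd-population-dynamics}, which under Assumption~\ref{ass:QND} and \(u_t\equiv0\) form a closed martingale system on \(\boldsymbol\Delta_d\). The function \(V_{\mathrm{QND}}\) depends only on \(q\), so \(V_{\mathrm{QND}}(\rho_t)=\sum_{i<j}\sqrt{q_{i,t}q_{j,t}}\). The core step is to apply It\^o's formula to each pair term \(f_{ij}:=\sqrt{q_iq_j}\) on the set where \(q_i q_j>0\).

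Faces of the simplex are invariant, since \(dq_j\) is proportional to \(q_j\). Hence, if \(q_{i,0}q_{j,0}=0\), the term stays zero. Otherwise, \(q_{i,t}q_{j,t}>0\) for all \(t\), by the exponential representation of the solution below. Write \(dq_j=2q_j\sum_\nu\iota_\nu\Psi^j_\nu dW_\nu\). Then
\[
d\log q_j=\sum_\nu\Bigl(2\iota_\nu\Psi_\nu^j\,dW_\nu-2\iota_\nu^2(\Psi_\nu^j)^2\,dt\Bigr),
\]
and therefore \(\sqrt{q_iq_j}=\exp\bigl(\tfrac12(\log q_i+\log q_j)\bigr)\) satisfies
\[
df_{ij}=f_{ij}\Bigl[\sum_\nu\iota_\nu(\Psi_\nu^i+\Psi_\nu^j)\,dW_\nu-\sum_\nu\iota_\nu^2\Bigl((\Psi_\nu^i)^2+(\Psi_\nu^j)^2-\tfrac12(\Psi_\nu^i+\Psi_\nu^j)^2\Bigr)dt\Bigr].
\]
The drift bracket equals \(\tfrac12\sum_\nu\iota_\nu^2(\Psi_\nu^i-\Psi_\nu^j)^2\). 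Because \(\Psi_\nu^i-\Psi_\nu^j=\Re(l_{\nu,i})-\Re(l_{\nu,j})\) does not depend on \(q\), this is a constant \(\ge\mathfrak E_l/2\) by Assumption~\ref{ass:measurement-distinguishability}. Consequently \(df_{ij}\le-\frac{\mathfrak E_l}{2}f_{ij}\,dt+dM_{ij}\). Summing over pairs, taking expectations (the stochastic integrals are true martingales since \(f_{ij}\) and the \(\Psi\)'s are bounded), and applying Gr\"onwall gives \eqref{eq:qnd-reduction-mean-rate}.

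For the almost-sure rate \eqref{eq:qnd-reduction-as-rate}, I would avoid passing through the mean bound and instead use the explicit exponential form. Define \(c_{ij}:=\tfrac12\sum_\nu\iota_\nu^2(\Re l_{\nu,i}-\Re l_{\nu,j})^2\). Then
\[
\log f_{ij}(t)=\log f_{ij}(0)+N_{ij}(t)-\int_0^t\Bigl(c_{ij}+\tfrac12\sum_\nu\iota_\nu^2(\Psi_\nu^i+\Psi_\nu^j)^2\Bigr)ds,
\]
where \(N_{ij}\) is a continuous martingale with quadratic variation \(\le Ct\). The strong law of large numbers for martingales gives \(N_{ij}(t)/t\to0\) a.s. Hence \(\limsup_t t^{-1}\log f_{ij}\le-c_{ij}\le-\mathfrak E_l/2\). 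The finite sum then inherits the worst rate, because \(\log\sum_k a_k\le\log(\#\text{terms})+\max_k\log a_k\).

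For sector selection, each \(q_{j,t}\) is a bounded martingale, so it converges a.s. to some \(q_{j,\infty}\). The a.s. decay of \(V_{\mathrm{QND}}\) forces \(q_{i,\infty}q_{j,\infty}=0\) for all \(i\ne j\). Combined with \(\sum_jq_{j,\infty}=1\), the limit is a vertex of \(\boldsymbol\Delta_d\), which defines \(\mathcal R\) and gives \eqref{eq:qnd-random-sector-selection}. Bounded convergence then yields \(\mathbb P(\mathcal R=j)=\mathbb E[q_{j,\infty}]=q_{j,0}\), which is \eqref{eq:qnd-sector-selection-probability}.

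The main obstacle is rigor at the boundary, where \(\sqrt{\cdot}\) and \(\log\) are singular. This is handled by the face-invariance of the population SDE: positive coordinates stay positive because of the Dol\'eans-Dade exponential representation, and zero coordinates stay zero by pathwise uniqueness. Each pair term can therefore be treated on its own invariant domain.
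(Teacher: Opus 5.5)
Your proposal is correct and follows essentially the same route as the paper: face invariance of the population SDE, an It\^o/generator computation showing that each pair term $\sqrt{q_iq_j}$ has drift $-\frac12\sum_\nu\iota_\nu^2\bigl(\Re(l_{\nu,i})-\Re(l_{\nu,j})\bigr)^2\sqrt{q_iq_j}$, Gr\"onwall for the mean bound, and bounded-martingale convergence plus the martingale property for sector selection. The only deviation is in the almost-sure rate, where you work pair by pair with the explicit logarithmic representation and the strong law for continuous martingales instead of invoking the generic stochastic Lyapunov theorem; this gives the same exponent, and in fact the sharper pairwise exponent $-c_{ij}$.
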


\begin{proof}[Proof]
Let \( I_+ := \left\{ j\in\{0,\ldots,d\}: q_j(\rho_0)>0 \right\}. \) If \(j\notin I_+\), then equation~\eqref{eq:qnd-population-dynamics} implies that \(q_j(\rho_t)=0\) for all \(t\geq0\). For \(j\in I_+\), the multiplicative form of~\eqref{eq:qnd-population-dynamics} yields $q_j(\rho_t)>0$ for all $t\geq 0$ almost surely. Hence, the face of the probability simplex determined by \(I_+\) is invariant, and \(V_{\mathrm{QND}}\) is twice continuously differentiable in its relative interior. A direct computation of the infinitesimal generator on the invariant face determined by \(I_+\) yields
\begin{equation*}
\mathscr L V_{\mathrm{QND}}(\rho)
=-\frac{1}{2}\sum_{0\leq i<j\leq d,\ i,j\in I_+ }\left[\sum_{\nu}\iota_\nu^2\big(\Re(l_{\nu,i})-\Re(l_{\nu,j})\big)^2\right]\sqrt{q_i(\rho)q_j(\rho)}\leq
-\frac{\mathfrak E_l}{2}V_{\mathrm{QND}}(\rho).
\end{equation*}
Dynkin's formula and Grönwall's inequality yield
\eqref{eq:qnd-reduction-mean-rate}, while the standard stochastic
Lyapunov theorem
\cite{mao2007stochastic,khasminskii2011stochastic}
implies~\eqref{eq:qnd-reduction-as-rate}.

Since the populations are bounded martingales, they converge almost surely by Doob’s martingale convergence theorem.
The convergence of \(V_{\mathrm{QND}}(\rho_t)\) toward zero, together with
\(\sum_{j=0}^{d}q_j(\rho_t)=1\), implies that the limiting vector is a
vertex of the probability simplex, which proves
\eqref{eq:qnd-random-sector-selection}. By the martingale property, we have $\mathbb E[q_j(\rho_t)]=q_j(\rho_0)$ for all $t\geq 0$, which yields~\eqref{eq:qnd-sector-selection-probability}.
\end{proof}
Thus, open-loop QND monitoring produces random sector selection with
probabilities determined by the initial populations. It does not make a
prescribed sector globally attractive. Measurement-based feedback will
subsequently be used to replace this random reduction mechanism by
stabilization of a selected target state or subspace.

Sharper asymptotic exponents can be obtained in the nondegenerate QND setting by combining martingale methods with a Girsanov transformation; see~\cite{benoist2014large}. The Lyapunov estimate above is generally more
conservative, but it extends naturally to block-valued QND sectors and is consistent with the stability methods used later in the review.

\subsection{Robustness of open-loop attractivity}
\label{subsec:ISS_open_loop}

The preceding results concern an exact model for which the target set
\(\mathcal I(\mathcal H_S)\) is invariant. Model perturbations lead to two
distinct robustness questions. If the perturbation preserves invariance, the question is whether attractivity is retained. If invariance is destroyed,
exact convergence cannot generally be expected, and the appropriate
objective is instead a practical bound on $d_S(\rho)$.

We restrict to a constant open-loop input \(u_t\equiv u\). Consider the
perturbed operators $\bar H(u)=H(u)+\alpha\widetilde H$, $\bar L_\nu=\sqrt{\gamma_\nu}L_\nu+\beta\widetilde L_\nu$ and $\bar C_k=\sqrt{\theta_k}C_k+\zeta\widetilde C_k$ where \(\alpha,\beta,\zeta\in\mathbb R\). We also allow additional
unmodeled channels with operators \(A_j\in\mathcal B(\mathcal H)\) and
common intensity \(\kappa\geq0\). The perturbed SME is
\begin{align}
d\sigma_t=\big(\mathcal L_u(\sigma_t)+\mathcal F_{\xi}(\sigma_t)\big)dt+
\sum_{\nu=1}^{r}
\sqrt{\eta_\nu}
\mathcal G_{\bar L_\nu}(\sigma_t)dW_{\nu,t},
\label{eq:perturbed_sme_ISS}
\end{align}
where the perturbation generator is
\begin{align*}
\mathcal F_{\xi}(\rho)
:=&-i[\alpha\widetilde H,\rho]
+
\sum_{\nu=1}^{r}
\big(\mathcal D_{\bar L_{\nu}}(\rho)-\mathcal D_{\sqrt{\gamma_\nu} L_\nu}(\rho)
\big)\\
&+\sum_{k=1}^{g}
\big(\mathcal D_{\bar C_k}(\rho)-\mathcal D_{\sqrt{\theta_k} C_k}(\rho)\big)
+\kappa\sum_{j=1}^{p}\mathcal D_{A_j}(\rho),
\end{align*}
with $\xi:= (\alpha,\beta,\zeta,\kappa)$.
Assume that \(\mathcal H_S\) is invariant for the nominal dynamics. Applying
Lemma~\ref{lem:open_loop_invariance} to the perturbed operators gives a
parameter-dependent invariance condition. For robustness, it is more useful to
impose a stronger condition that is uniform in the perturbation amplitudes.
\begin{assumption}[Robust invariance]
\label{ass:robust-invariance}
$\forall k\in[g]$, $\forall \nu \in[r]$, $\forall j\in[p]$, $\tilde{L}_{\nu,Q}=\tilde{C}_{k,Q}=A_{j,Q}=0$, $\tilde{H}_{P}=0$, and
    \begin{align*}
        & \textstyle\sum_{\nu=1}^r \big(L^*_{\nu,S} \tilde{L}_{\nu,P} + \tilde{L}^*_{\nu,S} L_{\nu,P}\big) = 0,\quad \textstyle\sum_{\nu=1}^r \big(\tilde{L}^*_{\nu,S} \tilde{L}_{\nu,P}\big) = 0, \\
        & \textstyle\sum_{k=1}^g \big(C^*_{k,S} \tilde{C}_{k,P} + \tilde{C}^*_{k,S} C_{k,P}\big) = 0,\quad \textstyle\sum_{k=1}^g\big(\tilde{C}^*_{k,S} \tilde{C}_{k,P}\big) = 0,\quad  \textstyle\sum_{j=1}^p A^*_{j,S} A_{j,P} = 0.
    \end{align*}
\end{assumption}
Thus, if the nominal dynamics leave \(\mathcal H_S\) invariant and
the robust-invariance condition holds, then \(\mathcal H_S\) remains invariant for
equation~\eqref{eq:perturbed_sme_ISS} for every value of
$\xi$.
The robust QND models considered in Section~\ref{subsec:robust-observer-feedback} belong to this structure-preserving regime: the observed operators retain their QND block structure, while the remaining uncontrolled dynamics continues to satisfy the target-invariance conditions. Hence, the uncertainties studied there modify the measurement information and the observer dynamics without inducing leakage from the target subspace.
\begin{proposition}[{\cite[Proposition 3.3]{liang2025exploring}}]
\label{prop:genericgas}
Assume that \(\alpha_R>0\) and  the robust-invariance condition in Assumption~\ref{ass:robust-invariance} holds. Then, there exists a neighborhood \(\mathcal U\) of the origin such that \(\mathcal H_S\) remains GES for the averaged dynamics and almost surely for every \(\xi\in\mathcal U\). Moreover, the same property holds for Lebesgue-almost every value of \(\xi\).
\end{proposition}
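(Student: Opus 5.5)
Under Assumption~\ref{ass:robust-invariance}, Lemma~\ref{lem:open_loop_invariance} applied to the perturbed operators shows that $\mathcal H_S$ is invariant for \eqref{eq:perturbed_sme_ISS} for every $\xi$. Indeed, the $Q$-blocks of $\bar H(u)$, $\bar L_\nu$, $\bar C_k$ and $A_j$ vanish. Expanding $\bar L_{\nu,S}^*\bar L_{\nu,P}$ and $\bar C_{k,S}^*\bar C_{k,P}$ in powers of $\beta,\zeta$, the cross terms and quadratic terms vanish, leaving the nominal condition \eqref{eq:open-loop-invariance-condition}. The $A_j$ contribution also vanishes, and $\widetilde H_P=0$. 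Hence Theorem~\ref{thm:open-loop-attractivity-characterization} applies for each fixed $\xi$, and it reduces the claim to showing $\alpha_R(\xi)>0$. Here $\alpha_R(\xi)$ is the spectral decay rate of the perturbed complementary generator $\mathcal L_R^{\xi}=\mathcal L_R+\mathcal F_{\xi,R}$, obtained by replacing each operator with its perturbed counterpart in the formula for $\mathcal L_R$.

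\emph{Local part.} The map $\xi\mapsto\mathcal L_R^\xi$ is polynomial in $(\alpha,\beta,\zeta,\kappa)$, hence continuous on the finite-dimensional space $\mathcal B(\mathcal H_R)$. Eigenvalues depend continuously on the matrix, so $\xi\mapsto\max\Re\operatorname{spec}(\mathcal L_R^\xi)$ is continuous. Since $\alpha_R(0)=\alpha_R>0$, there is a neighborhood $\mathcal U$ of the origin on which $\alpha_R(\xi)>\alpha_R/2$. A more quantitative alternative uses item (5) of Theorem~\ref{thm:open-loop-attractivity-characterization}. Take $K_R>0$ with $\mathcal L_R^*(K_R)\le -cK_R$ and note $(\mathcal L_R^\xi)^*(K_R)\le -cK_R+O(|\xi|)\|K_R\|\mathbf I\le -(c/2)K_R$ for small $|\xi|$, using $K_R\ge\lambda_{\min}(K_R)\mathbf I$.

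For exponential stability, I would translate $\alpha_R(\xi)>0$ into GES. For the averaged dynamics, $\bar\rho_R(t)=e^{t\mathcal L_R^\xi}\rho_R(0)$ decays exponentially. The off-diagonal blocks are controlled by $\|\rho_P\|\le\sqrt{\|\rho_S\|\,\|\rho_R\|}$, by positivity, so $d_S(\bar\rho_t)\le c\,e^{-\lambda t}$. To obtain $c\,d_S(\rho_0)$, use $\Tr(\rho_R)\le d_S(\rho_0)$ together with equivalence of $d_S$ with $\sqrt{\Tr\rho_R}$ and $\Tr\rho_R$ up to constants where needed. I expect this part to follow the nominal argument in \cite{benoist2017exponential} verbatim. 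For the almost-sure statement, I would use the Lyapunov function $V(\rho)=\sqrt{\Tr(K\rho)}$, or $\Tr(K\rho)^{s}$ with small $s$, where $K=\mathrm{diag}(0,K_R)$. Itô's formula gives $\mathscr L\Tr(K\rho)\le -(c/2)\Tr(K\rho)$. The concavity correction from the diffusion term only helps, so the standard stochastic Lyapunov theorem yields $\limsup\frac1t\log d_S(\rho_t)<0$ a.s. The main technical point is verifying that the Itô term is bounded relative to $\Tr(K\rho)$, which holds because $\mathcal G_{\bar L_\nu}$ maps into operators whose $R$-block is controlled by $\rho_R$, by the invariance structure.

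\emph{Generic part.} The function $\xi\mapsto\det(z-\mathcal L_R^\xi)$ is polynomial. Failure of GAS is equivalent, by item (3), to the existence of an invariant subspace $\mathcal W\subseteq\mathcal H_R$ on which all $P$-blocks vanish. Equivalently, $\mathcal L_R^\xi$ has an eigenvalue on the imaginary axis; since the semigroup is trace-nonincreasing positive, the relevant peripheral eigenvalue can be taken to be $0$ via a positive fixed point. So the bad set is contained in $\{\xi:\det\mathcal L_R^\xi=0\}$, which is the zero set of a real polynomial in $\xi$. This polynomial is not identically zero because it is nonzero at $\xi=0$. The zero set of a nontrivial polynomial has Lebesgue measure zero, which gives the almost-every statement. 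The main obstacle is this reduction to $\det\mathcal L_R^\xi\neq0$: one must ensure that loss of stability always produces a zero eigenvalue rather than a purely imaginary one. I would argue this by Perron–Frobenius theory for positive maps, since the spectral abscissa of a positive semigroup generator is itself an eigenvalue.
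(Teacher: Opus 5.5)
\textbf{Overall.} Your proposal is correct. The local part follows the paper, and the generic part takes a genuinely different route. Your local argument is the paper's own: continuity of the spectrum of the finite-dimensional complementary generator, or equivalently persistence of $(\mathcal L_R^\xi)^*(K_R)\le-(c/2)K_R$. Your use of robust invariance to apply Theorem~\ref{thm:open-loop-attractivity-characterization} at each fixed $\xi$ is also the intended reduction.

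\textbf{Generic part: comparison with the paper.} The paper appeals to the dissipation-induced decomposition of \cite{ticozzi2012hamiltonian,liang2025exploring}. That is a structural approach through item~(3): the exceptional parameters are those for which some nonzero $\mathcal W\subseteq\mathcal H_R$ becomes invariant and annihilated by all $P$-blocks, and one shows this occurs only on a null set. You argue differently, in three steps.
\begin{enumerate}
\item Robust invariance holds for \emph{every} $\xi$, so $\mathcal L_R^\xi$ always generates a positive, trace-nonincreasing semigroup.
\item Its spectral abscissa $s$ is therefore an eigenvalue. By Krein--Rutman, $e^{ts}$ is an eigenvalue of $e^{t\mathcal L_R^\xi}$ for all $t>0$, and since $\mathcal L_R^\xi$ has finitely many eigenvalues, this forces $s\in\operatorname{spec}(\mathcal L_R^\xi)$.
\item Hence the exceptional set is exactly $\{\xi:\det\mathcal L_R^\xi=0\}$. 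This is the zero set of a polynomial in $\xi$, real once restricted to Hermitian operators, and nonzero at $\xi=0$.
\end{enumerate}
This argument is valid, shorter and self-contained. It also shows more: the exceptional set is closed and nowhere dense, and the argument extends to any real-analytic perturbation family. What it does not provide is the structural information of the decomposition route, namely which invariant components of $\mathcal H_R$ appear on the exceptional set and how to break them by design.

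\textbf{Two small points in your GES upgrade.}

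\emph{Averaged dynamics.} The chain $d_S(\bar\rho_t)\le C\sqrt{\Tr\bar\rho_R(t)}$ together with $\Tr\rho_R(0)\le d_S(\rho_0)$ only gives $C\sqrt{d_S(\rho_0)}\,e^{-\lambda t/2}$. The quantities $d_S$ and $\sqrt{\Tr\rho_R}$ are not equivalent: the state $(1-\epsilon)\rho_S\oplus\epsilon\rho_R$ has $d_S=\epsilon$. To obtain the factor $d_S(\rho_0)$ that the definition requires, use linearity.
\begin{enumerate}
\item Invariance gives $(\mathcal L_u+\mathcal F_\xi)(\mathcal V_S)\subseteq\mathcal V_S$, where $\mathcal V_S:=\{X:X=\Pi_{\mathcal H_S}X\Pi_{\mathcal H_S}\}$.
\item Hence $\bar\rho_t-\Pi_{\mathcal H_S}\bar\rho_t\Pi_{\mathcal H_S}$ evolves under the induced quotient generator.
\item Your positivity estimate shows this quotient flow decays exponentially on the images of all states. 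These images span the quotient space, so the quotient generator is Hurwitz.
\item Therefore $d_S(\bar\rho_t)\le Ce^{-\lambda t}d_S(\rho_0)$.
\end{enumerate}

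\emph{Almost-sure part.} You do not need to bound the It\^o term.
\begin{enumerate}
\item Take $K_R^\xi$ from item~(5) applied to $\mathcal L_R^\xi$. Linearity gives $\mathscr L\Tr(K\rho)\le-c\Tr(K\rho)$.
\item Hence $e^{ct}\Tr(K\rho_t)$ is a nonnegative supermartingale, so it is bounded almost surely.
\item Combined with $d_S^2\le c_d\Tr(\Pi_{\mathcal H_R}\rho)\le (c_d/\lambda_{\min}(K_R^\xi))\Tr(K\rho)$, this yields $\limsup_{t\to\infty}\frac1t\log d_S(\rho_t)\le-c/2$ almost surely.
\end{enumerate}
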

The local statement follows from continuity of the spectrum of the finite-dimensional complementary generator. The generic statement is based on the dissipation-induced decomposition and excludes only an exceptional parameter set of Lebesgue measure zero; see \cite{ticozzi2012hamiltonian,liang2025exploring}.

We next consider perturbations that do not necessarily satisfy the robust-invariance condition. In this case, exact convergence to \(\mathcal I(\mathcal H_S)\) cannot generally be expected, since the perturbed dynamics may transfer population out of the target subspace. Nevertheless, the nominal converse Lyapunov operator yields a practical robustness estimate.
There exists a constant \(c_d>0\) such that $d_S(\rho)^2\leq c_d\Tr(\Pi_{\mathcal H_R}\rho)$ for all $\rho\in\mathcal S(\mathcal H)$. 
\begin{proposition}[{\cite[Proposition 3.4]{liang2025exploring}}]
\label{Prop:ISS}
Assume that \(\alpha_R>0\), and let
\(K_R\in\mathcal B_{>0}(\mathcal H_R)\) and \(c>0\) satisfy
\(\mathcal L_R^*(K_R)\leq-cK_R.\) Set $K:=\operatorname{diag}(0,K_R)$ and
$D_\xi:=\sup_{\rho\in\mathcal S(\mathcal H)}\max\{0,\Tr(
K\mathcal F_{\xi}(\rho))\}$.
Then, for every initial state
\(\sigma\in\mathcal S(\mathcal H)\), the solution of
equation~\eqref{eq:perturbed_sme_ISS} with \(\sigma_0=\sigma\) satisfies
\begin{equation}
\mathbb E\bigl[d_S(\sigma_t)\bigr]\leq B_\xi(t,\sigma),\quad t\geq0,
\label{Eq:boundedness}
\end{equation}
where
$
B_\xi(t,\sigma) := [(\Tr(K\sigma)e^{-ct}+(1-e^{-ct}){D_\xi}/{c}){c_d}/{\lambda_{\min}(K_R)}]^{1/2}.
$
Moreover, for every \(\delta\geq1\),
\begin{equation}
\mathbb P \left( d_S(\sigma_t) \leq \delta B_\xi(t,\sigma) \right) \geq 1-{1}/{\delta}, \quad t\geq0.
\label{Eq:ISS}
\end{equation}
\end{proposition}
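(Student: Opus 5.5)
The plan is to use the linear Lyapunov function $V(\sigma):=\Tr(K\sigma)$, with $K=\operatorname{diag}(0,K_R)$, and to exploit the fact that its generator along the perturbed SME is affine in $\sigma$. By Itô's formula applied to equation~\eqref{eq:perturbed_sme_ISS}, $dV(\sigma_t)=\Tr\bigl(K(\mathcal L_u(\sigma_t)+\mathcal F_\xi(\sigma_t))\bigr)dt+\text{martingale}$. The stochastic integrands are bounded: $K$ is bounded and $\mathcal G_{\bar L_\nu}$ is bounded on the compact set $\mathcal S(\mathcal H)$. Hence the martingale terms vanish in expectation, and taking expectations gives
\[
\frac{d}{dt}\mathbb E[V(\sigma_t)]=\mathbb E\bigl[\Tr(\mathcal L_u^*(K)\sigma_t)\bigr]+\mathbb E\bigl[\Tr(K\mathcal F_\xi(\sigma_t))\bigr].
\]
The second term is at most $D_\xi$ by definition, since $\sigma_t\in\mathcal S(\mathcal H)$ almost surely.

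The key step is to show that $\Tr(\mathcal L_u^*(K)\sigma)\le -c\Tr(K\sigma)$ for all $\sigma\in\mathcal S(\mathcal H)$. I will use the nominal invariance conditions of Lemma~\ref{lem:open_loop_invariance}: $C_{k,Q}=L_{\nu,Q}=0$ together with equation~\eqref{eq:open-loop-invariance-condition}. With these, compute $\mathcal L_u^*(K)$ blockwise and check that it equals $\operatorname{diag}(0,\mathcal L_R^*(K_R))$. Equivalently, for every $\sigma$ one has $\Tr(K\mathcal L_u(\sigma))=\Tr(K_R[\mathcal L_u(\sigma)]_R)$. The $R$-block of $\mathcal L_u(\sigma)$ depends only on $\sigma_R$, because the off-diagonal blocks $\sigma_P,\sigma_Q$ enter only through $A_Q$ terms (which vanish) and through $H_Q$ combined with the $A_S^*A_P$ terms. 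The latter are exactly the contributions that cancel under~\eqref{eq:open-loop-invariance-condition}, which is precisely the fact behind the statement in the text that the $R$-block evolves autonomously. This yields $[\mathcal L_u(\sigma)]_R=\mathcal L_R(\sigma_R)$, and so $\Tr(K\mathcal L_u(\sigma))=\Tr(\mathcal L_R^*(K_R)\sigma_R)\le -c\Tr(K_R\sigma_R)=-c\Tr(K\sigma)$, using $\sigma_R\ge0$. I expect this block bookkeeping, in particular confirming that the cross terms cancel, to be the only real obstacle; it reduces to the invariance computation already underlying Theorem~\ref{thm:open-loop-attractivity-characterization}.

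Setting $v(t):=\mathbb E[\Tr(K\sigma_t)]$, the previous two steps give $\dot v\le -cv+D_\xi$. Grönwall's inequality then yields $v(t)\le \Tr(K\sigma)e^{-ct}+(1-e^{-ct})D_\xi/c$.

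Next I convert this to a bound on the distance. From $K\ge\lambda_{\min}(K_R)\Pi_{\mathcal H_R}$ we get $\Tr(\Pi_{\mathcal H_R}\sigma_t)\le \Tr(K\sigma_t)/\lambda_{\min}(K_R)$, and combining with the stated inequality $d_S^2\le c_d\Tr(\Pi_{\mathcal H_R}\cdot)$ gives $\mathbb E[d_S(\sigma_t)^2]\le B_\xi(t,\sigma)^2$. Jensen's inequality, $\mathbb E[d_S]\le(\mathbb E[d_S^2])^{1/2}$, gives~\eqref{Eq:boundedness}.

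Finally, Markov's inequality applied to the nonnegative variable $d_S(\sigma_t)$ gives $\mathbb P(d_S(\sigma_t)>\delta B_\xi)\le \mathbb E[d_S(\sigma_t)]/(\delta B_\xi)\le 1/\delta$, which is~\eqref{Eq:ISS}. If $B_\xi(t,\sigma)=0$, then $d_S(\sigma_t)=0$ almost surely and the bound holds trivially.
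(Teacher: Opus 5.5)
Your proposal is correct and follows the paper's argument. The paper applies It\^o's formula to the linear Lyapunov function $\Tr(K\sigma_t)$ to get $\frac{d}{dt}\mathbb E[\Tr(K\sigma_t)]\le -c\,\mathbb E[\Tr(K\sigma_t)]+D_\xi$, then uses Gr\"onwall, the bound $d_S^2\le c_d\Tr(\Pi_{\mathcal H_R}\sigma)\le c_d\Tr(K\sigma)/\lambda_{\min}(K_R)$ with Jensen, and finally Markov's inequality; your blockwise check that $[\mathcal L_u(\sigma)]_R=\mathcal L_R(\sigma_R)$ under the nominal invariance conditions (the cross terms cancel using $iH_P=\frac12\sum(\cdots)$ and $H_Q=H_P^*$) correctly fills in the drift estimate that the paper leaves implicit.
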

Indeed, by It\^o's formula~\cite{ikeda2014stochastic}, we have  \( \frac{d}{dt} \mathbb E [ \Tr(K\sigma_t) ] \leq -c \mathbb E [ \Tr(K\sigma_t) ] + D_\xi. \) 
Combining the resulting Gr\"onwall estimate and Jensen's inequality yields~\eqref{Eq:boundedness}, while~\eqref{Eq:ISS} follows from Markov's inequality. In particular, \[ \limsup_{t\to\infty} \mathbb E \left[ d_S(\sigma_t) \right] \leq \left( \frac{c_dD_\xi} {c\lambda_{\min}(K_R)} \right)^{1/2}. \] Thus, perturbations that destroy invariance replace exact attractivity by a mean practical-stability estimate for the conditional trajectories, whose size is determined by the perturbation magnitude \(D_\xi\) and the nominal stability margin~\(c\).

\section{Quantum filtering as observer dynamics}
\label{sec:observer-sme}

In the preceding sections, \(\rho_t\) denotes the conditional state generated
from the true initial state and model parameters. In practice, the controller
propagates an estimated conditional state \(\hat\rho_t\) from the same
measurement record, using a possibly incorrect initial state and misspecified model parameters. From a control-theoretic perspective,
\(\hat\rho_t\) is a nonlinear observer state.
First, we consider initialization mismatch under an exact model and review
observable-space and fidelity-based filter-stability results. Then, we specialize to QND systems, for which reduced filters subject to both initialization and parameter mismatch admit explicit exponential convergence estimates.

\subsection{Filter stability under initialization mismatch}
\label{subsec:filter_stability_observability}

Given an estimated initial state
\(\hat\rho_0\in\mathcal S(\mathcal H)\), define the estimated innovation
processes by
\begin{equation*}
d\widehat W_{\nu,t}:=dY_{\nu,t}-\iota_\nu\Tr\big((L_\nu+L_\nu^*)\hat\rho_t\bigr)dt,
\quad
\nu\in[r].
\end{equation*}
The estimated filter satisfies
\begin{equation}
d\hat\rho_t=\mathcal L_{u_t}(\hat\rho_t)dt+\sum_{\nu=1}^{r}\iota_\nu\mathcal G_{L_\nu}(\hat\rho_t)d\widehat W_{\nu,t}.
\label{eq:estimated-filter}
\end{equation}
The true and estimated filters are driven by the same measurement
record but generally have different innovation processes. Indeed,
equation~\eqref{eq:multi-homodyne-output} implies
\begin{equation*}
d\widehat W_{\nu,t}=dW_{\nu,t}+\iota_\nu\Tr\bigl((L_\nu+L_\nu^*)(\rho_t-\hat\rho_t)\bigr)dt.
\end{equation*}
Thus, under the probability law generated by the true initial state,
\(\widehat W_{\nu,t}\) is not generally a Wiener process, and
\((\rho_t,\hat\rho_t)\) forms a coupled stochastic system.
In finite dimension, \(\operatorname{supp}(\rho_0)\subseteq\operatorname{supp}(\hat\rho_0)\) is a sufficient condition for absolute continuity of the corresponding observation laws. In particular, this condition holds for every \(\rho_0\in\mathcal S(\mathcal H)\) whenever \(\hat\rho_0>0\). 

We now restrict to \(u_t\equiv u\). The adjoint Lindblad generator is
\begin{equation*}
\mathcal L_u^*(X)=i[H(u),X]+\sum_{k=1}^{g}\theta_k\mathcal D_{C_k}^*(X)+\sum_{\nu=1}^{r}\gamma_\nu\mathcal D_{L_\nu}^*(X),
\end{equation*}
where $\mathcal D_A^*(X):=A^*XA-\frac{1}{2}\left\{A^*A,X\right\}.$ Define
$
\mathcal K_\nu(X):=L_\nu^*X+XL_\nu.
$
Set $\mathcal O_0:=\operatorname{span}\{\mathbf I\}$
and define recursively
\begin{equation*}
\mathcal O_{n+1}:=\operatorname{span}\big(\mathcal O_n\cup\mathcal L_u^*(\mathcal O_n)\cup\textstyle\bigcup_{\nu=1}^{r}\mathcal K_\nu(\mathcal O_n)\big).
\end{equation*}
Since \(\mathcal B_*(\mathcal H)\) is finite dimensional, the sequence \((\mathcal O_n)_{n\geq0}\) stabilizes after finitely many iterations. Its terminal value, denoted by \(\mathcal O\), is the observable space. The model is observable when \( \mathcal O = \mathcal B_*(\mathcal H). \)

The following result is a finite-dimensional formulation of the quantum filter stability theorem.
\begin{theorem}[{\cite[Theorem 2.1]{handel2009stability}; \cite[Theorem 5.3.12]{van2007filtering}}]
\label{thm:quantum_filter_stability}
Assume that \(u_t\equiv u\) and \(\operatorname{supp}(\rho_0)\subseteq\operatorname{supp}(\hat\rho_0)\). Then, for every \(X\in\mathcal O\),
\begin{equation*}
\textstyle \lim_{t\to\infty}\mathbb E_{\rho_0}[|\Tr(X(\rho_t-\hat\rho_t))|]=0.
\end{equation*}
In particular, for every observed channel \(\nu\),
\begin{equation}
\textstyle \lim_{t\to\infty}\mathbb E_{\rho_0}[|\Tr((L_\nu+L_\nu^*)(\rho_t-\hat\rho_t))|]=0.
\label{eq:output_signal_filter_stability}
\end{equation}
If, in addition,
\(
\mathcal O=\mathcal B_*(\mathcal H),
\)
then $\lim_{t\to\infty}\mathbb E_{\rho_0}[\|\rho_t-\hat\rho_t\|_1]=0.$
\end{theorem}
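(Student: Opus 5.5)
We will follow van Handel's change-of-measure argument, which reduces filter stability to a martingale convergence statement and then to an observability argument on the algebra generated by the measurement. The first step is to realize both filters on a common reference space. Fix a reference probability $\mathbb Q$ under which $Y=(Y_1,\dots,Y_r)$ is a standard $r$-dimensional Brownian motion. The Kallianpur--Striebel (Zakai) representation then writes each conditional state as a normalized unnormalized state: $\rho_t=\sigma_t/\Tr\sigma_t$, where $\sigma_t$ solves the linear SME driven by $dY$. Since this equation is linear in its initial condition, $\sigma_t=\Phi_t(\rho_0)$ and $\hat\sigma_t=\Phi_t(\hat\rho_0)$ for the same random completely positive map $\Phi_t$.

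The second step uses the support condition. If $\operatorname{supp}(\rho_0)\subseteq\operatorname{supp}(\hat\rho_0)$, there is $c>0$ with $\rho_0\le c\hat\rho_0$. By complete positivity, $\Phi_t(\rho_0)\le c\,\Phi_t(\hat\rho_0)$, which gives $\mathbb P_{\rho_0}\ll\mathbb P_{\hat\rho_0}$ on $\mathcal F^Y_\infty$. Write $\hat\rho_0=\lambda\rho_0+(1-\lambda)\rho_0'$ with $\lambda=1/c$. The key identity is the Bayes formula
\[
\hat\rho_t=\frac{\lambda\,\Tr\sigma_t}{\Tr\hat\sigma_t}\,\rho_t+\Big(1-\frac{\lambda\,\Tr\sigma_t}{\Tr\hat\sigma_t}\Big)\rho'_t .
\]
Here $M_t:=\lambda\Tr\sigma_t/\Tr\hat\sigma_t=\mathbb E_{\hat\rho_0}[\lambda\, d\mathbb P_{\rho_0}/d\mathbb P_{\hat\rho_0}\mid\mathcal F^Y_t]$ is a bounded $\mathbb P_{\hat\rho_0}$-martingale with values in $[0,1]$. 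Taking differences, for any $X$ we get
\[
\Tr\big(X(\rho_t-\hat\rho_t)\big)=(1-M_t)\,\Tr\big(X(\rho_t-\rho'_t)\big).
\]
The problem therefore reduces to showing $\mathbb E_{\rho_0}\big[|(1-M_t)\Tr(X(\rho_t-\rho_t'))|\big]\to0$ for $X\in\mathcal O$. Equivalently, by the classical Chigansky--van Handel argument, it suffices to show that the likelihood ratio $d\mathbb P_{\rho_0}/d\mathbb P_{\rho'_0}|_{\mathcal F^Y_t}$ restricted to the tail/observable information is the limit of its conditional versions.

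The third step is to identify what the observation $\sigma$-field determines. For $X\in\mathcal O_n$, the iterated structure (drift $\mathcal L_u^*$, multiplicative $\mathcal K_\nu$) shows that $\Tr(X\rho_s)$-type functionals of the initial state are $\mathcal F^Y_\infty$-measurable in the sense that
\[
\mathbb E_{\rho_0}[F]=\mathbb E_{\rho_0'}[F]\ \ \forall F\in \mathcal F^Y_\infty
\quad\Longleftrightarrow\quad
\Tr(X\rho_0)=\Tr(X\rho_0')\ \ \forall X\in\mathcal O .
\]
This follows by expanding moments $\mathbb E[\prod_i Y_{\nu_i,t_i}]$ and their time derivatives at $0$, which produce exactly the words $\Tr(\mathcal K_{\nu_1}\mathcal L^{*k_1}\cdots(\mathbf I)\rho_0)$ spanning $\mathcal O$. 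We then apply the martingale convergence theorem (L\'evy upward) to $\mathbb E_{\rho_0}[\Tr(X\rho_{t+s})\mid\mathcal F^Y_t]$, combined with the Markov/semigroup property, to get
\[
\mathbb E\big|\Tr(X(\rho_t-\hat\rho_t))\big|\to0 \quad\text{for } X\in\mathcal O .
\]
This requires showing the difference of predictive expectations vanishes uniformly in the horizon, using the total-variation convergence $\|\mathbb P_{\rho_0}(\cdot\mid\mathcal F^Y_t)-\mathbb P_{\hat\rho_0}(\cdot\mid\mathcal F^Y_t)\|_{\mathrm{TV},\,\mathcal F^Y_{[t,\infty)}}\to0$ in $\mathbb P_{\rho_0}$-mean. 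That convergence itself holds because the tail-restricted Radon--Nikodym derivatives converge by martingale arguments.

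The remaining claims follow directly. Since $\mathcal K_\nu(\mathbf I)=L_\nu+L_\nu^*\in\mathcal O_1$, we obtain \eqref{eq:output_signal_filter_stability}. If $\mathcal O=\mathcal B_*(\mathcal H)$, choose a finite Hermitian basis and use the equivalence of norms in finite dimension to get the trace-norm statement. The main obstacle is the third step: the exchange of the limit $t\to\infty$ with the conditioning, that is, the ``tail $\sigma$-field versus observable information'' issue. Here I would rely on van Handel's argument, which avoids needing nondegenerate noise or ergodicity and works for every fixed $X\in\mathcal O$.
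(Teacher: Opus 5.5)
The paper gives no proof of this theorem. It cites van Handel and only remarks that the output statement follows from $\mathcal K_\nu(\mathbf I)=L_\nu^*+L_\nu\in\mathcal O$, which you reproduce. Several of your steps are fine: Step 1, the bound $\rho_0\le c\hat\rho_0$ with the resulting $\mathbb P_{\rho_0}\le c\,\mathbb P_{\hat\rho_0}$, and the closing norm-equivalence remark.

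\textbf{The gap is Step 3.} That step carries the whole theorem, and you defer it to ``van Handel's argument'' while describing a mechanism that does not work.

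\emph{The mixture identity does not reduce the problem.} The identity $\hat\rho_t=M_t\rho_t+(1-M_t)\rho'_t$ is correct. However, $M_t\to\lambda\,d\mathbb P_{\rho_0}/d\mathbb P_{\hat\rho_0}$, and this limit equals $1$ $\mathbb P_{\rho_0}$-a.s.\ only when $\mathbb P_{\rho_0}\perp\mathbb P_{\rho'_0}$. In an unobservable model with $\mathbb P_{\rho_0}=\mathbb P_{\rho'_0}$ one even has $M_t\equiv\lambda$. You are therefore left with $\Tr(X(\rho_t-\rho'_t))$, a stability question of the same type and now without absolute continuity. You never use the identity again.

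\emph{L\'evy's theorem gives nothing here.} Applying it to $\mathbb E_{\rho_0}[\Tr(X\rho_{t+s})\mid\mathcal F^Y_t]$ just returns $\Tr(e^{s\mathcal L_u^*}(X)\rho_t)$.

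\emph{There is no tail $\sigma$-field exchange to control.} Avoiding that issue is precisely what the observability route achieves.

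\textbf{What is actually needed} is two facts, and the second is missing from your sketch.

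(i) \emph{Merging of predictive laws.} Let $\Lambda:=d\mathbb P_{\rho_0}/d\mathbb P_{\hat\rho_0}$ on $\mathcal F^Y_\infty$ and $\Lambda_t:=\mathbb E_{\hat\rho_0}[\Lambda\mid\mathcal F^Y_t]$. For bounded $G$,
\[
\mathbb E_{\rho_0}[G\mid\mathcal F^Y_t]-\mathbb E_{\hat\rho_0}[G\mid\mathcal F^Y_t]=\mathbb E_{\hat\rho_0}[G(\Lambda-\Lambda_t)\mid\mathcal F^Y_t]/\Lambda_t .
\]
Hence the $\mathbb P_{\rho_0}$-expected total-variation distance between the two conditional laws is bounded by $\mathbb E_{\hat\rho_0}|\Lambda-\Lambda_t|\to0$. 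This is ordinary $L^1$ martingale convergence of the full likelihood ratio; no tail restriction is involved. Since $u$ is constant, the Markov property identifies the conditional laws of $(Y_{t+s}-Y_t)_{s\ge0}$ given $\mathcal F^Y_t$ with $\mathbb P_{\rho_t}$ and $\mathbb P_{\hat\rho_t}$. The second identification holds $\mathbb P_{\hat\rho_0}$-a.s., hence $\mathbb P_{\rho_0}$-a.s. Therefore $\mathbb E_{\rho_0}\|\mathbb P_{\rho_t}-\mathbb P_{\hat\rho_t}\|_{\mathrm{TV}}\to0$.

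(ii) \emph{Quantitative observability.} The map $\tau\mapsto\mathbb P_\tau$ is linear on $\mathcal B_*(\mathcal H)$. By your identifiability equivalence, its kernel is $\mathcal O^\perp$. So for $X\in\mathcal O$, the functional $\tau\mapsto\Tr(X\tau)$ is well defined on the finite-dimensional range, where $\|\cdot\|_{\mathrm{TV}}$ is a norm. This gives $|\Tr(X\tau)|\le C_X\|\mathbb P_\tau\|_{\mathrm{TV}}$, equivalently $\Tr(X\tau)=\mathbb E^\tau[F_X]$ for some bounded functional $F_X$ of the record. This is the only place where finite dimension and the definition of $\mathcal O$ enter. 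Without it, merging of the output laws says nothing about $\Tr(X(\rho_t-\hat\rho_t))$.

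With (i) and (ii),
\[
\mathbb E_{\rho_0}|\Tr(X(\rho_t-\hat\rho_t))|\le C_X\,\mathbb E_{\rho_0}\|\mathbb P_{\rho_t}-\mathbb P_{\hat\rho_t}\|_{\mathrm{TV}}\to0,
\]
which is the theorem. The mixture decomposition and the ``tail'' discussion can be dropped.
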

The output convergence~\eqref{eq:output_signal_filter_stability} follows from \( \mathcal K_\nu(\mathbf I) = L_\nu^*+L_\nu \in\mathcal O. \) It does not require full observability: the estimated filter may reproduce the observed signals asymptotically without reconstructing the entire conditional state. In structured models such as degenerate QND systems, the observable information may be restricted to the sector populations, which motivates the reduced filters considered later.

\subsection{Fidelity-based filter stability}
\label{subsec:fidelity-observer-stability}

A complementary measure of agreement between the true and estimated filters
is the quantum fidelity
$$
F(\rho,\sigma):=\left(\Tr\sqrt{\rho^{1/2}\sigma\rho^{1/2}}\right)^2,
$$
which satisfies $F(\rho,\sigma)\in[0,1]$ and $F(\rho,\sigma)=1$ if and only if $\rho=\sigma$.
The following result provides a general non-divergence property of quantum
filters.
\begin{theorem}[{\cite[Theorem 5]{amini2014stability}}]
\label{thm:fidelity-submartingale}
Assume that the true and estimated filters use the same system model and
the same prescribed control input, and are driven by the same measurement
record. Then,
\(
\bigl(F(\rho_t,\hat\rho_t)\bigr)_{t\geq0}
\)
is a submartingale with respect to the observation filtration: $\mathbb E
\left[
F(\rho_t,\hat\rho_t)
\,\middle|\,
\mathcal F_s^Y
\right]
\geq
F(\rho_s,\hat\rho_s)$ for all $0\leq s\leq t$.
\end{theorem}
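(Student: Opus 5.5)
The plan is to prove the submartingale property by showing that the fidelity does not decrease under the infinitesimal steps of the filter. I would discretize the continuous measurement and pass to the limit. The fidelity is not smooth where $\rho$ or $\hat\rho$ is singular, so I will not try a direct Itô computation of $F(\rho_t,\hat\rho_t)$.

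\textbf{Step 1: discrete-time representation.} Over a short interval $[t,t+\delta]$, the SME~\eqref{eq:multi-homodyne-sme} driven by the record $\Delta Y$ is approximated by a Kraus update
\[
\rho\mapsto \frac{\sum_{\mu} M_\mu(\Delta Y)\rho M_\mu(\Delta Y)^*}{p_\rho(\Delta Y)},\qquad
M_0(\Delta Y)\approx \mathbf I+(-iH(u)-\tfrac12\textstyle\sum L_\nu^*L_\nu-\cdots)\delta+\sum_\nu \iota_\nu L_\nu\Delta Y_\nu .
\]
The remaining Kraus operators $M_\mu$ describe the unobserved channels $C_k$ and the undetected fraction $1-\eta_\nu$. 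The outcome density $p_\rho(y)=\Tr(\sum_\mu M_\mu(y)\rho M_\mu(y)^*)$ is taken with respect to a Gaussian reference measure.

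The estimated filter uses the same Kraus operators, since the model and input are the same, and the same observed $\Delta Y$. Under the true law, $\Delta Y$ has density $p_\rho$.

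\textbf{Step 2: one-step inequality.} Write $\mathcal E_y(\cdot)=\sum_\mu M_\mu(y)\cdot M_\mu(y)^*$, which is completely positive. I will show
\[
\int p_\rho(y)\,F\!\Big(\tfrac{\mathcal E_y(\rho)}{p_\rho(y)},\tfrac{\mathcal E_y(\hat\rho)}{p_{\hat\rho}(y)}\Big)dy\ \ge F(\rho,\hat\rho).
\]
Each summand equals $F(\mathcal E_y(\rho),\mathcal E_y(\hat\rho))/p_{\hat\rho}(y)$, using the homogeneity $F(a\rho,b\sigma)=abF(\rho,\sigma)$.

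For $\sqrt{F}$, the joint concavity and monotonicity of root fidelity under the instrument $\rho\mapsto\bigoplus_y\mathcal E_y(\rho)$, which is trace preserving, give
\[
\int\sqrt{F(\mathcal E_y\rho,\mathcal E_y\hat\rho)}\,dy\ \ge\sqrt{F(\rho,\hat\rho)}.
\]
Here I use Uhlmann purifications for the direct sum over outcomes. Cauchy--Schwarz then yields
\[
\int\frac{F(\mathcal E_y\rho,\mathcal E_y\hat\rho)}{p_{\hat\rho}}\,dy\cdot\int p_{\hat\rho}\,dy\ \ge\ \Big(\int\sqrt{F(\mathcal E_y\rho,\mathcal E_y\hat\rho)}\,dy\Big)^2\ge F(\rho,\hat\rho).
\]
Since $\int p_{\hat\rho}\,dy=1$, the one-step inequality follows. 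This is the key structural step: the averaging is against the \emph{true} outcome law, and the normalizing factor $p_{\hat\rho}$ is absorbed by Cauchy--Schwarz.

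\textbf{Step 3: iteration and continuum limit.} By the Markov property of the pair $(\rho,\hat\rho)$, iterating Step 2 gives the discrete submartingale inequality
\[
\mathbb E[F(\rho_{t_{n}},\hat\rho_{t_n})\mid\mathcal F_{t_m}]\ge F(\rho_{t_m},\hat\rho_{t_m}).
\]
The time-discretized filters converge to the SME solutions; standard convergence of such Kraus/Euler schemes gives this in probability, uniformly on compacts. Fidelity is continuous and bounded on $\mathcal S(\mathcal H)^2$, so the inequality passes to the limit by bounded convergence, tested against indicators of $\mathcal F_s^Y$-events.

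\textbf{Main obstacle.} I expect the limit step to be the hardest. It needs the discrete scheme to be a genuine quantum instrument, namely completely positive with integrated outcomes trace preserving, and consistent with the SME to first order, including the $\eta<1$ and $C_k$ terms.

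I would first try the exact construction, in which $\rho_{t+\delta}$ is obtained as the conditional state given the increment of the true continuous record. This makes Step 2 hold exactly for the continuous dynamics, with $\mathcal E_y$ the conditional instrument given by the linear (Zakai-type) SME propagator. With that construction the approximation argument is avoided altogether.
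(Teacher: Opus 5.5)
The paper does not prove this statement; it imports it from \cite{amini2014stability}. There it is obtained from the discrete-time fidelity-submartingale property of Kraus-map filters by discretizing the SME and passing to the limit, which is essentially your main route. Your Step~2 is correct.

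The gap is in Step~3. The one-step inequality holds under the law $\mathbb P^\delta$ of the discrete chain, in which the $k$-th outcome has density $p_{\rho^\delta_k}$ computed from the \emph{discretized} state. ``Convergence in probability, uniformly on compacts'' and ``testing against $\mathcal F^Y_s$-events'' presuppose something else: that the chain is driven by the increments of the true record under $\mathbb P$. Under $\mathbb P$, however, the conditional law of $Y_{(k+1)\delta}-Y_{k\delta}$ given $\mathcal F^Y_{k\delta}$ is not $p_{\rho^\delta_k}$. So the inequality $\mathbb E_{\mathbb P}[(F^\delta_t-F^\delta_s)\mathds 1_A]\ge0$ that you want to pass to the limit was never established. (If instead the chain lives on its own probability space, convergence in probability is meaningless, and you would need joint weak convergence of the record and both filters, tested against functionals continuous in the record.) What is missing is a change of measure. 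Let $\mathbb Q$ be the reference measure under which $Y$ is a standard Wiener process. The discrete inequality then reads $\mathbb E_{\mathbb Q}[Z^\delta_t(F^\delta_t-F^\delta_s)\mathds 1_A]\ge0$ with $Z^\delta_t=\prod_k p_{\rho^\delta_k}(\Delta Y_k)$. You must then prove $Z^\delta_t\to Z_t:=d\mathbb P/d\mathbb Q|_{\mathcal F^Y_t}$ in $L^1(\mathbb Q)$, for instance by Scheff\'e's lemma, which requires the discrete instrument to be exactly trace preserving.

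Your fallback avoids all of this and gives a cleaner, self-contained proof that is genuinely different from the cited one, but one point must be corrected. The filter at time $t$ is in general not a function of the increment $Y_t-Y_s$ alone. The outcome must therefore be the whole path segment $(Y_r-Y_s)_{r\in[s,t]}$, with Wiener measure as reference. Let $\mathcal E_{s,t}$ be the random solution map of the linear SME $d\tilde\rho=\mathcal L_{u_r}(\tilde\rho)\,dr+\sum_\nu\iota_\nu(L_\nu\tilde\rho+\tilde\rho L_\nu^*)\,dY_{\nu,r}$ on $[s,t]$. You need three facts:

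(a) $\mathcal E_{s,t}$ is completely positive for $\mathbb Q$-a.e.\ path and $\mathbb Q$-independent of $\mathcal F^Y_s$, and $\mathbb E_{\mathbb Q}[\mathcal E_{s,t}]$ is the trace-preserving Lindblad propagator.

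(b) $\rho_t$ and $\hat\rho_t$ are the normalizations of $\mathcal E_{s,t}(\rho_s)$ and $\mathcal E_{s,t}(\hat\rho_s)$ for the \emph{same} map. This is exactly where equality of model, input and record enters.

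(c) By the Kallianpur--Striebel formula and the cocycle property, $\mathbb E[X\mid\mathcal F^Y_s]=\mathbb E_{\mathbb Q}[X\,\Tr\mathcal E_{s,t}(\rho_s)\mid\mathcal F^Y_s]$ for bounded $\mathcal F^Y_t$-measurable $X$.

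Then $\mathbb E[F(\rho_t,\hat\rho_t)\mid\mathcal F^Y_s]=\mathbb E_{\mathbb Q}[F(\mathcal E_{s,t}\rho_s,\mathcal E_{s,t}\hat\rho_s)/\Tr\mathcal E_{s,t}(\hat\rho_s)\mid\mathcal F^Y_s]$. Your Step~2 now applies: Cauchy--Schwarz with $\mathbb E_{\mathbb Q}[\Tr\mathcal E_{s,t}(\hat\rho_s)\mid\mathcal F^Y_s]=1$, followed by the pointwise Uhlmann bound, which only needs $\mathbb E_{\mathbb Q}[\mathcal E_{s,t}^*(\mathbf I)]=\mathbf I$. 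This gives the lower bound $F(\rho_s,\hat\rho_s)$ in a single step from $s$ to $t$, with no iteration and no limit.

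The trade-off between the two routes is as follows. The fallback replaces the convergence theory of positivity-preserving discretizations with the linear (Belavkin--Zakai) representation and Girsanov's theorem. The discretization route, in exchange, lets you use the discrete-time theorem as a black box.
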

Thus, the expected fidelity is nondecreasing. Since the fidelity is bounded,
there exists an
\(\mathcal F_\infty^Y\)-measurable random variable
\(F_\infty\in[0,1]\) such that $\lim_{t\to\infty}F(\rho_t,\hat\rho_t)=F_\infty$
 almost surely and in \(L^1\). However, the submartingale property alone does not imply that \(F_\infty=1\). A stronger conclusion follows when the monitored evolution is asymptotically purifying. In the purely diffusive, fully observed setting considered here, the general purification condition reduces to the following assumption.
\begin{assumption}[Purification condition]
\label{ass:purification} 
Every nonzero orthogonal projector \(\Pi\) satisfying $\Pi(L_\nu+L_\nu^*)\Pi=\lambda_\nu\Pi$ with $\lambda_\nu\in\mathbb R$ has rank one.
\end{assumption}
The Fuchs--van de Graaf inequalities~\cite[Chapter 9.2.3]{nielsen2010quantum} imply $1-\sqrt{F(\rho,\sigma)}\leq\frac{1}{2}\|\rho-\sigma\|_1\leq\sqrt{1-F(\rho,\sigma)}.$
Hence, convergence of the fidelity to one is equivalent to convergence in
trace norm.
\begin{proposition}[{\cite[Proposition 3.1]{amini2021asymptotic}}]
\label{thm:filter-convergence-purification}
Assume that the dynamics~\eqref{eq:multi-homodyne-sme} is fully observed, i.e., $g=0$ and $\eta_\nu=1$ for all $\nu\in[r]$, and that the true and estimated filters use the same model and measurement record. Moreover, suppose that \(\operatorname{supp}(\rho_0)\subseteq\operatorname{supp}(\hat\rho_0)\) and the purification condition in
Assumption~\ref{ass:purification} holds.
Then, $\lim_{t\to\infty}F(\rho_t,\hat\rho_t)=1$ almost surely and in \(L^1\). Consequently, $\lim_{t\to\infty}\|\rho_t-\hat\rho_t\|_1=0$ almost surely and in \(L^1\).
\end{proposition}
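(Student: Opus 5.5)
The plan is to combine the fidelity submartingale of Theorem~\ref{thm:fidelity-submartingale} with asymptotic purification of the true conditional state. Under $g=0$ and $\eta_\nu=1$, the filter~\eqref{eq:multi-homodyne-sme} maps pure states to pure states. The assumption $\operatorname{supp}(\rho_0)\subseteq\operatorname{supp}(\hat\rho_0)$ lets us write $\hat\rho_0=\epsilon\rho_0+(1-\epsilon)\sigma_0$ for some $\epsilon\in(0,1]$ and a state $\sigma_0$. Linearity of the unnormalized (Zakai-type) filter then gives the decomposition $\hat\rho_t=p_t\rho_t+(1-p_t)\sigma_t$. Here $\sigma_t$ is the filter started from $\sigma_0$, and $p_t$ is a weight determined by the likelihood ratio of the two observation laws. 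This structure drives the whole argument.

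\textbf{Step 1 (purification).} We show that $\rho_t$ purifies almost surely. The first candidate functional is $\Tr(\rho_t^2)$; the cleaner choice is $\det$-type or $1-\Tr(\rho_t^2)$. Itô's formula for the fully observed SME shows that $\Tr(\rho_t^2)$ is a bounded submartingale. Its drift is
\[
\sum_\nu\bigl\|\mathcal G_{L_\nu}(\rho_t)\bigr\|_2^2-\sum_\nu\bigl(\text{dissipative loss}\bigr)\geq 0,
\]
and this is nonnegative because the efficiency is one. Integrating, the expected time integral of the drift is finite, so the drift vanishes along a subsequence. The $\omega$-limit points of $\rho_t$ therefore lie in the set where the drift is zero. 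I expect this set to consist of states whose support projector $\Pi$ satisfies $\Pi(L_\nu+L_\nu^*)\Pi=\lambda_\nu\Pi$. Assumption~\ref{ass:purification} then forces rank one. An alternative route, probably closer to \cite{amini2021asymptotic}, applies this argument directly to the estimated filter started from $\hat\rho_0>0$ via the change of measure.

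\textbf{Step 2 (from purity to fidelity one).} If $\rho_t$ becomes asymptotically pure, $\rho_t\approx|\psi_t\rangle\langle\psi_t|$, then
\[
F(\rho_t,\hat\rho_t)\approx\langle\psi_t|\hat\rho_t|\psi_t\rangle\geq p_t .
\]
It then suffices to show that $p_t\to 1$, or at least that $\hat\rho_t$ itself purifies onto the same vector. By Bayes' formula, $p_t$ equals $\epsilon$ times the ratio of the unnormalized traces. Under the true law $\mathbb P_{\rho_0}$ its reciprocal is a positive supermartingale, so $p_t$ converges almost surely. Since $\hat\rho_t$ is itself a quantum filter, Step~1 applied under $\mathbb P_{\hat\rho_0}$ shows that it purifies. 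Absolute continuity $\mathbb P_{\rho_0}\ll\mathbb P_{\hat\rho_0}$, which follows from the support condition, transfers this to almost-sure purification under $\mathbb P_{\rho_0}$. A convex combination $p_t\rho_t+(1-p_t)\sigma_t$ with $\liminf p_t>0$ can only become pure if $\rho_t$ and $\hat\rho_t$ approach the same pure state. Hence $F(\rho_t,\hat\rho_t)\to1$ almost surely.

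\textbf{Step 3 (conclusion and main obstacle).} The convergence $F\to1$ in $L^1$ follows by bounded convergence, consistent with the limit $F_\infty$ from Theorem~\ref{thm:fidelity-submartingale}. Trace-norm convergence, both almost surely and in $L^1$, then follows from the Fuchs--van de Graaf inequality $\tfrac12\|\rho-\sigma\|_1\leq\sqrt{1-F}$ and boundedness. The main obstacle is Step~1: characterizing the zero-drift set and upgrading subsequential convergence to full convergence of the purity. I would handle this with a LaSalle-type argument for the bounded submartingale $\Tr(\rho_t^2)$, which converges almost surely by itself, combined with continuity of the drift on the compact set $\mathcal S(\mathcal H)$.
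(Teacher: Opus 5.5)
Your proof is correct, but it takes a different route from the one the paper indicates. The paper gives no argument of its own. It defers to \cite{amini2021asymptotic} and presents the result as follows: the fidelity submartingale of Theorem~\ref{thm:fidelity-submartingale} supplies a limit $F_\infty$, asymptotic purification upgrades this to $F_\infty=1$, and trace-norm convergence then follows from Fuchs--van de Graaf. In that literature, purification comes from the propagator $W_t$ of the linear (unnormalized) filter: its normalized Gram operator $W_t^*W_t/\Tr(W_t^*W_t)$ converges to a rank-one projector, so all filters with compatible initial data collapse onto a common pure state.

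You avoid both the fidelity submartingale and this propagator analysis. Your argument has three pieces.
\begin{itemize}
\item The mixture $\hat\rho_t=p_t\rho_t+(1-p_t)\sigma_t$, with $p_t=\epsilon\,(d\mathbb P_{\rho_0}/d\mathbb P_{\hat\rho_0})|_{\mathcal F_t^Y}$, gives $\hat\rho_t\geq p_t\rho_t$ and $p_t\to p_\infty>0$ $\mathbb P_{\rho_0}$-a.s.
\item Purification of $\hat\rho_t$ is proved under $\mathbb P_{\hat\rho_0}$ and transferred to $\mathbb P_{\rho_0}$ through $\mathbb P_{\rho_0}\leq\epsilon^{-1}\mathbb P_{\hat\rho_0}$.
\item With $P_t$ the top eigenprojector of $\hat\rho_t$, the bound $\Tr((\mathbf I-P_t)\rho_t)\leq(1-\lambda_{\max}(\hat\rho_t))/p_t\to0$ puts both $\rho_t$ and $\hat\rho_t$ close to $P_t$.
\end{itemize}
In particular, purification of the true filter $\rho_t$ is never needed.

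This is elementary and self-contained, but Step~1 relies essentially on the pointwise form of Assumption~\ref{ass:purification}. The It\^o drift of $\Tr(\rho^2)$ is exactly $\sum_\nu\iota_\nu^2\Tr\bigl(\rho(X_\nu-c_\nu)\rho(X_\nu-c_\nu)\bigr)$, with $X_\nu=L_\nu+L_\nu^*$ and $c_\nu=\Tr(X_\nu\rho)$; the Hamiltonian drops out. This drift vanishes exactly when the support projector $\Pi$ of $\rho$ satisfies $\Pi X_\nu\Pi=c_\nu\Pi$ for all $\nu$. If purification were only assumed along the dynamics, this zero set could contain mixed, non-invariant states, and a bare LaSalle step would not suffice. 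The propagator route still applies in that case.

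A few points to fix when you write it up.
\begin{itemize}
\item State the drift by the formula above rather than as ``$\|\mathcal G\|_2^2$ minus a dissipative loss''. The term $\Tr(\rho\,\mathcal D_L(\rho))$ has no definite sign.
\item Integrability of the drift only places one limit point of $\rho_t$ in its zero set, not all $\omega$-limit points. It is the almost-sure convergence of $\Tr(\rho_t^2)$ that then forces its limit to be one.
\item Say explicitly that $1/p_t$ has a finite almost-sure limit, so that $p_\infty>0$.
\item Drop the alternative ``$p_t\to1$'', which fails in general. If $\rho_0$ and $\hat\rho_0$ induce the same output law, then $p_t\equiv\epsilon<1$. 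An example is a qubit with $L=\sigma_z$, $H=0$, $\rho_0=|+\rangle\langle+|$ and $\hat\rho_0=\mathbf I/2$.
\end{itemize}
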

Without the purification condition, the limiting fidelity may be strictly smaller than
one. Under additional identifiability and spectral assumptions, the
Ces\`aro means of the true and estimated filters nevertheless have the same
asymptotic limit; see~\cite{amini2021asymptotic,kummerer2004pathwise}.
In a degenerate QND model, the purification condition generally fails. Indeed, if \(\dim\mathcal H_j>1\), then the QND structure gives \( \Pi_j(L_\nu+L_\nu^*)\Pi_j = 2\Re(l_{\nu,j})\Pi_j \) for every observed channel, while \(\Pi_j\) has rank greater than one. Thus, the measurement record can identify the occupied sector without resolving the state within that sector. Then, full-state reconstruction is neither guaranteed nor required, which motivates the reduced QND filters introduced in the next subsection.

\subsection{Reduced QND filter stability under gain mismatch}
\label{subsec:qnd-diffusive-robust-observer}

Throughout this subsection, assume that the true system satisfies the QND structure in Assumption~\ref{ass:QND} with respect to the decomposition \(\mathcal H = \bigoplus_{j=0}^{d}\mathcal H_j. \) The observer is assumed to preserve the same QND structure: its Hamiltonian and unobserved Lindblad operators are block diagonal with respect to this decomposition, and the sector coefficients \(l_{\nu,j}\) of the observed operators are known. However, the effective measurement gains \(\hat\iota_\nu>0\) may differ from their true values \(\iota_\nu\). Under  the QND structure, the block-diagonal Hamiltonian and unobserved Lindblad operators do not enter the sector-population dynamics. Hence, sector identification can be performed by a reduced filter driven by the actual measurement record, without reconstructing the complete conditional state.

The reduced filter satisfies
\begin{equation}
d\hat q_{j,t}= 2\hat q_{j,t}\sum_{\nu=1}^{r}\hat\iota_\nu\Psi_\nu^j(\hat q_t)\left(dY_{\nu,t}-2\hat\iota_\nu a_\nu(\hat q_t)dt\right), \quad j=0,\ldots,d,
\label{eq:estimated_reduced_qnd_filter}
\end{equation}
where $a_\nu$ and $\Psi^j_\nu$ are defined in~\eqref{eq:a_nuPsi}. By using~\eqref{eq:qnd_reduced_output}, this equation can equivalently be written as
\begin{equation}
\begin{aligned}
d\hat q_{j,t} =  2\hat q_{j,t} \sum_{\nu=1}^{r} \hat\iota_\nu \Psi_\nu^j(\hat q_t) \big[dW_{\nu,t}+2\big(\iota_\nu a_\nu(q_t)-\hat\iota_\nu a_\nu(\hat q_t)\big)dt\big].
\end{aligned}
\label{eq:estimated_reduced_qnd_filter_innovation}
\end{equation}
Since \(\sum_{j} \hat q_j\Psi_\nu^j(\hat q) = 0\),  the normalization \(\sum_{j}\hat q_{j,t}=1\) is preserved. Using the stochastic-exponential arguments~\cite{protter2004stochastic}, for every \(\hat q_0\in\boldsymbol\Delta_{d}^{\circ}\), equation~\eqref{eq:estimated_reduced_qnd_filter} admits a unique global strong solution satisfying \( \hat q_t\in\boldsymbol\Delta_{d}^{\circ}\) for all  $t\geq0$ almost surely.

For \(\nu\in[r]\), set
\[
\chi_\nu :={\iota_\nu}/{\hat\iota_\nu}, \quad \chi:=(\chi_1,\ldots,\chi_r).
\]
For \(i\neq j\), define
\begin{align*}
&\Phi_{i\mid j}^{\nu}(\chi_\nu)
:=
\big(
\Re(l_{\nu,i})
-
\Re(l_{\nu,j})
\big)
\big(
\Re(l_{\nu,i})
+
(1-2\chi_\nu)\Re(l_{\nu,j})
\big)\\
&\Lambda_{i\mid j}(\chi)
:=
2
\sum_{\nu=1}^{r}
\hat\iota_\nu^2
\Phi_{i\mid j}^{\nu}(\chi_\nu),
\end{align*}
We impose the following condition to ensure that sector distinguishability is preserved under gain mismatch.
\begin{assumption}[Parameter-mismatch distinguishability]
\label{ass:parameter-mismatch-distinguishability}
For every \(i\neq j\),
\(
\Lambda_{i\mid j}(\chi)>0.
\)
\end{assumption}
The condition is directional: \(\Lambda_{i\mid j}(\chi)\) quantifies the exponential rejection of observer sector \(i\) when the physical trajectory selects sector \(j\). The following theorem is adapted from~\cite[Theorems 3.1 and 3.3]{liang2024parameter} and reformulated in the notation of the present review.
\begin{theorem}[{\cite[Theorems 3.1 and 3.3]{liang2024parameter}}]
\label{thm:qnd_robust_reduced_observer}
Assume that Assumptions~\ref{ass:QND}, \ref{ass:measurement-distinguishability}, and \ref{ass:parameter-mismatch-distinguishability} hold, and let
\(
\hat q_0\in\boldsymbol\Delta_{d}^{\circ}.
\)
Denote by \(\mathcal R\in\{0,\ldots,d\}\)  the random sector selected by the
true trajectory, as defined in Theorem~\ref{Thm:QSR}. Then, for every
\(j\in\{0,\ldots,d\}\) and every \(i\neq j\),
\begin{equation}
\lim_{t\to\infty}
\frac{1}{t}
\log
\frac{\hat q_{i,t}}{\hat q_{j,t}}
=
-\Lambda_{i\mid j}(\chi)
<0,
\qquad
\mathbb P\text{-a.s. on }\{\mathcal R=j\}.
\label{eq:qnd-observer-pairwise-exponent}
\end{equation}
In particular, for every \(j=0,\ldots,d\),
\begin{equation}
\mathbb P\left(\hat q_{j,t}\to1\right)=\mathbb P(\mathcal R=j)=q_{j,0}=\Tr(\Pi_j\rho_0).
\label{eq:qnd-observer-selection-probability}
\end{equation}
\end{theorem}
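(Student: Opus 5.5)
The plan is to avoid the full nonlinear observer dynamics and work directly with the pairwise log-ratios $\log(\hat q_{i,t}/\hat q_{j,t})$. These are well defined because $\hat q_t\in\boldsymbol\Delta_d^{\circ}$ almost surely. The point is that their drift depends on the observer state only through a term that cancels, leaving a drift driven solely by the true population vector $q_t$. Write $x_{\nu,m}:=\Re(l_{\nu,m})$.

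\textbf{Step 1: It\^o computation.} From \eqref{eq:estimated_reduced_qnd_filter}, and using $d\langle Y_\nu,Y_\mu\rangle_t=\delta_{\nu\mu}dt$, It\^o's formula gives
\[
d\log\hat q_{i,t}=2\sum_\nu\hat\iota_\nu\Psi^i_\nu(\hat q_t)\bigl(dY_{\nu,t}-2\hat\iota_\nu a_\nu(\hat q_t)dt\bigr)-2\sum_\nu\hat\iota_\nu^2\Psi^i_\nu(\hat q_t)^2dt .
\]
Subtract the analogous expression for $j$ and use two identities:
\[
\Psi^i_\nu-\Psi^j_\nu=x_{\nu,i}-x_{\nu,j},\qquad
(\Psi^i_\nu)^2-(\Psi^j_\nu)^2=(x_{\nu,i}-x_{\nu,j})\bigl(x_{\nu,i}+x_{\nu,j}-2a_\nu(\hat q_t)\bigr).
\]
With these, all $a_\nu(\hat q_t)$ terms cancel, and
\[
d\log\frac{\hat q_{i,t}}{\hat q_{j,t}}=\sum_\nu 2\hat\iota_\nu(x_{\nu,i}-x_{\nu,j})\,dY_{\nu,t}-\sum_\nu 2\hat\iota_\nu^2(x_{\nu,i}-x_{\nu,j})(x_{\nu,i}+x_{\nu,j})\,dt .
\]
Next substitute the true output \eqref{eq:qnd_reduced_output}, $dY_{\nu,t}=2\iota_\nu a_\nu(q_t)dt+dW_{\nu,t}$. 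This expresses the log-ratio as a constant-coefficient Brownian martingale $M_t=\sum_\nu 2\hat\iota_\nu(x_{\nu,i}-x_{\nu,j})W_{\nu,t}$ plus the integral of a drift that depends only on $a_\nu(q_s)$.

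\textbf{Step 2: asymptotics on $\{\mathcal R=j\}$.} By Theorem~\ref{Thm:QSR}, which uses Assumptions~\ref{ass:QND} and~\ref{ass:measurement-distinguishability}, we have $q_t\to e_j$ almost surely on $\{\mathcal R=j\}$. Hence $a_\nu(q_t)\to x_{\nu,j}$ there. A Ces\`aro argument then gives
\[
\frac1t\int_0^t a_\nu(q_s)\,ds\to x_{\nu,j}\quad\text{on }\{\mathcal R=j\}.
\]
The strong law for Brownian motion gives $M_t/t\to0$ almost surely. Therefore the time-averaged drift converges to
\[
\sum_\nu 2\hat\iota_\nu^2(x_{\nu,i}-x_{\nu,j})\bigl(2\chi_\nu x_{\nu,j}-x_{\nu,i}-x_{\nu,j}\bigr)=-\sum_\nu2\hat\iota_\nu^2\Phi^\nu_{i\mid j}(\chi_\nu)=-\Lambda_{i\mid j}(\chi),
\]
where we used $\hat\iota_\nu\iota_\nu=\hat\iota_\nu^2\chi_\nu$. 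This proves \eqref{eq:qnd-observer-pairwise-exponent}. Assumption~\ref{ass:parameter-mismatch-distinguishability} makes the exponent negative.

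\textbf{Step 3: selection probability.} On $\{\mathcal R=j\}$, Step 2 gives $\hat q_{i,t}/\hat q_{j,t}\to0$ for every $i\ne j$. Since $\hat q_{j,t}\le1$, each $\hat q_{i,t}\to0$, and the normalization $\sum_m\hat q_{m,t}=1$ then forces $\hat q_{j,t}\to1$. Thus $\{\mathcal R=j\}\subseteq\{\hat q_{j,t}\to1\}$ up to null sets. The events $\{\hat q_{j,t}\to1\}$, $j=0,\dots,d$, are pairwise disjoint, and the events $\{\mathcal R=j\}$ partition the sample space almost surely. Hence
\[
1=\sum_j\mathbb P(\mathcal R=j)\le\sum_j\mathbb P(\hat q_{j,t}\to1)\le1,
\]
so each inclusion is an equality in probability. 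Equation \eqref{eq:qnd-sector-selection-probability} then yields \eqref{eq:qnd-observer-selection-probability}.

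\textbf{Main obstacle.} I expect the only delicate step to be Step 1. One must carry the quadratic-variation correction of the $dY$-driven SDE correctly and check that the observer-dependent terms $a_\nu(\hat q_t)$ cancel exactly. Without this cancellation, the drift would involve the unknown limit of $\hat q_t$ and the argument would become circular.
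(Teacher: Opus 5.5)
Your proof is correct, but it handles the drift of the log-ratio differently from the paper. Both arguments rest on the same It\^o computation for $\log(\hat q_{i,t}/\hat q_{j,t})$, in which the observer-dependent terms $a_\nu(\hat q_t)$ cancel. Your identities check out, and so does the resulting limit $-\sum_\nu 2\hat\iota_\nu^2\Phi^\nu_{i\mid j}(\chi_\nu)=-\Lambda_{i\mid j}(\chi)$.

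The paper does not stay under $\mathbb P$. For $q_{j,0}>0$ it defines $\mathbb Q^j$ through the density $q_{j,t}/q_{j,0}$ and identifies $\mathbb Q^j=\mathbb P(\cdot\mid\mathcal R=j)$ using the $L^1$ convergence from Theorem~\ref{Thm:QSR}. It then applies Girsanov, under which $dY_{\nu,t}=2\iota_\nu\Re(l_{\nu,j})\,dt+d\overline W^{\,j}_{\nu,t}$. The log-ratio becomes \emph{exactly} $\log(\hat q_{i,0}/\hat q_{j,0})-\Lambda_{i\mid j}(\chi)t$ plus a constant-coefficient $\mathbb Q^j$-Brownian motion, and the strong law under $\mathbb Q^j$ concludes.

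You instead keep the physical measure. You use the pathwise convergence $q_t\to e_j$ on $\{\mathcal R=j\}$ and a Ces\`aro average to show the drift is asymptotically $-\Lambda_{i\mid j}(\chi)$, then apply the strong law to $W$ under $\mathbb P$. Your route is more elementary: it needs no Girsanov theorem, no extension of $\mathbb Q^j$ to $\mathcal F^Y_\infty$, and no separate treatment of sectors with $q_{j,0}=0$.

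The paper's route yields a stronger intermediate statement. Conditionally on $\{\mathcal R=j\}$ the log-ratio is a Brownian motion with constant drift, which makes exact distributional, hitting-probability, or large-deviation estimates available, not only the almost-sure exponent.

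Your Step~3 is also more complete than the paper's. The paper passes from ``$\hat q_{j,t}\to1$ a.s.\ on $\{\mathcal R=j\}$'' directly to equality of probabilities. Your disjointness-and-partition argument makes the reverse inclusion explicit, whereas the paper leaves it implicit.
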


\begin{proof}
Fix \(j\) such that \(q_{j,0}>0\). Since \(q_{j,t}\) is a bounded nonnegative martingale, the probability measures \(\mathbb Q^j\) defined by $\left.{d\mathbb Q^j}/{d\mathbb P}\right|_{\mathcal F_t^Y}={q_{j,t}}/{q_{j,0}}$ form a consistent family~\cite{benoist2014large}. By Theorem~\ref{Thm:QSR}, $q_{j,t}$ converges to $\mathds 1_{\{\mathcal R=j\}}$ in \(L^1\). Hence, \(\mathbb Q^j\) extends to \(\mathcal F_\infty^Y\) with ${d\mathbb Q^j}/{d\mathbb P}={\mathds 1_{\{\mathcal R=j\}}}/{q_{j,0}}$, which implies $\mathbb Q^j=\mathbb P(\,\cdot\mid\mathcal R=j).$ By Girsanov's theorem~\cite{ikeda2014stochastic}, the processes
\begin{equation*}
\overline W_{\nu,t}^{\,j}
:=
W_{\nu,t}
-
2\iota_\nu
\int_0^t
\Psi_\nu^j(q_s)\,ds,
\qquad
\nu\in[r],
\end{equation*}
are independent standard Wiener processes under \(\mathbb Q^j\).
By applying It\^o's formula, for \(i\neq j\), we obtain
\begin{equation*}
\begin{aligned}
\log \frac{\hat q_{i,t}}{\hat q_{j,t}} =\log\frac{\hat q_{i,0}}{\hat q_{j,0}} -\Lambda_{i\mid j}(\chi)t +2\sum_{\nu=1}^{r} \hat\iota_\nu [\Re(l_{\nu,i})-\Re(l_{\nu,j})]
\overline W_{\nu,t}^{\,j}.
\end{aligned}
\end{equation*}
Since ${\overline W_{\nu,t}^{\,j}}/{t}$ converges to zero $\mathbb Q^j$ almost surely, we obtain
\[
 \lim_{t\to\infty}\frac{1}{t}\log\frac{\hat q_{i,t}}{\hat q_{j,t}}=-\Lambda_{i\mid j}(\chi),\quad \mathbb Q^j\text{-a.s}.
\]
Under the parameter-mismatch distinguishability condition in Assumption~\ref{ass:parameter-mismatch-distinguishability}, every ratio \(\hat q_{i,t}/\hat q_{j,t}\) for \(i\neq j\) converges exponentially to zero on \(\{\mathcal R=j\}\). Using
\(
\hat q_{j,t}=(1+\sum_{i\neq j}{\hat q_{i,t}}/{\hat q_{j,t}})^{-1},
\)
we conclude that $\hat q_{j,t}$ converges to one $\mathbb{P}$-almost surely on $\{\mathcal R=j\}.$
Then, Theorem~\ref{Thm:QSR} yields
\(
\mathbb P(\mathcal R=j)=q_{j,0}=\Tr(\Pi_j\rho_0),
\)
which proves~\eqref{eq:qnd-observer-selection-probability}.
\end{proof}
In the calibrated case,  $\chi_\nu=1$ for all $\nu\in[r]$ and \(\Phi_{i\mid j}^{\nu}(1)=[\Re(l_{\nu,i})-\Re(l_{\nu,j})]^2.\)
Therefore, we deduce $\Lambda_{i\mid j}(1)=2\sum_{\nu}\iota_\nu^2[\Re(l_{\nu,i})-\Re(l_{\nu,j})]^2>0$ by the measurement-distinguishability condition. Exact calibration is thus unnecessary: the reduced filter
remains sector-consistent whenever the pairwise rejection rates in the parameter-mismatch distinguishability condition in
Assumption~\ref{ass:parameter-mismatch-distinguishability} remain positive.
The theorem establishes convergence of the estimated sector populations,
not convergence of the full density operators. This is sufficient for QND
feedback laws depending only on the sector probabilities. The reduced filter
is represented by \(d+1\) populations, subject to one normalization
constraint, rather than by a full density matrix.

Filter stability under a prescribed input does not by itself yield a separation principle for nonlinear quantum feedback~\cite{bouten2008separation,gough2017non}. Once the estimated state is used in the control law, the physical and observer states form a coupled closed-loop stochastic system, which can be analyzed directly.

\section{Measurement-based feedback stabilization}
\label{sec:feedback-stabilization}

Open-loop QND measurements drive the conditional state toward the union of the invariant sectors associated with the measurement decomposition. The limiting sector is selected randomly, with probabilities determined by the initial sector populations; see Theorem~\ref{Thm:QSR}. Measurement-based feedback modifies this asymptotic behavior by combining the information and local contraction induced by the measurement with a control Hamiltonian that destabilizes undesired invariant sectors and selects a prescribed target. We first consider the ideal state-feedback setting, in which the initial state and all model parameters are known. The conditional state \(\rho_t\) can be reconstructed exactly from the observation record and used in a feedback law \(u_t=u(\rho_t).\) Although \(\rho_t\) is not directly measured, it is adapted to the observation filtration and serves as the information state available to the controller. This architecture is commonly referred to as Bayesian feedback.

The stabilization objective is to render a prescribed pure state
\(\boldsymbol\rho_{\bar n}\), or more generally the target set
\(\mathcal I(\mathcal H_0)\), globally attractive and stable for the
closed-loop SME. We first review global Lyapunov constructions for
low-dimensional systems, then switching strategies combining global escape
with local stabilization, and finally almost-sure exponential stabilization
of QND targets.

\subsection{Lyapunov-based feedback design for low-dimensional systems}
\label{subsec:early-lyapunov-feedback}

Early measurement-based feedback studies showed how continuous monitoring
can be combined with Hamiltonian control to exploit measurement back-action
and steer the conditional state toward a prescribed target. In particular,
continuous measurements were combined with state-based feedback
to deterministically prepare Dicke states in~\cite{stockton2004deterministic}.
A stochastic control-theoretic formulation of feedback-controlled quantum
state reduction was subsequently developed by van Handel, Stockton, and
Mabuchi~\cite{van2005feedback}. Their approach combines measurement-induced
state reduction with Hamiltonian destabilization of undesired equilibria,
while a stochastic Lyapunov argument establishes convergence to a prescribed
measurement eigenstate.

Consider a qubit under continuous measurement of \(L=\sigma_z\), with
feedback acting through the noncommuting control Hamiltonian \(H_1=\sigma_y\), up to
normalization. In the absence of feedback, the eigenstates $\boldsymbol\rho_g=\operatorname{diag}(1,0)$ and $\boldsymbol\rho_e=\operatorname{diag}(0,1)$ are equilibria of the QND open-loop dynamics, and the trajectory converges to either state with probabilities determined by the corresponding initial populations; see Theorem~\ref{Thm:QSR}.

Suppose that \(\boldsymbol\rho_e\) is the prescribed target. A natural
feedback design satisfies $u(\boldsymbol\rho_e)=0$ and $u(\boldsymbol\rho_g)\neq0.$
The first condition preserves the target equilibrium. Since $[\sigma_y,\boldsymbol\rho_g]\neq0,$ the second activates the Hamiltonian at \(\boldsymbol\rho_g\), which is  no longer a closed-loop
equilibrium. However, this equilibrium calculation alone does not imply global convergence, since
other invariant or recurrent sets may remain. To address this issue, we construct a function \(V\) such that
$V(\rho)\geq0$ and $V(\rho)=0$ if and only if $\rho=\boldsymbol\rho_e$, $\mathscr L_uV(\rho)\leq0,$ where \(\mathscr L_u\) denotes the infinitesimal generator of the closed-loop SME. If the largest invariant subset of
\(
\{\rho\in\mathcal S(\mathcal H):\mathscr L_uV(\rho)=0\}
\)
is \(\{\boldsymbol\rho_e\}\),  the stochastic LaSalle invariance principle
can be used to establish convergence to the prescribed target
~\cite{kushner1968concept,mao1999stochastic}.

For a qubit, the Bloch-ball representation converts the stability conditions
into polynomial inequalities on a low-dimensional semialgebraic set.
Sum-of-squares relaxations and semidefinite programming can then assist in the
construction and verification of polynomial Lyapunov certificates~\cite{van2005feedback}. Related
symmetry-reduced constructions were subsequently used for entangled-state
preparation in two-spin systems~\cite{yamamoto2007feedback}.

The principal limitation of this approach is scalability. The global
Lyapunov certificate is model dependent, and its computational construction
becomes increasingly difficult as the system dimension grows. This motivated
the switching strategies considered next, in which global escape and local
stabilization are established separately.

\subsection{Switching feedback via global escape and local stabilization}
\label{subsec:support-theorem-switching-feedback}

In~\cite{mirrahimi2007stabilizing}, a switching strategy is introduced to avoid the construction of a single strict Lyapunov
function on the entire state space. The method combines a global escape
mechanism, established through the support theorem~\cite{stroock1972support}, with a local stabilizing
feedback analyzed by the stochastic LaSalle invariance principle.

Consider the \(N\)-level angular-momentum system with QND measurement operator
\[
J_z
=
\operatorname{diag}(J,J-1,\ldots,-J),
\qquad
J=(N-1)/2,
\]
and control Hamiltonian \(H_1=J_y\), which  couples adjacent eigenstates of \(J_z\),
\[
J_y=
\left[\begin{smallmatrix}
0&-ic_1 &&&\\
ic_1&0&-ic_2&&\\
&\ddots&\ddots&\ddots&\\
&&ic_{N-2}&0&-ic_{N-1}\\
&&&ic_{N-1}&0
\end{smallmatrix}\right],
\qquad
c_m=\frac12\sqrt{m(N-m)} .
\]
Let
\(
\bar E:=\{\boldsymbol\rho_1,\dots,\boldsymbol\rho_N\}
\)
denote the rank-one eigenprojectors of \(J_z\), and define
\(
q_n(\rho)
:=
\operatorname{Tr}(\boldsymbol\rho_n\rho).
\)
Every element of \(\bar E\) is an equilibrium of the QND open-loop  dynamics,
and the conditional state converges to a random element of \(\bar E\); see
Theorem~\ref{Thm:QSR}. The feedback objective is to select a prescribed
target \(\boldsymbol\rho_{\bar n}\in\bar E\).

\paragraph{Local stabilizing mode.}
Near the target, consider the candidate Lyapunov function
\(
V_{\bar n}(\rho):=1-q_{\bar n}(\rho)
\)
and the feedback law
\begin{equation*}
u_{\bar n}(\rho):=-\Tr\bigl(i[J_y,\rho]\boldsymbol\rho_{\bar n}\bigr).
\end{equation*}
Since
\(
u_{\bar n}(\boldsymbol\rho_{\bar n})=0,
\)
the target remains an equilibrium. Moreover, the infinitesimal generator of
the locally controlled SME satisfies
$
\mathscr L_{u_{\bar n}}V_{\bar n}(\rho)=-u_{\bar n}(\rho)^2\leq0.
$
This estimate gives local stability in probability, although its
zero-generator set may contain points other than the target.
For the invariance analysis, introduce
\(
\widetilde V_{\bar n}(\rho):=1-q_{\bar n}(\rho)^2.
\)
Since
\(
J_z\boldsymbol\rho_{\bar n} = \lambda_{\bar n}\boldsymbol\rho_{\bar n},
\) by a straightforward computation, we obtain
\begin{equation*}
\mathscr L_{u_{\bar n}} \widetilde V_{\bar n}(\rho) =-2q_{\bar n}(\rho)u_{\bar n}(\rho)^2-4\eta\gamma q_{\bar n}(\rho)^2 \big(\lambda_{\bar n}-\Tr(J_z\rho)\big)^2\leq0.
\end{equation*}
For a suitable \(\delta>0\), the largest invariant subset of
\(
\{\rho:q_{\bar n}(\rho)>{\delta}/{2},\mathscr L_{u_{\bar n}}\widetilde V_{\bar n}(\rho)=0\}
\)
is \(\{\boldsymbol\rho_{\bar n}\}\). Then, the stochastic LaSalle invariance
principle yields convergence to the target for trajectories that remain
in the local-control region.

\paragraph{Global escape mode.}
Away from the target, a constant nonzero control, normalized as \(u\equiv1\),
is applied. Its role is to drive the process away from the zero-target
population face
\[
\mathcal Z_{\bar n} := \left\{ \rho\in\mathcal S(\mathcal H): q_{\bar n}(\rho)=0 \right\},
\]
which contains all undesired measurement eigenstates.

The support theorem relates the support of the SME trajectories to the
solutions of the associated deterministic control system obtained from its
Stratonovich form~\cite{ikeda2014stochastic}. For the angular-momentum model, the simple spectrum of
\(J_z\) and the connected nearest-neighbor couplings generated by \(J_y\)
yield accessibility of states with strictly positive target population from
\(\mathcal Z_{\bar n}\).
Then, the support theorem and continuous dependence on the initial condition
give a positive probability of leaving a sufficiently small neighborhood of
\(\mathcal Z_{\bar n}\) within finite time. Compactness of $\{\rho:
q_{\bar n}(\rho)\leq\delta/2\}$ yields constants \(T_\delta>0\) and \(\zeta_\delta>0\) such that
\begin{equation*}
 \inf_{\rho\in\{q_{\bar n}(\rho)\leq\delta/2\}}\mathbb P_\rho\left(\sup_{0\leq t\leq T_\delta}q_{\bar n}(\rho_t)\geq\delta\right)\geq\zeta_\delta.
\end{equation*}
Iteration of the above estimate through the strong Markov property~\cite{ikeda2014stochastic}
gives almost-sure entrance into the local-control region and a uniform bound
on the corresponding expected hitting time.

\paragraph{Hysteresis switching.}
Define $\mathcal R_{\mathrm{loc}}:=\{\rho:q_{\bar n}(\rho)\geq\delta\}$ and $\mathcal R_{\mathrm{esc}}:=\{\rho:q_{\bar n}(\rho)\leq\delta/2\}$, and the buffer region
\(
\mathcal B_\delta:=\left\{\rho:{\delta}/{2}<q_{\bar n}(\rho)<\delta\right\}.
\)
Introduce a mode variable $\kappa_t \in \{\mathrm{loc},\mathrm{esc}\}$. The state-based hysteresis switching law is defined as
\begin{equation}
\label{eq:MVH-switching-feedback}
u_t=
\begin{cases}
u_{\bar n}(\rho_t),& \kappa_t=\mathrm{loc},\\
1,& \kappa_t=\mathrm{esc}.
\end{cases}, \qquad 
\kappa_t=\begin{cases}
\mathrm{loc}, & \rho_t\in\mathcal R_{\mathrm{loc}},\\
\mathrm{esc}, & \rho_t\in\mathcal R_{\mathrm{esc}},\\
\kappa_{t^-}, & \rho_t\in\mathcal B_\delta.
\end{cases}
\end{equation}
If the initial state lies in \(\mathcal B_\delta\), either mode may be chosen
initially. The buffer region prevents arbitrarily rapid switching and permits
the closed-loop process to be constructed by concatenating the two smooth
stochastic dynamics.

The escape estimate ensures repeated entrance into the local-control region.
Local stability provides a strictly positive probability that, after such an
entrance, the trajectory remains in the larger region
\(\{q_{\bar n}>\delta/2\}\) and converges to the target. Repeated application
of the strong Markov property shows that one of these stabilization attempts
succeeds almost surely.

\begin{theorem}[{\cite[Theorem 4.2]{mirrahimi2007stabilizing}}]
\label{thm:MVH-switching-GAS}
Consider the \(N\)-level angular-momentum SME~\eqref{eq:single-homodyne-sme} with QND measurement operator
\(J_z\), control Hamiltonian \(J_y\). For every prescribed eigenstate
\(\boldsymbol\rho_{\bar n}\in\bar E\), there exists \(\delta>0\) such that
the hysteresis feedback~\eqref{eq:MVH-switching-feedback} renders
\(\boldsymbol\rho_{\bar n}\) GAS almost surely. In particular, for every
\(\rho_0\in\mathcal S(\mathcal H)\),
\begin{equation*}
\mathbb P_{\rho_0}\left(\lim_{t\to\infty}\|\rho_t-\boldsymbol\rho_{\bar n}\|_1=0\right)=1,\quad 
\lim_{t\to\infty}\mathbb E_{\rho_0}\left[\|\rho_t-\boldsymbol\rho_{\bar n}\|_1\right]=0.
\end{equation*}
\end{theorem}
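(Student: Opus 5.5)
The proof will follow the three-part architecture already outlined: local stability with a positive probability of success, a uniform escape estimate from $\mathcal R_{\mathrm{esc}}$, and a renewal argument through the strong Markov property. I will first construct the closed-loop process by concatenating the two smooth SMEs at the switching times. These are $\tau_1=\inf\{t:\rho_t\in\mathcal R_{\mathrm{loc}}\}$ in escape mode and $\sigma_1=\inf\{t:\rho_t\in\mathcal R_{\mathrm{esc}}\}$ in local mode, and so on. Because of the buffer $\mathcal B_\delta$ and continuity of paths, consecutive switching times are separated by a crossing of the band $\delta/2<q_{\bar n}<\delta$. Almost surely there are then only finitely many switches on bounded intervals, and $(\rho_t,\kappa_t)$ is a strong Markov process.

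\emph{Local step.} I will fix $\delta$ close to $1$, which gives two things. First, on $\{q_{\bar n}>\delta/2\}$ the only invariant set inside $\{\mathscr L_{u_{\bar n}}\widetilde V_{\bar n}=0\}$ is $\{\boldsymbol\rho_{\bar n}\}$. The reason is that this zero set forces $u_{\bar n}(\rho)=0$ and $\Tr(J_z\rho)=\lambda_{\bar n}$. Invariance then propagates these constraints, via the nearest-neighbour structure of $J_y$ and the simple spectrum of $J_z$, to $\rho=\boldsymbol\rho_{\bar n}$. Second, started in $\mathcal R_{\mathrm{loc}}$, the supermartingale $V_{\bar n}(\rho_t)=1-q_{\bar n}(\rho_t)$ satisfies $\mathscr L_{u_{\bar n}}V_{\bar n}=-u_{\bar n}^2\le 0$. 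Doob's maximal inequality then gives
\[
\mathbb P\Bigl(\sup_t V_{\bar n}(\rho_t)\ge 1-\delta/2\Bigr)\le \frac{1-\delta}{1-\delta/2}=:1-p,\qquad p>0,
\]
uniformly over $\mathcal R_{\mathrm{loc}}$. On the complementary event the trajectory never leaves local mode. There the stochastic LaSalle principle, applied to $\widetilde V_{\bar n}$ on the region $\{q_{\bar n}>\delta/2\}$, yields $\rho_t\to\boldsymbol\rho_{\bar n}$.

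\emph{Escape step.} I will use the displayed support-theorem estimate with the constants $T_\delta,\zeta_\delta$. Iterating it with the Markov property at times $kT_\delta$ gives $\mathbb P(\tau>kT_\delta)\le(1-\zeta_\delta)^k$ for the hitting time $\tau$ of $\mathcal R_{\mathrm{loc}}$ in escape mode. Hence every escape phase ends almost surely, with uniformly bounded mean duration. The key input is accessibility of $\{q_{\bar n}>0\}$ from $\mathcal Z_{\bar n}$ under the Stratonovich control system with $u\equiv1$. I expect this to be the main obstacle, because $\mathcal Z_{\bar n}$ contains the undesired eigenstates, which are equilibria of the diffusion part. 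There I will check that the drift term $-i[J_y,\rho]$ has a nonzero component transversal to $\mathcal Z_{\bar n}$. Concretely, for $\rho=\boldsymbol\rho_m$, $m\ne\bar n$, $J_y$ transfers population to $m\pm1$, and iterating along the chain reaches $\bar n$. I will then conclude with continuity of solutions in the initial data and compactness of $\mathcal R_{\mathrm{esc}}$.

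\emph{Renewal and conclusion.} Each entry into $\mathcal R_{\mathrm{loc}}$ starts an attempt that succeeds with probability at least $p$, independently of the past by the strong Markov property. Each failure returns the process to $\mathcal R_{\mathrm{esc}}$, from which re-entry is almost sure. The probability that the first $k$ attempts all fail is at most $(1-p)^k\to0$, so almost surely some attempt succeeds and $\rho_t\to\boldsymbol\rho_{\bar n}$. Stability in probability follows from the local supermartingale bound, since near the target the controller is in local mode. Finally, $\|\rho_t-\boldsymbol\rho_{\bar n}\|_1\le2$, so dominated convergence gives convergence of the mean.
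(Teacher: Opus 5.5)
\textbf{Verdict: there is a genuine gap.} Your architecture is the one the paper sketches: a local attempt that succeeds with uniformly positive probability, uniform escape from $\mathcal R_{\mathrm{esc}}$, strong-Markov renewal, and dominated convergence. However, fixing $\delta$ close to $1$ breaks the escape step, and the local step does not need it.

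\textbf{The local step works for every $\delta$.} Doob's inequality gives $p=\delta/(2-\delta)>0$ for every $\delta\in(0,1)$. Moreover, for every $\delta>0$ the largest invariant subset of $\{\mathscr L_{u_{\bar n}}\widetilde V_{\bar n}=0\}\cap\{q_{\bar n}>\delta/2\}$ is $\{\boldsymbol\rho_{\bar n}\}$. Indeed, if $\Tr(J_z\rho_t)\equiv\lambda_{\bar n}$ on an invariant set, its diffusion coefficient $2\sqrt{\eta\gamma}\bigl(\Tr(J_z^2\rho)-\Tr(J_z\rho)^2\bigr)$ must vanish. Then $\rho$ is an eigenprojector of $J_z$, hence equal to $\boldsymbol\rho_{\bar n}$.

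\textbf{The escape step fails for $\delta$ near $1$.} The escape estimate needs, from \emph{every} $\rho$ with $q_{\bar n}(\rho)\le\delta/2$, a support path entering $\{q_{\bar n}>\delta\}$. You propose to prove accessibility of $\{q_{\bar n}>0\}$ from $\mathcal Z_{\bar n}$, together with continuity in the initial data and compactness. That only yields a level $\epsilon>0$, a time $T$, and a neighbourhood $\{q_{\bar n}<\epsilon'\}$ of $\mathcal Z_{\bar n}$ from which support paths reach $\{q_{\bar n}>\epsilon\}$ before $T$. This gives the displayed bound only if $\delta/2<\epsilon'$ and $\delta\le\epsilon$, i.e.\ for $\delta$ \emph{small}. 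That is why the theorem asserts only the existence of some $\delta>0$, and why the paper's sketch speaks of leaving a sufficiently small neighbourhood of $\mathcal Z_{\bar n}$.

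With $\delta\approx1$ you would instead need reachability of an arbitrarily small neighbourhood of $\boldsymbol\rho_{\bar n}$ from all states with $q_{\bar n}\le 1/2$. You give no argument for this, and it is delicate for non-extremal $\bar n$. The measurement direction changes $q_{\bar n}$ at rate $2\sqrt{\eta\gamma}\,q_{\bar n}\bigl(\lambda_{\bar n}-\Tr(J_z\rho)\bigr)v$, which vanishes when the populations above and below $\bar n$ balance. This is the cancellation discussed after Assumption~\ref{ass:one-sided-separation}. The repair is to fix the escape constants first and then take $\delta$ small.

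\textbf{A smaller point.} Your accessibility check must cover all of $\mathcal Z_{\bar n}$, not only the eigenstates $\boldsymbol\rho_m$. Write $\boldsymbol\rho_{\bar n}=|\bar n\rangle\langle\bar n|$. For $\rho\in\mathcal Z_{\bar n}$, positivity gives $\rho|\bar n\rangle=0$, so $q_{\bar n}$ has zero first derivative. Under $u\equiv1$ one finds $q_{\bar n}(\rho_t)=t^2\langle\bar n|J_y\rho J_y|\bar n\rangle+o(t^2)$. The dephasing, measurement, support-control and Stratonovich terms act multiplicatively on $q_{\bar n}$ and on the coherences $\rho_{m\bar n}$, so they do not enter at this order. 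When $\rho J_y|\bar n\rangle=0$, this leading term vanishes. You then need the higher-order chain argument for general mixed states in $\mathcal Z_{\bar n}$, not just for the eigenstates $\boldsymbol\rho_m$.
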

In contrast with the global Lyapunov construction of
Section~\ref{subsec:early-lyapunov-feedback}, this method applies to
arbitrary finite-dimensional angular-momentum systems without requiring a
single Lyapunov function valid on the entire state space.

In~\cite{ticozzi2012stabilization}, this strategy is extended to stochastic quantum systems with unobserved
Markovian channels, non-Hermitian measurement operators, and target
subspaces. Their analysis first characterizes whether the dissipative
dynamics, possibly supplemented by a time-independent Hamiltonian, can render
the target subspace attractive. When such an open-loop design is unavailable,
a filtering-based switching controller combines a global escape mode with a
local stabilizing mode.
This extension formulates the switching method in terms of invariant
subspaces rather than a particular angular-momentum basis and clarifies the
respective roles of dissipation, Hamiltonian control, and continuous
measurement. The resulting conclusions are qualitative; an explicit convergence exponent is not obtained. The next subsection addresses this issue for QND targets through a global recurrence estimate and a local exponential Lyapunov argument.

\subsection{A unified recurrence--contraction framework for QND state feedback}
\label{subsec:continuous-feedback-GES-QND}

The preceding switching strategy separates the global and local components of stabilization. A similar recurrence--contraction structure applies to a broader class of QND systems: Hamiltonian feedback removes non-target invariant obstructions and provides global recurrence, whereas the measurement dynamics generate local exponential contraction. This subsection reformulates the corresponding results of \cite{liang2019exponential,liang2021GHZ,liang2024model} under a common QND notation and a transparent set of sufficient conditions.

Let $\mathcal H=\mathcal H_0\oplus\mathcal H_R$ with $\mathcal H_R
:=\bigoplus_{j=1}^{d}\mathcal H_j$ with orthogonal projectors \(\Pi_j\), and denote $q_j(\rho):=\Tr(\Pi_j\rho)$ and $s(\rho):=1-q_0(\rho)$. The target set is \(\mathcal I(\mathcal H_0)\).   We introduce  \(d_0(\rho):=\|\rho-\Pi_0\rho\Pi_0\|_1\). Since the state space is
finite dimensional and \(\rho\geq0\), there exists a constant \(c_0>0\)
such that $d_0(\rho)^2\leq c_0 (1-q_0(\rho))$ for all $\rho\in\mathcal S(\mathcal H)$.
In particular, $d_0(\rho)=0$ if and only if $\rho\in\mathcal I(\mathcal H_0)$.
For \(\varepsilon>0\), define
\[
B_\varepsilon(\mathcal H_0):=\{\rho\in\mathcal S(\mathcal H):d_0(\rho)<\varepsilon\},\quad\tau_\varepsilon:=\inf\{t\geq0:\rho_t\in B_\varepsilon(\mathcal H_0)\}.
\]
The proof combines two properties. First, every target neighborhood is reached
almost surely from every initial state. Second, once the trajectory is close
to the target, the QND measurement produces exponential contraction, while
the feedback Hamiltonian is sufficiently weak not to alter the local
exponent. The strong Markov property links these global and local estimates.

\paragraph{Target-compatible QND structure.}

We assume throughout that \(\gamma_\nu>0\) and $\eta_{\nu}\in(0,1]$ for all \(\nu\in[r]\), and assume that the QND structure in Assumption~\ref{ass:QND} holds. This ensures the invariance of \(\mathcal I(\mathcal H_0)\) without control input by Lemma~\ref{lem:open_loop_invariance}.

For \(j\in[d]:=\{1,\ldots,d\}\), define
\[
\delta_{\nu,j}:=\Re(l_{\nu,0})-\Re(l_{\nu,j}).
\]
We impose the following one-sided measurement-separation condition:
\begin{assumption}[One-sided separation]
\label{ass:one-sided-separation}
For every \(j\in[d]\), there exists \(\nu\in[r]\) such that
\(\delta_{\nu,j}\neq0\). Moreover, for each \(\nu\in[r]\), either $\delta_{\nu,j}\geq0$ for all $j\in[d]$ or $\delta_{\nu,j}\leq0$ for all $j\in[d]$.
\end{assumption}
The first requirement separates every non-target sector from the target. The second requires the target measurement value to be extremal in each observed channel and prevents cancellation among the contributions of different non-target populations. In what follows, we restrict attention to targets satisfying the one-sided separation condition. This class includes the extremal angular-momentum eigenstates \(\boldsymbol\rho_1\) and \(\boldsymbol\rho_N\), for which the ordering of the eigenvalues of \(J_z\) implies the required one-sided sign condition. For non-extremal targets, the coefficients \(\delta_{\nu,j}\) may have different signs, and convex combinations of the non-target populations may cancel the measurement direction associated with the target population. Such targets can be treated under alternative nondegeneracy or transversality conditions; see \cite{liang2019exponential,liang2021GHZ,liang2024model}. We adopt Assumption~\ref{ass:one-sided-separation} here since it yields a transparent coercivity condition for the tutorial development below.

Let 
\[
 A_\nu(\pi):=\sum_{j=1}^{d}\delta_{\nu,j}\pi_j,
\]
and define the measurement-coercivity constant
\begin{equation*}
\Lambda_0:= \inf_{\pi\in\boldsymbol\Delta_{d-1}}\sum_{\nu=1}^{r}\iota_\nu^2A_\nu(\pi)^2,  \text{ with }
\boldsymbol\Delta_{d-1}=\{\pi\in[0,1]^d:\textstyle\sum_{j=1}^{d}\pi_j=1\}.
\end{equation*}
The one-sided separation condition in Assumption~\ref{ass:one-sided-separation} implies \(\Lambda_0>0\).

\paragraph{Escape from the zero-target-population face.}
To establish recurrence to the target, one must first exclude trajectories confined to states with zero target population.
Consider the deterministic support system associated with the Stratonovich form of the closed-loop SME~\eqref{eq:multi-homodyne-sme}:
\begin{equation}
\dot\rho_t^v
=
\widetilde{\mathcal L}_{u}(\rho_t^v)
+
\sum_{\nu=1}^{r}
\iota_\nu
\mathcal G_{L_\nu}(\rho_t^v)v_\nu(t),
\label{eq:QND-support-system}
\end{equation}
where \(v=(v_1,\ldots,v_r)\) is locally bounded and 
$$
\widetilde{\mathcal L}_{u}(\rho):=\mathcal L_{u(\rho)}(\rho)-\frac12\sum_{\nu=1}^{r}\iota_\nu^2\nabla\mathcal G_{L_\nu}(\rho)\bigl[\mathcal G_{L_\nu}(\rho)\bigr].
$$
Here, \(\nabla\mathcal G_{L_\nu}(\rho)[X]\) denotes the Fr\'echet derivative in
the direction \(X\).

Define
\[
\mathcal Z_0 := \{\rho\in\mathcal S(\mathcal H):q_0(\rho)=0\}.
\]
Since the measurement operators are diagonal with respect to the measurement decomposition, for every \(\rho\in\mathcal Z_0\) and \(\nu\in[r]\), \( \Tr ( \Pi_0\mathcal G_{L_\nu}(\rho) ) =\Tr( \Pi_0 \nabla\mathcal G_{L_\nu}(\rho)[ \mathcal G_{L_\nu}(\rho) ] )= 0\). Hence, neither the measurement diffusion nor its Stratonovich correction can generate positive target population from \(\mathcal Z_0\). Thus, such obstruction must be removed by the control Hamiltonian $H_1$.

We impose the following conditions on the feedback law.
\begin{assumption}[Feedback structure]
\label{ass:feedback-structure}
\(u\in C^1(\mathcal S(\mathcal H),\mathbb R)\) and $u(\rho)=0$ for all $\rho\in\mathcal I(\mathcal H_0)$ whereas $u(\rho)\neq0$ for all $\rho\in\mathcal Z_0$. Moreover, there exist \(c_u>0\), \(\alpha>1/2\), and a neighborhood
\(\mathcal U_0\) of \(\mathcal I(\mathcal H_0)\) such that $|u(\rho)|\leq c_us(\rho)^\alpha$ for all $\rho\in\mathcal U_0.$
\end{assumption}
The condition \(u=0\) on \(\mathcal I(\mathcal H_0)\) preserves target invariance, whereas \(u\neq0\) on \(\mathcal Z_0\) activates the Hamiltonian control on the undesired zero-feedback configurations. The local growth bound ensures that the Hamiltonian contribution is asymptotically negligible relative to the measurement-induced contraction near the target. These conditions alone do not guarantee escape from \(\mathcal Z_0\). Therefore, we impose the following accessibility condition.
\begin{assumption}[Zero-feedback escape]
\label{ass:zero-feedback-escape}
For every \(\rho_0\in\mathcal Z_0\), there exist a locally bounded support
control \(v\) and \(T>0\) such that the corresponding solution of
\eqref{eq:QND-support-system} satisfies \(q_0(\rho_T^v)>0.\)
\end{assumption}
The zero-feedback escape condition provides the accessibility property required in the recurrence argument. Under the QND structure and the feedback-structure condition, it can be verified through suitable non-decoupling and reachability properties of the control Hamiltonian \(H_1\), formulated below. For an operator \(X\), let \(\mathscr E_\lambda(X)\) denote its eigenspace associated with the
eigenvalue \(\lambda\).
\begin{assumption}[Hamiltonian non-decoupling]
\label{ass:hamiltonian-nondecupling}
For every \(j\in[d]\) and every eigenvalue \(\lambda\) of
\(\Pi_jH_1\Pi_j\),
\[
\textstyle \big(\bigcap_{i\neq j}\ker(\Pi_iH_1\Pi_j)\big )\cap\mathscr E_\lambda(\Pi_jH_1\Pi_j)=\{0\}.
\]
\end{assumption}
\begin{assumption}[Hamiltonian reachability]
\label{ass:hamiltonian-reachability}
There exists \(m\in\mathbb N\) such that
\[
\operatorname{rank}
\bigl[
H_{1,Q},
H_{1,R}H_{1,Q},
\ldots,
H_{1,R}^{m}H_{1,Q}
\bigr]
\geq
\dim\mathcal H_R-1.
\]
\end{assumption}
The Hamiltonian non-decoupling condition excludes eigenvectors of the internal Hamiltonian \(\Pi_jH_1\Pi_j\) that are completely decoupled from all other measurement sectors. In particular, $[H_1,\rho]\neq0$ for all $\rho\in\mathcal I(\mathcal H_j)$ with $j\in[d]$.
The Hamiltonian reachability condition is a Kalman-type reachability condition for the pair \((H_{1,R},H_{1,Q})\). It implies that the subspace of \(\mathcal H_R\) orthogonal to the iterated coupling directions \( H_{1,Q}, H_{1,R}H_{1,Q}, \dots,H_{1,R}^{m}H_{1,Q} \) has dimension at most one. Any density operator supported on such a residual subspace is necessarily pure. 
Under the feedback-structure condition and the Hamiltonian reachability condition, a trajectory of the system~\eqref{eq:multi-homodyne-sme} under the QND structure with $\eta_\nu<1$ initialized at a pure state becomes mixed instantaneously almost surely~\cite{liang2021GHZ,liang2025exploring}. Hence, the possible one-dimensional residual subspace cannot support a persistent trajectory, which explains why the rank condition is allowed to have a defect of one. 

By applying arguments similar to those in~\cite[Appendix C]{liang2025exploring} and \cite[Section IV.C]{liang2025timevarying}, the QND structure, the feedback-structure condition , and the Hamiltonian non-decoupling and reachability conditions imply that the support system~\eqref{eq:QND-support-system} satisfies the zero-feedback escape condition in Assumption~\ref{ass:zero-feedback-escape}. These structural conditions are satisfied by the standard measurement and control operators for the angular-momentum model with \(\boldsymbol\rho_{\bar n}\) as the prescribed target \cite{liang2019exponential}, and for the multiqubit model with a GHZ state as the prescribed target~\cite{liang2021GHZ}.

\paragraph{Recurrence to the target subspace.}

After the support trajectory leaves \(\mathcal Z_0\), the auxiliary controls
associated with the observed diffusion directions can be chosen to increase
the target population. Let $q=(q_0,\ldots,q_d)\in\boldsymbol\Delta_{d}$ with $q_0<1$,
and define the normalized non-target population vector
\[
\pi_j(q) := {q_j}/{(1-q_0)}, \qquad j\in[d].
\]
When $q_0=1$, $\pi(q)$ is assigned an arbitrary fixed value in $\boldsymbol\Delta^{\circ}_{d-1}$. Then \(\pi(q)\in\boldsymbol\Delta_{d-1}\).  By definition~\eqref{eq:a_nuPsi}, one obtains $\Psi_\nu^0(q)=\sum_{j=1}^{d}\delta_{\nu,j}q_j=(1-q_0)A_\nu(\pi(q))$,
which implies 
\begin{equation}\label{eq:Psi-coercivity-QND}
\sum_{\nu=1}^{r}\iota_\nu^2\Psi_\nu^0(q(\rho))^2\geq\Lambda_0s(\rho)^2.
\end{equation}
This estimate provides a uniform direction whenever the state is away from the target subspace.

Along the support dynamics~\eqref{eq:QND-support-system}, the target
population satisfies
\begin{equation} 
 \frac{d}{dt}q_0(\rho_t^v) = b_0^u(\rho_t^v) + 2q_0(\rho_t^v) \sum_{\nu=1}^{r} \iota_\nu \Psi_\nu^0(q(\rho_t^v))v_\nu(t), 
\label{eq:QND-target-population-support} 
\end{equation}
where
\(
b_0^u(\rho):=\Tr\bigl(\Pi_0\widetilde{\mathcal L}_{u}(\rho)\bigr).
\)
Suppose that a support trajectory reaches a state satisfying \(q_0>0\), and
fix \(\varepsilon>0\). For some $a>0$, define the region
\[
\mathcal K_{\varepsilon,a}:=\{\rho\notin B_\varepsilon(\mathcal H_0):q_0(\rho)\geq a\}.
\]
Since $d_0(\rho)^2 \leq c_0s(\rho)$ for some constant $c_0>0$,  $s(\rho)
\geq {\varepsilon^2}/{c_0}$ for \(\rho\in\mathcal K_{\varepsilon,a}\). By~\eqref{eq:Psi-coercivity-QND}, we deduce
\begin{equation}
q_0(\rho)\sum_{\nu=1}^{r}\iota_\nu^2\Psi_\nu^0(q(\rho))^2\geq{a\Lambda_0\varepsilon^4}/{c_0^2},\quad
\rho\in\mathcal K_{\varepsilon,a}.
\label{eq:qnd-support-uniform-coercivity}
\end{equation}
Since \(b_0^u\) is continuous on the compact state space, it is bounded. Choose
the support controls $v_\nu(t)=K\iota_\nu\Psi_\nu^0(q(\rho_t^v))$ for $\nu\in[r]$ where \(K>0\) is sufficiently large. Substitution into
\eqref{eq:QND-target-population-support} yields
\[
\frac{d}{dt}q_0(\rho_t^v)= b_0^u(\rho_t^v)+2Kq_0(\rho_t^v)\sum_{\nu=1}^{r}\iota_\nu^2\Psi_\nu^0(q(\rho_t^v))^2.
\]
Estimate~\eqref{eq:qnd-support-uniform-coercivity} shows that \(K\) can be
chosen so that
\(
\frac{d}{dt}q_0(\rho_t^v)>0
\)
uniformly on \(\mathcal K_{\varepsilon,a}\). Therefore, the trajectory  enters
\(B_\varepsilon(\mathcal H_0)\) in finite time. This result is adapted from~\cite[Lemma 3.4]{liang2024model}, we restate the argument and the proof in the present notation.
\begin{lemma}[{\cite[Lemma 3.4]{liang2024model}}]
\label{lem:QND-GES-recurrence}
Assume that Assumptions~\ref{ass:QND} and \ref{ass:one-sided-separation}--\ref{ass:zero-feedback-escape}  
hold. Then, for every \(\varepsilon>0\) and for every \(\rho_0\in\mathcal S(\mathcal H)\),
\(
\mathbb P_{\rho_0}\left(\tau_\varepsilon<\infty\right)=1.
\)
\end{lemma}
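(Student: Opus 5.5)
The proof will follow the classical support-theorem plus strong-Markov scheme of the switching-feedback section. The central intermediate claim is a \emph{uniform positive-probability hitting estimate}: for fixed \(\varepsilon>0\), there exist \(T>0\) and \(\zeta>0\) such that
\[
\inf_{\rho\in\mathcal S(\mathcal H)}\mathbb P_\rho(\tau_\varepsilon\le T)\ge\zeta .
\]
Granting this, the strong Markov property applied at the times \(kT\) gives \(\mathbb P_{\rho_0}(\tau_\varepsilon>kT)\le(1-\zeta)^k\to0\). This yields \(\mathbb P_{\rho_0}(\tau_\varepsilon<\infty)=1\), and in fact \(\mathbb E_{\rho_0}[\tau_\varepsilon]\le T/\zeta\). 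The closed-loop SME has a unique global strong solution with the Feller property, since \(u\in C^1\) and the coefficients are smooth on the compact set \(\mathcal S(\mathcal H)\). Hence the strong Markov property and continuous dependence on initial data are available.

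To obtain the uniform estimate, I would first prove a deterministic statement: for every \(\rho_0\in\mathcal S(\mathcal H)\) there exist a locally bounded control \(v\) and a time \(T_{\rho_0}<\infty\) such that the support system~\eqref{eq:QND-support-system} satisfies \(\rho^v_{T_{\rho_0}}\in B_\varepsilon(\mathcal H_0)\). This splits into two steps.

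\textbf{Step 1 (leaving \(\mathcal Z_0\)).} If \(\rho_0\in\mathcal Z_0\), Assumption~\ref{ass:zero-feedback-escape} provides \(v\) and \(T\) with \(q_0(\rho^v_T)>0\). If \(q_0(\rho_0)>0\) already, we skip this step.

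\textbf{Step 2 (driving toward the target).} Starting from a state with \(q_0=a_0>0\), we use the feedback support control \(v_\nu=K\iota_\nu\Psi^0_\nu(q)\).

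The main obstacle lies in Step 2: the region \(\mathcal K_{\varepsilon,a}\) requires a lower bound \(q_0\ge a\) that the dynamics must not undercut before the trajectory enters \(B_\varepsilon\). I would handle this with \(a=a_0/2\). The right-hand side of \eqref{eq:QND-target-population-support} with this control is at least
\[
-\|b_0^u\|_\infty+2Kq_0\Lambda_0 s^2 .
\]
By the coercivity bound \eqref{eq:qnd-support-uniform-coercivity}, choosing \(K>c_0^2\|b_0^u\|_\infty/(a\Lambda_0\varepsilon^4)\) makes \(\dot q_0>0\) on \(\mathcal K_{\varepsilon,a}\). Consequently \(q_0\) cannot fall below \(a\) while the trajectory stays outside \(B_\varepsilon\): at a first time where \(q_0=a\) the state would lie in \(\mathcal K_{\varepsilon,a}\), where \(q_0\) is increasing, a contradiction. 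So the trajectory stays in \(\mathcal K_{\varepsilon,a}\) until it enters \(B_\varepsilon\), and there \(q_0\) grows at a rate bounded below uniformly. Since \(q_0\le1\) and \(q_0\ge 1-\varepsilon^2/c_0\) can fail only finitely long, entrance into \(B_\varepsilon(\mathcal H_0)\) occurs in finite time. The control is continuous in the state and therefore locally bounded along the trajectory.

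Finally, by the Stroock--Varadhan support theorem \cite{stroock1972support} applied to the Stratonovich form, the closed-loop process started at \(\rho_0\) satisfies \(\mathbb P_{\rho_0}(\rho_{T_{\rho_0}}\in B_\varepsilon)>0\). Because \(B_\varepsilon\) is open, weak continuity in the initial condition makes \(\rho\mapsto\mathbb P_\rho(\tau_\varepsilon<T_{\rho_0}+1)\) lower semicontinuous, so it stays positive on a neighborhood of \(\rho_0\). Compactness of \(\mathcal S(\mathcal H)\) then yields a finite subcover, and hence uniform constants \(T\) and \(\zeta\), which completes the argument.
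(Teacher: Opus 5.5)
Your proposal is correct and follows essentially the same route as the paper. Both arguments build a support-system trajectory into the target neighborhood (zero-feedback escape, then the high-gain control $v_\nu=K\iota_\nu\Psi^0_\nu$), apply the support theorem, use continuity and compactness to get uniform $(T,\zeta)$, and iterate the Markov property to obtain the $(1-\zeta)^n$ bound.

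Two of your details sharpen the paper's sketch. Your choice $a=a_0/2$, with the first-hitting argument showing that $q_0$ cannot fall below $a$ before the trajectory enters $B_\varepsilon(\mathcal H_0)$, makes precise a step the paper leaves implicit. Your use of the openness of $B_\varepsilon(\mathcal H_0)$ with lower semicontinuity replaces the paper's $B_{\varepsilon/2}$ margin and its cited finite-cover proposition.
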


\begin{proof}
By the support-system construction above,  the zero-feedback escape condition in Assumption~\ref{ass:zero-feedback-escape} and the one-sided separation condition in Assumption~\ref{ass:one-sided-separation} imply that, from every initial state, there exists a trajectory of the support system entering \(B_{\varepsilon/2}(\mathcal H_0)\) in finite time. Then, the support theorem implies that, for every \(\rho\notin B_\varepsilon(\mathcal H_0)\), the stochastic trajectory has a strictly positive probability of entering \(B_\varepsilon(\mathcal H_0)\) within finite time.
Due to the continuous dependence on the initial condition and compactness of \(\mathcal S(\mathcal H)\setminus B_\varepsilon(\mathcal H_0)\),  a finite-cover argument~\cite[Proposition 4.6]{baxendale1990invariant} yields constants \(T_\varepsilon>0\) and \(\zeta_\varepsilon>0\), independent of the initial state, such that
$$
 \inf_{\rho\notin B_\varepsilon(\mathcal H_0)}\mathbb P_\rho\left(\tau_\varepsilon<T_\varepsilon\right)\geq\zeta_\varepsilon.
$$
By applying the Markov property at \(nT_\varepsilon\), we deduce that
\begin{equation*}
\sup_{\rho_0\in\mathcal S(\mathcal H)}\mathbb P_{\rho_0}\left(\tau_\varepsilon>nT_\varepsilon\right)\leq(1-\zeta_\varepsilon)^n,\quad
n\in\mathbb N.
\end{equation*}
Letting \(n\to\infty\) yields \( \tau_\varepsilon<\infty\) almost surely, which proves the result.
\end{proof}

\paragraph{Local exponential Lyapunov estimate.}

For \(x\in(0,1)\), define
\begin{equation*}
V_x(\rho):=s(\rho)^x.
\end{equation*}
The function \(V_x\) vanishes exactly on \(\mathcal I(\mathcal H_0)\). 
The drift of \(s(\rho)\) is given by  
$$\mathfrak F_u(\rho):=i u(\rho)\Tr\bigl(\Pi_0[H_1,\rho]\bigr).$$
Positivity of \(\rho\) implies that there exists \(c_H>0\) such that $|\Tr(\Pi_0[H_1,\rho])|\leq c_H\sqrt{s(\rho)}.$ Then, the feedback-structure condition in Assumption~\ref{ass:feedback-structure} gives ${|\mathfrak F_u(\rho)|}/{s(\rho)}\leq c_Hc_us(\rho)^{\alpha-1/2}$.
A direct generator calculation yields
\begin{equation*}
\frac{\mathscr LV_x(\rho)}{V_x(\rho)}\leq x\frac{\mathfrak F_u(\rho)}{s(\rho)}-
2x(1-x)q_0(\rho)^2\sum_{\nu=1}^{r}\left(\frac{\iota_\nu\Psi_\nu^0(q(\rho))}{s(\rho)}
\right)^2.
\end{equation*}
Combining the above estimate with~\eqref{eq:Psi-coercivity-QND}, we deduce
\begin{equation}
 \limsup_{\rho\to\mathcal I(\mathcal H_0),\,s(\rho)>0}\frac{\mathscr LV_x(\rho)}{V_x(\rho)}\leq-2x(1-x)\Lambda_0<0.
\label{eq:LV-over-V-limsup-QND}
\end{equation}
The diffusion term in the logarithmic Lyapunov estimate satisfies
\begin{equation*}
\sum_{\nu=1}^{r}\left(\frac{\nabla V_x(\rho)\bigl[\iota_\nu\mathcal G_{L_\nu}(\rho)\bigr]}{V_x(\rho)}\right)^2=4x^2q_0(\rho)^2\sum_{\nu=1}^{r}\left(\frac{\iota_\nu\Psi_\nu^0(q(\rho))}{s(\rho)}\right)^2,
\end{equation*}
which implies
\begin{equation}
\liminf_{\rho\to\mathcal I(\mathcal H_0),\, s(\rho)>0}\sum_{\nu=1}^{r}\left(\frac{\nabla V_x(\rho)\bigl[\iota_\nu\mathcal G_{L_\nu}(\rho)\bigr]}{V_x(\rho)}\right)^2\geq
4x^2\Lambda_0.
\label{eq:log-diffusion-lower-bound-QND}
\end{equation}
The estimates~\eqref{eq:LV-over-V-limsup-QND} and
\eqref{eq:log-diffusion-lower-bound-QND} provide the two local ingredients
needed for pathwise exponential stabilization: the former yields local
stability in probability, while the latter strengthens the logarithmic
Lyapunov estimate to an almost-sure exponential rate. Combined with the
global recurrence property of Lemma~\ref{lem:QND-GES-recurrence}, they lead
to the following recurrence--contraction result.
The theorem below provides a tutorial synthesis of the common
recurrence--contraction mechanism underlying the state-feedback
stabilization results developed in
\cite[Theorems~6.3 and~6.4]{liang2019exponential},
\cite[Theorem~12]{liang2021GHZ}, and
\cite[Theorem~3.5]{liang2024model}, under the transparent sufficient conditions adopted here.
\begin{theorem}
\label{thm:QND-GES}
Assume that  Assumptions~\ref{ass:QND} and \ref{ass:one-sided-separation}--\ref{ass:zero-feedback-escape}  
hold. Then, \(\mathcal I(\mathcal H_0)\) is  almost surely  GES for the closed-loop SME~\eqref{eq:multi-homodyne-sme}.
Moreover, for every $\rho_0\in\mathcal S(\mathcal H)\setminus\mathcal I(\mathcal H_0),$
\begin{equation}
\label{eq:qnd-state-feedback-exponent}
\limsup_{t\to\infty}\frac{1}{t}\log d_0(\rho_t)\leq-\Lambda_0,\quad a.s.
\end{equation}
\end{theorem}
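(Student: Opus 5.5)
The argument follows the recurrence--contraction pattern. Lemma~\ref{lem:QND-GES-recurrence} provides global recurrence. The local estimates \eqref{eq:LV-over-V-limsup-QND}--\eqref{eq:log-diffusion-lower-bound-QND} provide contraction near the target. The strong Markov property glues the two together.

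\emph{Step 1: local stability in probability.} Fix $x\in(0,1)$ and $\lambda$ with $0<\lambda<2x(1-x)\Lambda_0$. By \eqref{eq:LV-over-V-limsup-QND}, together with $d_0^2\leq c_0 s$ and the fact that $s$ is continuous and vanishes exactly on $\mathcal I(\mathcal H_0)$, there is $\varepsilon_0>0$ such that $\mathscr L V_x\leq-\lambda V_x$ on $B_{\varepsilon_0}(\mathcal H_0)$. The neighbourhood should be taken in terms of $s$, say $\{s<s_0\}\subseteq\mathcal U_0$, and then converted to $d_0$. Since $d_0\leq 2\sqrt{s}$ (Fuchs--van de Graaf type bound) and $s\leq c\,d_0$, the two neighbourhood systems are equivalent. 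With the exit time $\sigma$ from $\{s<s_0\}$, the process $V_x(\rho_{t\wedge\sigma})$ is a nonnegative supermartingale. Maximal inequality then gives
\[
\mathbb P_\rho(\sigma<\infty)\leq s(\rho)^x/s_0^x .
\]
This yields stability in probability and a uniform positive probability $p>0$ of never leaving $\{s<s_0\}$ whenever one starts in $\{s<s_0/2^{1/x}\}$.

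\emph{Step 2: almost-sure rate on the good event.} On $\{s(\rho_0)>0\}$, the face argument from the proof of Theorem~\ref{Thm:QSR} lets me apply It\^o's formula to $\log V_x$. The drift of $\log V_x$ is
\[
\frac{\mathscr LV_x}{V_x}-\frac12\sum_\nu\Big(\frac{\nabla V_x[\iota_\nu\mathcal G_{L_\nu}]}{V_x}\Big)^2 .
\]
Here I will not use the crude bound but the exact expressions. The first term is $x\mathfrak F_u/s-2x(1-x)q_0^2\sum_\nu(\iota_\nu\Psi^0_\nu/s)^2$ and the second is $-2x^2q_0^2\sum_\nu(\cdot)^2$. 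Their sum is
\[
x\,\mathfrak F_u/s-2x\,q_0^2\sum_\nu(\iota_\nu\Psi_\nu^0/s)^2 .
\]
Dividing by $x$ (that is, working with $\log s$) gives a drift at most
\[
c_Hc_u s^{\alpha-1/2}-2q_0^2\Lambda_0 ,
\]
which tends to $-2\Lambda_0$ near the target. The martingale part has bounded quadratic variation rate, so by the strong law for martingales $M_t/t\to0$. On the event $\{\sigma=\infty\}$ this gives
\[
\limsup_t t^{-1}\log s(\rho_t)\leq-2\Lambda_0+o(1).
\]
Shrinking $s_0$ makes the $o(1)$ term arbitrarily small. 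Finally $d_0\leq\sqrt{c_0 s}$ gives $\limsup_t t^{-1}\log d_0(\rho_t)\leq-\Lambda_0$. This is exactly where the factor $1/2$ comes from.

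\emph{Step 3: globalization.} From an arbitrary $\rho_0$, Lemma~\ref{lem:QND-GES-recurrence} guarantees almost-sure hitting of $\{s<s_0/2^{1/x}\}$. After each such hitting, Step 1 gives conditional probability at least $p$ of staying forever. If the process exits, recurrence brings it back almost surely. Iterating with the strong Markov property, the probability of failing $n$ times is at most $(1-p)^n$, so almost surely some attempt succeeds. On that attempt Step 2 applies, and the $\limsup$ is unaffected by the finite initial segment. Combined with stability in probability, this gives GES almost surely.

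The main obstacle is Step 2's handling of the Hamiltonian drift together with the boundary behaviour. One has to ensure that $s(\rho_t)>0$ for all $t$ when $s(\rho_0)>0$, so that $\log s$ is well defined. My plan is to use the multiplicative structure of the $s$-equation: near the target its diffusion is proportional to $s$, while the Hamiltonian term is of order $s^{\alpha}$ with $\alpha>1/2$. I would also need to check that $\alpha>1/2$ suffices for $s^{\alpha-1/2}\to0$ while hitting zero remains impossible, or else argue via the uniqueness of solutions on the invariant set $\mathcal I(\mathcal H_0)$.
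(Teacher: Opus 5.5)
Your proposal is correct and follows essentially the same route as the paper's proof: local stability in probability from the stopped supermartingale $V_x(\rho_{t\wedge\sigma})$, globalization via Lemma~\ref{lem:QND-GES-recurrence} and the strong Markov property with the $(1-p)^n$ bound, It\^o's formula for $\log V_x$ together with the strong law for the martingale part, and $d_0^2\leq c_0 s$ to obtain the factor $1/2$. Your exact cancellation of $x$ in the log-drift is the paper's $-2x(1-x)\Lambda_0-2x^2\Lambda_0=-2x\Lambda_0$; the non-attainment step you flag is handled in the paper by citing \cite[Lemma 4.2]{liang2019exponential}, and your own plan suffices, since $|\mathfrak F_u|/s\leq c_Hc_u s^{\alpha-1/2}$ (note $\mathfrak F_u=O(s^{\alpha+1/2})$, not $O(s^\alpha)$) and $|\Psi_\nu^0|\leq\bar\ell_\nu s$ give $\mathscr L(-\log s)\leq C$ on $\{s>0\}$, so $-\log s(\rho_t)$ cannot blow up in finite time.
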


\begin{proof}
Fix \(x\in(0,1)\). By~\eqref{eq:LV-over-V-limsup-QND}, there exist a
neighborhood \(\mathcal U\) of \(\mathcal I(\mathcal H_0)\) and \(c_x>0\) such that $\mathscr L V_x(\rho)\leq-c_xV_x(\rho)$ for all $\rho\in\mathcal U\setminus \mathcal I(\mathcal H_0)$. Then, applying a stopped-supermartingale argument~\cite[Theorem 4.2.2]{mao2007stochastic} implies local stability in probability of $\mathcal I(\mathcal H_0)$.

Choose a neighborhood \(\mathcal U_1\) of \(\mathcal I(\mathcal H_0)\) such that
\(\overline{\mathcal U_1}\subset\mathcal U\) and, for some \(p>0\),
\[
\inf_{\rho\in\mathcal U_1}
\mathbb P_{\rho}
\left(
\rho_t\in\mathcal U
\ \text{for all }t\geq0
\right)
\geq p.
\]
Define recursively the entrance times \(\sigma_k\) into
\(\mathcal U_1\) and the subsequent exit times \(\tau_k\) from
\(\mathcal U\). By Lemma~\ref{lem:QND-GES-recurrence},
\(\sigma_1<\infty\) almost surely and, on
\(\{\tau_k<\infty\}\), one has
\(\sigma_{k+1}<\infty\) almost surely. Hence, by the strong Markov
property, we have
\[
\mathbb P(\tau_1<\infty,\ldots,\tau_n<\infty)
\leq
(1-p)^n,
\]
which implies 
\begin{equation*}
\mathbb P\left(\textstyle \bigcup_{k\geq1} E_k \right)=1, \quad E_k:=\{\sigma_k<\infty,\ \tau_k=\infty\}.
\end{equation*}

Fix \(\delta>0\). By
\eqref{eq:LV-over-V-limsup-QND} and
\eqref{eq:log-diffusion-lower-bound-QND}, \(\mathcal U\) may be chosen sufficiently small so that
\begin{equation*}
\begin{aligned}
\frac{\mathscr L V_x(\rho)}{V_x(\rho)}
&-
\frac12
\sum_{\nu=1}^{r}
\left|
\frac{
\nabla V_x(\rho)
[\iota_\nu\mathcal G_{L_\nu}(\rho)]
}{
V_x(\rho)
}
\right|^2
\leq
-2x\Lambda_0+\delta,
\quad
\forall \rho\in\mathcal U\setminus\mathcal I(\mathcal H_0).
\end{aligned}
\end{equation*}
Indeed, the two limiting contributions are respectively bounded by
\(-2x(1-x)\Lambda_0\) and \(-2x^2\Lambda_0\).

By the finite-time non-attainment argument of \cite[Lemma 4.2]{liang2019exponential}, which applies under the present QND multiplicative structure, a trajectory initialized outside $\mathcal I(\mathcal H_0)$ does not reach $\mathcal I(\mathcal H_0)$ at finite time almost surely. Hence, $V_x(\rho_t)>0$ almost surely for every finite $t$.
For each \(k\), applying It\^o's formula to \(\log V_x(\rho_t)\) on \([\sigma_k,t\wedge\tau_k]\) yields 
\[
\log V_x(\rho_{t\wedge\tau_k})
-
\log V_x(\rho_{\sigma_k})
\leq
(-2x\Lambda_0+\delta)
\bigl((t\wedge\tau_k)-\sigma_k\bigr)
+
M_{t\wedge\tau_k}-M_{\sigma_k},
\]
where \(M_t\) is a continuous local martingale. By the strong law for continuous local martingales, we deduce that
\[
\limsup_{t\to\infty}
\frac{1}{t}\log V_x(\rho_t)
\leq
-2x\Lambda_0+\delta,
\quad a.s. \text{ on }E_k.
\]
Since
\(
\mathbb P(\bigcup_{k\geq1}E_k)=1
\)
and \(\delta>0\) is arbitrary, we obtain
\[
\limsup_{t\to\infty}
\frac{1}{t}\log V_x(\rho_t)
\leq
-2x\Lambda_0,
\qquad\text{a.s.}
\]
Moreover, since $d_0(\rho)^2 \leq c_0V_x(\rho)^{1/x}$ for some $c_0>0$, we deduce
\[
\begin{aligned}
\limsup_{t\to\infty}
\frac1t\log d_0(\rho_t)
&\leq
\frac{1}{2x}
\limsup_{t\to\infty}
\frac1t\log V_x(\rho_t)\leq
-\Lambda_0,
\qquad\text{a.s.}
\end{aligned}
\]
Together with local stability in probability, this proves the almost-sure
GES of \(\mathcal I(\mathcal H_0)\) and~\eqref{eq:qnd-state-feedback-exponent}.
\end{proof}

The preceding argument separates the stabilization mechanism into global
recurrence and local exponential contraction. Its logical structure is
summarized as follows:
\begin{equation*}
\small
\boxed{
\left.
\begin{aligned}
&
\underbrace{
\begin{gathered}
\text{feedback structure}
+\text{ Hamiltonian non-decoupling}\\
+\text{ Hamiltonian reachability}
\\
\Longrightarrow
\text{ zero-feedback escape}
\end{gathered}
}_{\substack{\text{Hamiltonian removal of}\\
\text{non-target invariant obstructions}}}
+ \
\underbrace{
\begin{gathered}
\text{QND}
+\text{ one-sided separation}
\end{gathered}
}_{\substack{\text{invariant sector structure}\\
\text{and measurement coercivity }(\Lambda_0>0)}}
\\[1mm]
&\hspace{50mm}
\Longrightarrow
\text{global recurrence to }\mathcal I(\mathcal H_0)
\\[3mm]
&
\underbrace{
\begin{gathered}
\text{QND}
+\text{ one-sided separation}
+\text{ feedback structure}
\end{gathered}
}_{\substack{\text{target invariance and local}\\
\text{measurement-induced contraction}}}
\\[1mm]
&\hspace{50mm}
\Longrightarrow
\text{local exponential stability of }\mathcal I(\mathcal H_0)
\end{aligned}
\right\}
\Longrightarrow
\mathcal I(\mathcal H_0)\text{ is a.s.\ GES.}
}
\end{equation*}

Theorem~\ref{thm:QND-GES} highlights the two complementary mechanisms behind the stabilization result. The feedback Hamiltonian provides the global
accessibility needed to eliminate undesired invariant components, whereas the
observed QND channels generate the local contraction and determine the
almost-sure exponential rate through the coercivity constant \(\Lambda_0\).
In particular, the feedback contribution is of higher order near the target
and does not affect the leading exponential rate.
For the \(N\)-level angular-momentum system with \(L=J_z\), \(H_1=J_y\), and prescribed target \(\boldsymbol\rho_1\) or \(\boldsymbol\rho_N\), the nearest-neighbor coupling structure of \(J_y\) provides the required escape mechanism. Under the standard normalization, we have $\Lambda_0=\eta\gamma$. Then, Theorem~\ref{thm:QND-GES} recovers the exponential rate obtained in
\cite{liang2019exponential}. The stabilization of GHZ states follows the same
global-accessibility and local-contraction principle \cite{liang2021GHZ}.

\section{Robust reduced-order observer-based feedback under QND measurements}
\label{subsec:robust-observer-feedback}

Exact state feedback requires knowledge of the physical initial state and all
model parameters. As shown in
Subsection~\ref{subsec:qnd-diffusive-robust-observer}, under QND measurements
the feedback can instead be constructed from the reduced population estimate
\(\hat q_t\), without reconstructing the full conditional state. Here, we consider the dynamics~\eqref{eq:multi-homodyne-sme} under the QND structure where $H_{0,j}\in \mathcal{B}_*(\mathcal{H}_j)$ and $C_{k,j}\in \mathcal{B}(\mathcal{H}_j)$ are unknown, which is the QND specialization of the structure-preserving robustness regime described by the robust-invariance condition in
Assumption~\ref{ass:robust-invariance}.
The resulting closed loop is a coupled physical--observer process on
\(
\mathcal S(\mathcal H)\times\boldsymbol\Delta_{d},
\)
driven by the common measurement record. Therefore, the observer evolves with
only \(d+1\) population coordinates, rather than the order-\(N^2\) variables required by a full density-matrix filter, where \(d+1\leq N\).
The physical and observer models share the measurement coefficients
\(l_{\nu,j}\), while their effective measurement gains may differ. Denote the
physical and observer gains by \(\iota_\nu>0\) and
\(\hat\iota_\nu>0\), respectively, and let $\chi=(\chi_1,\ldots,\chi_r)$ with $\chi_\nu={\iota_\nu}/{\hat\iota_\nu}$.
This section builds on the reduced-filter and robust-feedback analysis
developed in~\cite{cardona2018exponential,liang2024model,liang2025reduced,liang2025exploring}.  

\subsection{Reduced observer architecture and feedback information structure}

Starting from the reduced QND filter introduced in Subsection~\ref{subsec:qnd-diffusive-robust-observer}, we incorporate the effect of the feedback through the population-mixing term \(u(\hat q)\Gamma\hat q\). Since the control Hamiltonian under Assumptions~\ref{ass:hamiltonian-nondecupling} and~\ref{ass:hamiltonian-reachability} generally generates inter-sector coherences, the sector populations do not satisfy a closed equation, and \(u(\hat q)\Gamma\hat q\) should be viewed as an auxiliary reduced model rather than the exact population projection of the controlled SME.

The role of \(\Gamma\) is primarily structural. It is chosen so that the
controlled reduced dynamics preserve the probability simplex, while
\(u(e_0)=0\) guarantees invariance of the target vertex \(e_0\).
At every non-target vertex \(e_j\) for \(j\in[d]\), the conditions below ensure
that \(\Gamma e_j\neq0\). Hence, whenever \(u(e_j)>0\), the vertex \(e_j\)
cannot remain an equilibrium of the controlled reduced dynamics.
A convenient choice is a column-conservative Metzler matrix whose
off-diagonal entries reflect the inter-sector couplings induced by \(H_1\),
as in~\cite{cardona2020exponential}. The subsequent analysis uses
\(\Gamma\) only through the conservation, positivity, and nondegeneracy
properties stated below.

\begin{assumption}[Reduced generator]
\label{ass:reduced-generator}
For every \(j=0,\ldots,d\), $\sum_{i=0}^{d}\Gamma_{i,j}=0$, $\Gamma_{i,j}\geq0$ with $i\neq j$ and $\Gamma_{j,j}<0$.
\end{assumption}

\begin{assumption}[Reduced feedback]
\label{ass:reduced-feedback}
$u\in C^1(\boldsymbol\Delta_{d},\mathbb R_+)$, $u(e_0)=0$ and $u(\hat q)>0$ for all $\hat q\neq e_0$. Moreover, there exist \(c_u>0\), \(\beta>1\), and a neighborhood
\(\widehat{\mathcal U}_0\) of \(e_0\) such that $u(\hat q)\leq c_u(1-\hat q_0)^\beta$ for all $\hat q\in\widehat{\mathcal U}_0$.
\end{assumption}
The controlled reduced observer is
\begin{equation}
d\hat q_{n,t}=u(\hat q_t)(\Gamma\hat q_t)_n dt+ 2\hat q_{n,t}\sum_{\nu=1}^{r}\hat\iota_\nu\Psi_\nu^n(\hat q_t)\left[dY_{\nu,t}-2\hat\iota_\nu a_\nu(\hat q_t)dt\right],
\label{eq:reduced-QND-filter}
\end{equation}
for \(n=0,\ldots,d\), and the physical feedback is $u_t=u(\hat q_t).$
Using the output~\eqref{eq:qnd_reduced_output}, this can equivalently be written as
\begin{equation}
 d\hat q_{n,t}=u(\hat q_t)(\Gamma\hat q_t)_n dt+ 2\hat q_{n,t}\sum_{\nu=1}^{r} \hat\iota_\nu\Psi_\nu^n(\hat q_t)\bigl[\mathrm dW_{\nu,t}+2\bigl(\iota_\nu a_\nu(q_t)-\hat\iota_\nu a_\nu(\hat q_t)\bigr)dt\bigr].
\label{eq:reduced-QND-filter-W}
\end{equation}
The reduced-generator condition, together with
\(
\sum_{n=0}^{d}
\hat q_n\Psi_\nu^n(\hat q)=0
\)
and the nonnegativity of \(u\), ensures invariance of the probability simplex. As for the uncontrolled reduced filter, the multiplicative diffusion structure and the inward-pointing drift imply that, for every
\(
(\rho_0,\hat q_0)\in\mathcal S(\mathcal H)\times\boldsymbol\Delta_{d}^{\circ},
\)
the coupled system admits a unique global strong solution satisfying
\(
(\rho_t,\hat q_t) \in \mathcal S(\mathcal H)\times\boldsymbol\Delta_{d}^{\circ},
\)
for all $t\geq0$ almost surely; see~\cite[Section IV.A]{liang2025reduced}. The local estimate in the reduced-feedback condition in Assumption \ref{ass:reduced-feedback} ensures that the reduced control term is negligible at the exponential scale near the target.

\subsection{Global recurrence of the coupled process}

The desired set of the coupled physical--observer process is
\(
\mathcal I(\mathcal H_0)\times\{e_0\}.
\)
Observer-based feedback introduces an additional possible obstruction: the
observer may approach \(e_0\), causing the feedback to vanish, while the
physical state approaches a possible invariant non-target sector
\(\mathcal I(\mathcal H_n)\).
To exclude such false-target configurations, we
impose the following condition.

\begin{assumption}[False-target rejection]
\label{ass:false-target-rejection}
For every \(n\in[d]\), $\Lambda_{0\mid n}(\chi)=2\sum_{\nu=1}^{r}\hat\iota_\nu^2\Phi_{0\mid n}^{\nu}(\chi_\nu)>0.$
\end{assumption}
The rate \(\Lambda_{0\mid n}(\chi)\) is directional: it quantifies the rejection of the observer target sector \(0\) when the physical trajectory
approaches sector \(n\). To make this mechanism explicit, let $\mathsf V_n(\hat q):=-\log\hat q_n.$ If \(\mathcal I(\mathcal H_n)\) is invariant for the physical dynamics with \(u=0\), then the infinitesimal generator satisfies
\begin{equation}
\limsup_{(\rho,\hat q)\to\mathcal I(\mathcal H_n)\times\{e_0\},\,\hat q_n>0}\mathscr L\mathsf V_n(\rho,\hat q)\leq-\Lambda_{0\mid n}(\chi)<0.
\label{eq:false-target-generator}
\end{equation}
The control-dependent terms vanish in this limit by the reduced-feedback condition in Assumption \ref{ass:reduced-feedback}, whereas
the false-target-rejection condition provides the strictly negative contribution.
A stopped It\^o's formula argument shows that a trajectory initialized
with \(\hat q_0\in\boldsymbol\Delta_{d}^{\circ}\) cannot stay indefinitely in a sufficiently small neighborhood of \(\mathcal I(\mathcal H_n)\times\{e_0\}.\)
Although \(e_0\) is an equilibrium of the reduced observer, the false-target-rejection condition prevents trajectories initialized with
\(\hat q_0\in\boldsymbol\Delta_{d}^{\circ}\) from converging to $\mathcal I(\mathcal H_n)\times\{e_0\}$ with $n\in[d]$; see~\cite[Lemma 4.2]{liang2025reduced} and~\cite[Lemma C.3]{liang2025exploring}.

For \(\varepsilon>0\), define $B_\varepsilon(e_0):=\{\hat q\in\boldsymbol\Delta_d:\|\hat q-e_0\|_1<\varepsilon\}$ and
\[
\tau_\varepsilon^{\mathrm{red}}:=\inf\left\{t\geq0:(\rho_t,\hat q_t)\in B_\varepsilon(\mathcal H_0)\times B_\varepsilon(e_0)\right\}.
\]
The recurrence argument combines the support construction of
Lemma~\ref{lem:QND-GES-recurrence} with the rejection of the false-target
estimate. The latter step is essential since the observer evolves in \(\boldsymbol\Delta_{d}^{\circ}\), and positivity of its
coordinates at finite times does not provide a uniform lower bound away from
the boundary. Choose \(r>0\) sufficiently small so that, for every \(n\in[d]\),
the estimate~\eqref{eq:false-target-generator} holds on
\(
\mathcal U_{n,r}:=B_r(\mathcal H_n)\times B_r(e_0).
\)
Define the compact intermediate region
\[
\mathbf{B}_{\varepsilon,r}:=\left(\mathcal S(\mathcal H)\times\boldsymbol\Delta_{d}\right)\setminus\big[\textstyle B_\varepsilon(\mathcal H_0)\times B_\varepsilon(e_0)\cup\bigcup_{n=1}^{d}\mathcal U_{n,r}\big].
\]
On \(\mathbf{B}_{\varepsilon,r}\), the support construction associated with
the Hamiltonian non-decoupling and reachability conditions in Assumptions~\ref{ass:hamiltonian-nondecupling} and \ref{ass:hamiltonian-reachability}, and the one-sided separation condition in
Assumption~\ref{ass:one-sided-separation} allows the coupled
trajectory to reach the target neighborhood; see~\cite[Lemma~4.8]{liang2021robust}. The support theorem implies that, from every point of 
\(
\mathbf B_{\varepsilon,r}\cap\left(\mathcal S(\mathcal H)\times\boldsymbol\Delta_{d}^{\circ}\right),
\)
the stochastic process has a strictly positive probability of entering the
target neighborhood within finite time.
Then, Feller continuity, compactness of \(\mathbf{B}_{\varepsilon,r}\),  and the localization argument developed in \cite{baxendale1990invariant,liang2021robust} yield the Baxendale-type occupation estimate,
\begin{equation}
\sup_{ (\rho,\hat q)\in \mathbf{B}_{\varepsilon,r} \cap \left( \mathcal S(\mathcal H)\times\boldsymbol\Delta_{d}^{\circ} \right)} \mathbb E_{\rho,\hat q} \left[\int_0^{\tau_\varepsilon^{\mathrm{red}}} \mathds 1_{\mathbf{B}_{\varepsilon,r}}(\rho_t,\hat q_t) dt \right] \leq K_{\varepsilon,r},
\label{eq:reduced-occupation-bound}
\end{equation}
for some \(K_{\varepsilon,r}<\infty\); see~\cite{baxendale1990invariant,liang2021robust}. 
The estimate~\eqref{eq:reduced-occupation-bound} bounds uniformly the expected amount of time that the coupled process can spend in $\mathbf B_{\varepsilon,r}$ before reaching the target neighborhood. However, it does not exclude excursions toward one of the false-target neighborhoods \(\mathcal U_{n,r}\), where \(\mathsf V_n=-\log\hat q_n\) have a strictly negative drift by the false-target-rejection condition in Assumption~\ref{ass:false-target-rejection}. Thus, the process cannot accumulate an arbitrarily large amount of time near a false-target configuration. Combining the occupation estimate with the false-target-rejection argument and a standard localization procedure yields global recurrence of the coupled physical--observer system defined by~\eqref{eq:multi-homodyne-sme} and~\eqref{eq:reduced-QND-filter-W}.
\begin{lemma}[{\cite[Proposition~C.5]{liang2025exploring}}]
\label{lem:reduced-recurrence}
Assume that  $\eta_\nu\in(0,1)$ for all $\nu\in[r]$, and  Assumptions~\ref{ass:QND}, \ref{ass:one-sided-separation} and 
\ref{ass:hamiltonian-nondecupling}--\ref{ass:false-target-rejection}
hold.
Then, for every \(\varepsilon>0\) and every
\(
(\rho_0,\hat q_0)\in\mathcal S(\mathcal H)\times\boldsymbol\Delta_{d}^{\circ},
\)
one has
\(
\mathbb P_{\rho_0,\hat q_0}\left(\tau_\varepsilon^{\mathrm{red}}<\infty\right)=1.
\)
\end{lemma}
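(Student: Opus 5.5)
The argument partitions the coupled state space $\mathcal S(\mathcal H)\times\boldsymbol\Delta_d^{\circ}$ into three pieces: the target neighborhood $B_\varepsilon(\mathcal H_0)\times B_\varepsilon(e_0)$, the false-target neighborhoods $\mathcal U_{n,r}$ for $n\in[d]$, and the compact intermediate region $\mathbf B_{\varepsilon,r}$. The claim $\tau_\varepsilon^{\mathrm{red}}<\infty$ a.s.\ then reduces to two facts. First, before $\tau_\varepsilon^{\mathrm{red}}$ the process spends finite total time in $\mathbf B_{\varepsilon,r}$. Second, it cannot remain forever in any $\mathcal U_{n,r}$. Since $r$ is fixed small enough that \eqref{eq:false-target-generator} holds on each $\mathcal U_{n,r}$, these two facts cover all possibilities.

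\textbf{Step 1: finite occupation of $\mathbf B_{\varepsilon,r}$.} Take the occupation bound \eqref{eq:reduced-occupation-bound} as given; it follows from the support construction of \cite[Lemma~4.8]{liang2021robust} together with the Baxendale localization. By Markov's inequality, $\int_0^{\tau_\varepsilon^{\mathrm{red}}}\mathds 1_{\mathbf B_{\varepsilon,r}}(\rho_t,\hat q_t)\,dt<\infty$ almost surely. The bound is stated for initial points in $\mathbf B_{\varepsilon,r}$, so for a general initial point I will apply the strong Markov property at each successive entrance into $\mathbf B_{\varepsilon,r}$ and sum the contributions.

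\textbf{Step 2: no indefinite stay near a false target.} On $\mathcal U_{n,r}$ the function $\mathsf V_n=-\log\hat q_n$ satisfies $\mathscr L\mathsf V_n\le-\Lambda_{0\mid n}(\chi)/2$. Let $\theta$ be the exit time from $\mathcal U_{n,r}$. Applying the stopped It\^o formula up to $t\wedge\theta$ gives
\[
\mathsf V_n(\hat q_{t\wedge\theta})\le \mathsf V_n(\hat q_0)-\tfrac{\Lambda_{0\mid n}(\chi)}{2}(t\wedge\theta)+M_{t\wedge\theta},
\]
where $M$ is a martingale. Its quadratic variation has bounded density, because the diffusion coefficient of $\log\hat q_n$ is $2\sum_\nu\hat\iota_\nu\Psi_\nu^n(\hat q)$, which is bounded. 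The observer stays in the interior, so $\mathsf V_n\ge0$ is finite for all $t$. Taking expectations gives $\mathbb E[\theta]\le 2\mathsf V_n(\hat q_0)/\Lambda_{0\mid n}(\chi)<\infty$, hence $\theta<\infty$ a.s. Alternatively, the strong law for martingales shows $\mathsf V_n\to-\infty$ on $\{\theta=\infty\}$, contradicting $\mathsf V_n\ge0$.

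\textbf{Step 3: combining the pieces.} This is the step I expect to be the main obstacle. The difficulty is that the process could alternate infinitely often between $\mathbf B_{\varepsilon,r}$ and the sets $\mathcal U_{n,r}$, accumulating infinite time through ever-growing false-target excursions even though each individual excursion is finite. My plan is to enlarge the neighborhoods to $\mathcal U_{n,2r}$ and use hysteresis stopping times: an excursion is counted as starting on entrance into $\mathcal U_{n,r}$ and ending on exit from $\mathcal U_{n,2r}$.

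The path between exiting $\mathcal U_{n,2r}$ and re-entering $\mathcal U_{m,r}$ crosses a region of positive width. Each such crossing therefore either takes a uniformly positive time in $\mathbf B_{\varepsilon,r}$ with probability bounded below, or ends in the target set. The uniform bound comes from compactness and Feller continuity, with the diffusion coefficients bounded so that crossings cannot be arbitrarily fast.

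Since the total occupation of $\mathbf B_{\varepsilon,r}$ is finite a.s., infinitely many excursions would occur only with probability zero. This follows from a conditional Borel--Cantelli argument via the strong Markov property. Consequently, almost surely only finitely many excursions take place before $\tau_\varepsilon^{\mathrm{red}}$, each of finite length by Step 2, and the occupation of $\mathbf B_{\varepsilon,r}$ is finite by Step 1. Because the three regions cover the whole state space, $\tau_\varepsilon^{\mathrm{red}}<\infty$ a.s.

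The assumption $\eta_\nu<1$ enters only through the support and escape construction in Step 1: it guarantees that pure states become mixed immediately, which is what allows the rank-defect-one reachability condition to suffice.
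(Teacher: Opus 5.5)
Your proof is correct, but it combines the two ingredients differently from the paper. The paper glues the local functions $\mathsf V_n=-\log\hat q_n$ by a cutoff into a single nonnegative function $\mathsf V$ with $\mathscr L\mathsf V\le -c+C_{\varepsilon,r}\mathds 1_{\mathbf B_{\varepsilon,r}}$ before $\tau_\varepsilon^{\mathrm{red}}$. It then applies It\^o's formula up to $t\wedge\tau_\varepsilon^{\mathrm{red}}\wedge\tau_m$ and absorbs the positive part with \eqref{eq:reduced-occupation-bound}, which gives $c\,\mathbb E[\tau_\varepsilon^{\mathrm{red}}]\le\mathsf V(\rho_0,\hat q_0)+C_{\varepsilon,r}K_{\varepsilon,r}$.

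You never build a global function. You use each $\mathsf V_n$ only where its drift is negative, to get finite exit times from the false-target neighborhoods. You then exclude infinitely many false-target excursions by a pathwise counting argument with three parts:
\begin{itemize}
\item hysteresis between $\mathcal U_{n,r}$ and $\mathcal U_{n,2r}$;
\item a uniform small-time displacement bound coming from the bounded coefficients of the coupled SDE;
\item L\'evy's conditional Borel--Cantelli lemma, played against the almost-sure finiteness of the $\mathbf B_{\varepsilon,r}$-occupation.
\end{itemize}

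The paper's route gives the stronger conclusion $\mathbb E[\tau_\varepsilon^{\mathrm{red}}]<\infty$ with an explicit bound. However, it rests on a cutoff construction whose generator must be controlled on $\mathbf B_{\varepsilon,r}$, even though $\mathsf V_n$ is unbounded near the boundary of the simplex. Your route sidesteps this and needs only local drift, bounded coefficients and the strong Markov property. The price is that you obtain only almost-sure finiteness: your exit-time bound $2\mathsf V_n(\hat q)/\Lambda_{0\mid n}(\chi)$ depends on the entrance point, so no moment bound follows. Almost-sure finiteness is, however, all the lemma claims.

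Two small repairs are needed.
\begin{itemize}
\item Your excursions end on exit from $\mathcal U_{n,2r}$, so Step~2 must be run on $\mathcal U_{n,2r}$. Choose $r$ so that the drift bound on $\mathsf V_n$ holds there. Also take $r$ small enough (with $\varepsilon$ small, which is no loss since $\tau_\varepsilon^{\mathrm{red}}$ is monotone in $\varepsilon$) that $\partial\mathcal U_{n,2r}$ lies at positive distance from the target neighborhood and from every $\mathcal U_{m,r}$.
\item In Step~1, apply the strong Markov property only at the first entrance into $\mathbf B_{\varepsilon,r}$. The bound \eqref{eq:reduced-occupation-bound} already integrates up to $\tau_\varepsilon^{\mathrm{red}}$ and so counts all later returns. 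Summing $K_{\varepsilon,r}$ over successive entrances would be redundant and could diverge.
\end{itemize}
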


\begin{proof}[Proof outline.]
The complete proof is given in~\cite[Proposition~C.5]{liang2025exploring}; see also~\cite{baxendale1990invariant,liang2021robust} for the occupation-time and localization arguments. We summarize the main mechanism, which combines false-target rejection with the occupation estimate~\eqref{eq:reduced-occupation-bound}.

For \(n\in[d]\), set
\(
\mathsf V_n(\hat q):=-\log\hat q_n.
\)
By the false-target-rejection condition in Assumption~\ref{ass:false-target-rejection},  there exists \(c_n>0\) such that
\(
\mathscr L\mathsf V_n(\rho,\hat q)\leq -c_n
\)
on
\(
\mathcal U_{n,r}\cap\left(\mathcal S(\mathcal H)\times\boldsymbol\Delta_{d}^{\circ}\right).
\)
Thus, the coupled process cannot remain near such a false-target configuration for an arbitrarily long time without being driven away from the observer state \(e_0\).

Since there are only finitely many non-target sectors,  a cutoff construction combines the local functions \(\mathsf V_n\) into a nonnegative function  \(\mathsf V\) that coincides with \(\mathsf V_n\) near \(\mathcal I(\mathcal H_n)\times\{e_0\}\). Moreover, there exist constants \(c>0\) and \(C_{\varepsilon,r}<\infty\) such that, before the target neighborhood is reached,
\(
\mathscr L\mathsf V \leq -c + C_{\varepsilon,r}\mathds{1}_{\mathbf B_{\varepsilon,r}}.
\)
To localize the logarithmic singularities, define $\sigma_{t,m}:=t\wedge\tau_\varepsilon^{\mathrm{red}}
\wedge\tau_m$ with
\(
\tau_m:=\inf\{t\geq0:\min_{n\in[d]}\hat q_{n,t}\leq1/m\}.
\)
Applying It\^o's formula to \(\mathsf V\) up to \(\sigma_{t,m}\) and using the occupation estimate~\eqref{eq:reduced-occupation-bound}  yields
\[
c\mathbb E_{\rho_0,\hat q_0}[\sigma_{t,m}]\leq\mathsf V(\rho_0,\hat q_0)+C_{\varepsilon,r}K_{\varepsilon,r}.
\]
Since  \(\boldsymbol\Delta_{d}^{\circ}\) is invariant almost surely,  \(\tau_m\uparrow\infty\) almost surely. Letting first \(t\to\infty\) and then \(m\to\infty\) yields
\(
\mathbb E_{\rho_0,\hat q_0}[\tau_\varepsilon^{\mathrm{red}}]<\infty,
\)
which proves the result.
\end{proof}

\subsection{Local exponential Lyapunov estimate}

Set $\hat s(\hat q):=1-\hat q_0$ and $S(\rho,\hat q):=s(\rho)+\hat s(\hat q)$,
and, for \(x\in(0,1)\), define
\begin{equation*}
V_x(\rho,\hat q):=S(\rho,\hat q)^x\geq 0,
\end{equation*}
which vanishes exactly on
\(
\mathcal I(\mathcal H_0)\times\{e_0\}.
\)
The diffusion coercivity of the physical--observer error is measured by
\begin{equation}
\label{eq:reduced-diffusion-constant}
\Lambda_{\mathrm{red}}(\chi):=\inf_{a\in[0,1],\,\pi,\tilde\pi\in\boldsymbol\Delta_{d-1}} \sum_{\nu=1}^{r}\hat\iota_\nu^2\left[\chi_\nu aA_\nu(\pi)+(1-a)A_\nu(\tilde\pi)\right]^2.
\end{equation}
Under the one-sided separation condition in Assumption~\ref{ass:one-sided-separation}, the two terms inside the brackets have the same sign for each channel. Hence, cancellation cannot occur, and compactness gives $\Lambda_{\mathrm{red}}(\chi)>0$ for all $\chi\in(0,\infty)^r$.

The gain mismatch introduces an additional first-order drift. Unlike $\Lambda_{\mathrm{red}}(\chi)$, which depends only on the relative
measurement separations $\delta_{\nu,j}$, the corresponding drift bound
also depends on the reference output level $\Re(l_{\nu,0})$. Therefore, we 
keep fixed throughout this subsection the measurement convention used in
\eqref{eq:qnd_reduced_output} and define
\begin{equation*}
 B_{\mathrm{red}}(\chi):=4\sum_{\nu=1}^{r}\hat\iota_\nu^2|\chi_\nu-1||\Re(l_{\nu,0})|\bar\ell_\nu,\quad
\bar\ell_\nu:=\max_{j\in[d]}|\delta_{\nu,j}|.
\end{equation*}
We impose the following dominance condition.
\begin{assumption}[Contraction dominance]
\label{ass:contraction-dominance}
\(
B_{\mathrm{red}}(\chi)<2\Lambda_{\mathrm{red}}(\chi).
\)
\end{assumption}
This condition ensures that the measurement-induced contraction dominates the drift caused by the gain mismatch. In the nominal case, we deduce
\(
B_{\mathrm{red}}(1)=0.
\)
Thus, Assumption~\ref{ass:contraction-dominance} follows directly from \(\Lambda_{\mathrm{red}}(1)>0\). Since both sides depend
continuously on \(\chi\), the condition remains valid for all sufficiently small fixed gain mismatches.

The drift of \(S\) can be written as
\begin{equation*}
\mathfrak b_S(\rho,\hat q)=\mathfrak F_u(\rho,\hat q)-u(\hat q)(\Gamma\hat q)_0+\mathfrak b_{\mathrm{mis}}(\rho,\hat q),
\end{equation*}
where
\(
\mathfrak F_u(\rho,\hat q):=iu(\hat q)\Tr\bigl(\Pi_0[H_1,\rho]\bigr).
\)
Near
\(\mathcal I(\mathcal H_0)\times\{e_0\}\), a direct expansion yields
\[
\mathfrak b_{\mathrm{mis}}(\rho,\hat q)=-4\sum_{\nu=1}^{r}\hat\iota_\nu^2\hat q_0(\chi_\nu-1)\Re(l_{\nu,0})\Psi_\nu^0(\hat q)+O\bigl(S(\rho,\hat q)^2\bigr).
\]
Since
\(
|\Psi_\nu^0(z)|\leq\bar\ell_\nu(1-z_0),
\)
it follows that
\(
\mathfrak b_{\mathrm{mis}}(\rho,\hat q)\leq B_{\mathrm{red}}(\chi)S(\rho,\hat q)
+O\bigl(S(\rho,\hat q)^2\bigr).
\)
Moreover, the reduced-feedback condition in Assumption \ref{ass:reduced-feedback} and the
Hamiltonian cross-term estimate imply
\[
\frac{|\mathfrak F_u(\rho,\hat q)|}{S(\rho,\hat q)}=O\bigl(S(\rho,\hat q)^{\beta-\frac12}\bigr), \qquad \frac{|u(\hat q)(\Gamma\hat q)_0|}{S(\rho,\hat q)}=O\bigl(S(\rho,\hat q)^{\beta-1}\bigr).
\]
Since \(\beta>1\), both terms vanish near the target. Then, we deduce
\begin{equation}
\limsup_{(\rho,\hat q)\to\mathcal I(\mathcal H_0)\times\{e_0\},\, S(\rho,\hat q)>0}\frac{\mathfrak b_S(\rho,\hat q)}{S(\rho,\hat q)}\leq B_{\mathrm{red}}(\chi).
\label{eq:reduced-drift-limsup}
\end{equation}

For each observed channel, define
$
\mathfrak D_\nu(\rho,\hat q):=\iota_\nu q_0(\rho)\Psi_\nu^0(q(\rho))+\hat\iota_\nu\hat q_0\Psi_\nu^0(\hat q).
$
A direct generator calculation yields
\begin{equation*}
\frac{\mathscr LV_x(\rho,\hat q)}{V_x(\rho,\hat q)}=x\frac{\mathfrak b_S(\rho,\hat q)}{S(\rho,\hat q)}-2x(1-x)\frac{\sum_{\nu=1}^{r}\mathfrak D_\nu(\rho,\hat q)^2}{S(\rho,\hat q)^2}.
\end{equation*}
Set
\(
a(\rho,\hat q):={s(\rho)}/{S(\rho,\hat q)}\in[0,1].
\)
By a straightforward computation, we obtain
\begin{equation*}
\frac{\mathfrak D_\nu(\rho,\hat q)}{S(\rho,\hat q)}=\hat\iota_\nu \Bigl[\chi_\nu q_0(\rho)a(\rho,\hat q) A_\nu(\pi(q(\rho)))+\hat q_0
\bigl(1-a(\rho,\hat q)\bigr)A_\nu(\pi(\hat q))\Bigr].
\end{equation*}
Then, by~\eqref{eq:reduced-diffusion-constant}, we have
\begin{equation}
\liminf_{(\rho,\hat q)\to\mathcal I(\mathcal H_0)\times\{e_0\}, \,S(\rho,\hat q)>0}\frac{\sum_{\nu=1}^{r}\mathfrak D_\nu(\rho,\hat q)^2}{S(\rho,\hat q)^2}\geq\Lambda_{\mathrm{red}}(\chi).
\label{eq:reduced-diffusion-coercivity}
\end{equation}
Combining \eqref{eq:reduced-drift-limsup} and
\eqref{eq:reduced-diffusion-coercivity}, we deduce
\begin{equation}
\limsup_{(\rho,\hat q)\to\mathcal I(\mathcal H_0)\times\{e_0\}, \,S(\rho,\hat q)>0}\frac{\mathscr LV_x(\rho,\hat q)}{V_x(\rho,\hat q)}\leq-x\left[2(1-x)\Lambda_{\mathrm{red}}(\chi)-B_{\mathrm{red}}(\chi)\right].
\label{eq:reduced-LV-limsup}
\end{equation}
Under the contraction-dominance condition in Assumption~\ref{ass:contraction-dominance}, we can choose $0<x<1-\frac{B_{\mathrm{red}}(\chi)}{2\Lambda_{\mathrm{red}}(\chi)}$ for which the right-hand side of \eqref{eq:reduced-LV-limsup} is strictly negative.

Let \(\Sigma_\nu\) denote the joint diffusion vector field associated with
\(W_\nu\). Then, we have
\[
\frac{\nabla V_x(\rho,\hat q)\bigl[\Sigma_\nu(\rho,\hat q)\bigr]}{V_x(\rho,\hat q)}=-2x\frac{\mathfrak D_\nu(\rho,\hat q)}{S(\rho,\hat q)},
\]
which implies
\begin{equation}
\liminf_{(\rho,\hat q)\to\mathcal I(\mathcal H_0)\times\{e_0\}, \, S(\rho,\hat q)>0}\sum_{\nu=1}^{r}\left|\frac{\nabla V_x(\rho,\hat q)\bigl[\Sigma_\nu(\rho,\hat q)
\bigr]}{V_x(\rho,\hat q)}\right|^2\geq 4x^2\Lambda_{\mathrm{red}}(\chi).
\label{eq:reduced-log-diffusion}
\end{equation}
The following result provides a unified reformulation of the robust reduced-observer stabilization results developed in \cite[Theorem~4.3]{liang2024model},
\cite[Theorem~4.14]{liang2021robust}, and \cite[Theorem~4.1]{liang2025exploring}.
Rather than reproducing any one of these results, we express their common recurrence--contraction mechanism under the notation and assumptions adopted here. Define $\mathbf d_{0}\big((\rho,\hat q),\mathcal I(\mathcal H_0)\times\{e_0\}\big):=d_0(\rho)+\|\hat q-e_0\|_1$.
\begin{theorem}
\label{thm:reduced-filter-GES}
Assume that  $\eta_\nu\in(0,1)$ for all $\nu\in[r]$, and Assumptions~\ref{ass:QND}, \ref{ass:one-sided-separation}, and \ref{ass:hamiltonian-nondecupling}--\ref{ass:contraction-dominance} 
hold. Then, 
\(
\mathcal I(\mathcal H_0)\times\{e_0\}
\)
is almost surely GES for  the coupled physical--observer system defined by~\eqref{eq:multi-homodyne-sme} and~\eqref{eq:reduced-QND-filter-W}. Moreover, for every
\(
(\rho_0,\hat q_0)\in\left(\mathcal S(\mathcal H)\times\boldsymbol\Delta_{d}^{\circ}\right)
\setminus\left(\mathcal I(\mathcal H_0)\times\{e_0\}\right),
\)
\begin{equation}
\limsup_{t\to\infty}\frac1t\log\mathbf d_{0}\big((\rho_t,\hat q_t),\mathcal I(\mathcal H_0)\times\{e_0\}\big)\leq-\Lambda_{\mathrm{red}}(\chi)+\frac12B_{\mathrm{red}}(\chi)<0,\quad a.s.
\label{eq:reduced-feedback-exponent}
\end{equation}
\end{theorem}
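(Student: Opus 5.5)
The proof follows the same recurrence--contraction template as Theorem~\ref{thm:QND-GES}, now applied to the coupled process $(\rho_t,\hat q_t)$ on $\mathcal S(\mathcal H)\times\boldsymbol\Delta_d^{\circ}$. The Lyapunov function is $V_x=S^x$ with $S=s(\rho)+\hat s(\hat q)$. Fix any $x$ with $0<x<1-B_{\mathrm{red}}(\chi)/(2\Lambda_{\mathrm{red}}(\chi))$, which is possible by Assumption~\ref{ass:contraction-dominance}.

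\textbf{Local stability in probability.} By \eqref{eq:reduced-LV-limsup} there are a neighborhood $\mathcal U$ of $\mathcal I(\mathcal H_0)\times\{e_0\}$ and $c_x>0$ such that $\mathscr L V_x\le -c_xV_x$ on $\mathcal U$ minus the target. The stopped-supermartingale argument of \cite[Theorem 4.2.2]{mao2007stochastic} then gives stability in probability. It also gives a smaller neighborhood $\mathcal U_1$, with $\overline{\mathcal U_1}\subset\mathcal U$, from which the process stays in $\mathcal U$ forever with probability at least $p>0$.

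\textbf{Global recurrence.} Lemma~\ref{lem:reduced-recurrence} says that $\tau_\varepsilon^{\mathrm{red}}<\infty$ almost surely from every initial point with $\hat q_0\in\boldsymbol\Delta_d^\circ$; the interior of the simplex is preserved. Take $\varepsilon$ small enough that $B_\varepsilon(\mathcal H_0)\times B_\varepsilon(e_0)\subset\mathcal U_1$. Define successive entrance times $\sigma_k$ into $\mathcal U_1$ and exit times $\tau_k$ from $\mathcal U$. The strong Markov property gives $\mathbb P(\tau_1<\infty,\dots,\tau_n<\infty)\le(1-p)^n$, so $\mathbb P(\bigcup_k E_k)=1$ with $E_k=\{\sigma_k<\infty,\tau_k=\infty\}$.

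\textbf{Exponential rate.} We need that the coupled trajectory never reaches the target in finite time. On the observer side this is immediate since $\hat q_t\in\boldsymbol\Delta_d^\circ$, so $\hat s(\hat q_t)>0$ and hence $S>0$. Applying It\^o's formula to $\log V_x$ on $[\sigma_k,t\wedge\tau_k]$ gives
\[
\frac{\mathscr LV_x}{V_x}-\frac12\sum_{\nu}\Big|\frac{\nabla V_x[\Sigma_\nu]}{V_x}\Big|^2 .
\]
By \eqref{eq:reduced-LV-limsup} and \eqref{eq:reduced-log-diffusion}, after shrinking $\mathcal U$ this is at most
\[
-x\bigl[2(1-x)\Lambda_{\mathrm{red}}-B_{\mathrm{red}}\bigr]-2x^2\Lambda_{\mathrm{red}}+\delta=-x\bigl(2\Lambda_{\mathrm{red}}-B_{\mathrm{red}}\bigr)+\delta
\]
on $\mathcal U$. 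The $x^2$ terms cancel exactly, as in the state-feedback case. The martingale part has bounded quadratic-variation rate, because $|\mathfrak D_\nu|/S$ is bounded. The strong law for continuous local martingales then gives, almost surely on each $E_k$,
\[
\limsup_{t\to\infty}\frac1t\log S(\rho_t,\hat q_t)\le-(2\Lambda_{\mathrm{red}}-B_{\mathrm{red}})+\delta/x .
\]
Letting $\delta\to0$ extends this to almost every path.

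\textbf{Passing to the distance.} We have $d_0(\rho)\le\sqrt{c_0 s(\rho)}\le\sqrt{c_0S}$ and $\|\hat q-e_0\|_1=2\hat s(\hat q)\le 2S\le2\sqrt S$ near the target. Hence $\mathbf d_0\le C\sqrt S$ there, so $\tfrac1t\log\mathbf d_0\le\tfrac1{2t}\log S+o(1)$. This yields \eqref{eq:reduced-feedback-exponent} with exponent $-\Lambda_{\mathrm{red}}+\tfrac12B_{\mathrm{red}}<0$.

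\textbf{Main obstacle.} The hardest step is making the local limsup bounds \eqref{eq:reduced-drift-limsup}--\eqref{eq:reduced-log-diffusion} uniform on a genuine neighborhood, and hence the coupling with recurrence. In \eqref{eq:reduced-drift-limsup} the $O(S^2)$ remainder and the feedback terms of order $S^{\beta-1/2}$ and $S^{\beta-1}$ must be controlled uniformly in the direction of approach, that is, uniformly in $a\in[0,1]$ and in $\pi,\tilde\pi$. I would handle this by writing every ratio as a continuous function of $(a,\pi(q),\pi(\hat q),q_0,\hat q_0)$ on a compact set and using the infimum in \eqref{eq:reduced-diffusion-constant}. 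I would treat Lemma~\ref{lem:reduced-recurrence} as given.
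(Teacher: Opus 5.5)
Your proposal is correct and follows essentially the same route as the paper's proof. Both arguments use local stability in probability from $\mathscr L V_x\le -c_xV_x$ near the target, the entrance/exit decomposition $E_k=\{\sigma_k<\infty,\ \tau_k=\infty\}$ built on Lemma~\ref{lem:reduced-recurrence}, It\^o's formula for $\log V_x$ (where the $x^2$ terms cancel to give the rate $-x(2\Lambda_{\mathrm{red}}(\chi)-B_{\mathrm{red}}(\chi))$), and the comparison $\mathbf d_0^2\lesssim S$, while your additional details (choosing $\varepsilon$ with $B_\varepsilon(\mathcal H_0)\times B_\varepsilon(e_0)\subset\mathcal U_1$, and using boundedness of $\mathfrak D_\nu/S$ to justify the strong law) correctly fill in steps the paper leaves implicit.
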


\begin{proof}
By Assumption~\ref{ass:contraction-dominance}, fix
\(
0<x<
1-\frac{B_{\mathrm{red}}(\chi)}
        {2\Lambda_{\mathrm{red}}(\chi)}.
\)
Then, due to \eqref{eq:reduced-LV-limsup}, there exist a neighborhood \(\mathcal U\) of \(\mathcal I(\mathcal H_0)\times\{e_0\}\) and \(c_x>0\) such that
\(
\mathscr L V_x(\rho,\hat q)
\leq
-c_xV_x(\rho,\hat q)
\)
on
\(
\mathcal U\setminus
(\mathcal I(\mathcal H_0)\times\{e_0\}).
\)
Therefore, a stopped-supermartingale argument~\cite[Theorem 4.2.2]{mao2007stochastic} implies local stability in probability.

Choose a neighborhood \(\mathcal U_1\)  such that
\(\overline{\mathcal U_1}\subset\mathcal U\) and, for some \(p>0\),
\[
\inf_{(\rho,\hat q)\in\mathcal U_1}
\mathbb P_{\rho,\hat q}
\left(
(\rho_t,\hat q_t)\in\mathcal U
\ \text{for all }t\geq0
\right)
\geq p.
\]
Define recursively the entrance times \(\sigma_k\) into
\(\mathcal U_1\) and the subsequent exit times \(\tau_k\) from
\(\mathcal U\). By similar arguments as in the proof of Theorem~\ref{thm:QND-GES}, we deduce $\mathbb P\left(\textstyle \bigcup_{k\geq1} E_k \right)=1$ where $E_k:=\{\sigma_k<\infty,\ \tau_k=\infty\}$.

Since \((\rho_t,\hat q_t) \in \mathcal S(\mathcal H)\times\boldsymbol\Delta_{d}^{\circ}\) for all $t\geq0$ almost surely, \(V_x(\rho_t,\hat q_t)>0\) almost surely. Fix \(\delta>0\). By
\eqref{eq:reduced-LV-limsup} and
\eqref{eq:reduced-log-diffusion}, \(\mathcal U\) can be chosen
sufficiently small so that
\[
\begin{aligned}
 &\frac{\mathscr L V_x(\rho,\hat q)}{V_x(\rho,\hat q)}
-\frac12
\sum_{\nu=1}^{r}
\left|
\frac{\nabla V_x(\rho,\hat q)
[\Sigma_\nu(\rho,\hat q)]}
{V_x(\rho,\hat q)}
\right|^2\\
&~~~~\leq
-x\left[
2\Lambda_{\mathrm{red}}(\chi)
-B_{\mathrm{red}}(\chi)
\right]+\delta, \qquad  \forall (\rho, \hat{q})\in \mathcal U\setminus (\mathcal I(\mathcal H_0)\times\{e_0\}).
\end{aligned}
\]
Applying It\^o's formula after the stopping time \(\sigma_k\), and using the strong law for continuous local
martingales yields, 
\[
\limsup_{t\to\infty}
\frac1t\log V_x(\rho_t,\hat q_t)
\leq
-x\left[
2\Lambda_{\mathrm{red}}(\chi)
-B_{\mathrm{red}}(\chi)
\right]+\delta, \quad a.s. \text{ on } E_k.
\]
Since
\(
\mathbb P(\bigcup_{k\geq1}E_k)=1
\)
and \(\delta>0\) is arbitrary, we obtain
\[
\limsup_{t\to\infty}
\frac1t\log V_x(\rho_t,\hat q_t)
\leq
-x\left[
2\Lambda_{\mathrm{red}}(\chi)
-B_{\mathrm{red}}(\chi)
\right],
\qquad\text{a.s.}
\]
Moreover, since
\(
\mathbf d_{0}\big((\rho_t,\hat q_t),\mathcal I(\mathcal H_0)\times\{e_0\}\big)^2
\leq c_{\mathrm{red}}V_x(\rho,\hat q)^{1/x}
\)
for some $c_{\mathrm{red}}>0$, we deduce
\[
\begin{aligned}
\limsup_{t\to\infty}
\frac1t
\log
\mathbf d_{0}\big((\rho_t,\hat q_t),\mathcal I(\mathcal H_0)\times\{e_0\}\big)
&\leq
\frac{1}{2x}
\limsup_{t\to\infty}
\frac1t\log V_x(\rho_t,\hat q_t)\\
&\leq
-\Lambda_{\mathrm{red}}(\chi)
+\frac12B_{\mathrm{red}}(\chi)
<0, \quad a.s.,
\end{aligned}
\]
where the last inequality follows from the contraction-dominance condition in Assumption~\ref{ass:contraction-dominance}. Together with local stability in probability, this proves the almost-sure GES of \(\mathcal I(\mathcal H_0)\times\{e_0\}\) and~\eqref{eq:reduced-feedback-exponent}.
\end{proof}

The preceding argument again separates the stabilization mechanism into
global recurrence and local exponential contraction. Its logical structure
can be summarized as follows:
\begin{equation*}
\boxed{
\left.
\begin{aligned}
&
\underbrace{
\begin{gathered}
\text{QND}
+\text{ one-sided separation}
+\text{ reduced generator}\\
+\text{ reduced feedback}
+\text{ Hamiltonian non-decoupling}\\
+\text{ Hamiltonian reachability}
\end{gathered}
}_{\substack{\text{accessibility of the coupled}\\
\text{physical--observer dynamics}}}
+\ 
\underbrace{
\text{false-target rejection}
}_{\substack{\text{rejection of non-target}\\
\text{observer configurations}}}
\\[1mm]
&\hspace{42mm}
\Longrightarrow
\text{global recurrence to }
\mathcal I(\mathcal H_0)\times\{e_0\}
\\[3mm]
&
\underbrace{
\begin{gathered}
\text{QND}
+\text{ one-sided separation}
+\text{ reduced generator}\\
+\text{ reduced feedback}
+\text{ contraction dominance}
\end{gathered}
}_{\substack{\text{joint target invariance and local}\\
\text{measurement-induced contraction}}}
\\[1mm]
&\hspace{42mm}
\Longrightarrow
\text{local exponential stability of }
\mathcal I(\mathcal H_0)\times\{e_0\}
\end{aligned}
\right\}
\Longrightarrow
\begin{gathered}
\mathcal I(\mathcal H_0)\times\{e_0\}\\
\text{is a.s.\ GES.}
\end{gathered}
}
\end{equation*}

\begin{remark}
The constants quantify different parts of the stabilization mechanism.
The physical constant \(\Lambda_0\) controls the local contraction of the
true target-population error, the pairwise exponent
\(\Lambda_{i\mid j}(\chi)\) controls rejection of observer sector \(i\) when
the physical trajectory approaches sector \(j\), and
\(\Lambda_{\mathrm{red}}(\chi)\) controls the combined
physical--observer contraction.
Choosing \(a=1\) in~\eqref{eq:reduced-diffusion-constant} gives
\begin{equation*}
 \Lambda_{\mathrm{red}}(\chi) \leq \inf_{\pi\in\boldsymbol\Delta_{d-1}} \sum_{\nu}
\hat\iota_\nu^2\chi_\nu^2A_\nu(\pi)^2 = \Lambda_0.
\end{equation*}
In the nominal case, we obtain
\(
B_{\mathrm{red}}(1)=0.
\)
Moreover, due to the linearity of \(A_\nu\), we deduce $\Lambda_{\mathrm{red}}(1)=\Lambda_0.$
Thus, the nominal reduced-order observer recovers the state-feedback exponent
\(-\Lambda_0\).
Finally, since \(A_\nu(e_n)=\delta_{\nu,n}\), for every \(n\in[d]\), we have
\begin{equation*}
\Lambda_0\leq\sum_{\nu}\iota_\nu^2\delta_{\nu,n}^2=\frac12\Lambda_{0\mid n}(1).
\end{equation*}
The inequality may be strict since the minimum defining \(\Lambda_0\) can
be attained at a nontrivial convex combination of non-target sectors.
\end{remark}

\section{Implementation challenges and future directions}
\label{sec:implementation_future}

The preceding sections have identified structural conditions under which
measurement-based feedback stabilizes continuously monitored quantum
systems. These results are commonly derived under idealized assumptions:
the conditional state can be propagated exactly, the measurement record is
available continuously, feedback is applied without delay, and the
closed-loop dynamics admits a finite-dimensional Markovian description.
Thus, translating these results into experimentally viable control architectures requires more than efficient numerical integration. It calls for
control-oriented model reduction, sampled-data and finite-bandwidth
analysis, online identification, and systematic extensions to hybrid and
non-Markovian settings.

Stabilization also captures only one aspect of closed-loop performance. A
feedback law may guarantee convergence without optimizing the preparation
time, control effort, leakage, robustness margin, or sensitivity to
measurement noise. Moreover, when the available actuation cannot restore
invariance of the desired state or subspace, convergence to a fixed target
is no longer the appropriate asymptotic objective. These limitations
motivate the implementation issues and research directions discussed below.

\subsection{Scalable estimation and low-complexity feedback}
\label{subsec:future_reduced_filtering}

Scalability arises at several levels in measurement-based quantum feedback.
For a system on an \(N\)-dimensional Hilbert space, a general density operator
contains \(N^2-1\) independent real parameters, so propagating the full
conditional state can become costly at high measurement-update rates.
For a plant composed of many interacting subsystems, the joint Hilbert-space
dimension introduces an additional exponential dependence on the number of
subsystems. Even when estimation is tractable, the controller itself must
operate within finite computational, bandwidth, and latency constraints.
These considerations motivate complementary reductions of the estimator,
the many-body model, and the controller realization.

\paragraph{Reduced estimation.}
The reduced QND filter introduced in Subsection~\ref{sec:observer-sme} illustrates a
general principle: full-state reconstruction is unnecessary
when the feedback law depends only on a smaller set of conditional variables.
Extending this principle beyond the QND setting is a central problem for
scalable quantum feedback.

Two complementary approaches are available. Projection filters approximate
the conditional state on a prescribed finite-dimensional manifold, typically
using information-geometric or related projection methods
\cite{van2005quantum,nielsen2009quantum,tezak2017low,gao2019design,
gao2020improved,amini2025feedback}. Exact reduction instead seeks a
lower-dimensional Belavkin equation that reproduces the conditional
expectations of selected observables, using observability, minimal-realization
methods, and noncommutative conditional expectations
\cite{grigoletto2025quantum}. However, for feedback design, approximation or
realization accuracy alone is insufficient. The reduced state must retain the
information and structural properties required by the controller, such as
target invariance, distinguishability of relevant sectors, and the Lyapunov
or recurrence properties underlying closed-loop stability. Developing
control-oriented reduction criteria that quantify this trade-off remains an
important problem.

\paragraph{Mean-field reduction for large monitored ensembles.}
A different scalability problem arises when the plant consists of a large
number of interacting and continuously monitored subsystems. Let
\(\mathfrak h\simeq\mathbb C^n\) be the one-particle Hilbert space and
$
\mathcal H_M:=\mathfrak h^{\otimes M}
$
the Hilbert space of an \(M\)-particle system. Even for small \(n\), direct
propagation of a general conditional state on \(\mathcal H_M\) becomes
intractable as \(M\) increases.

For weak mean-field interactions, exchangeable initial data, and suitable
local measurement structures, the many-particle conditional dynamics may
admit a nonlinear one-particle limit. Schematically, the limiting conditional
state satisfies an equation of the form
\begin{equation*}
d\varrho_t
=
\mathcal L_{u_t,\bar\varrho_t}(\varrho_t)\,dt
+
\sum_{\nu=1}^{r}
\iota_\nu
\mathcal G_{L_\nu}(\varrho_t)\,dW_{\nu,t},
\qquad
\bar\varrho_t:=\mathbb E[\varrho_t],
\end{equation*}
where the dependence on \(\bar\varrho_t\) represents the effective
mean-field interaction. The conditional state \(\varrho_t\) remains
stochastic because of the local measurement record, while the interaction
with the remaining particles enters through the averaged one-particle state.
The corresponding propagation-of-chaos property takes the form
\begin{equation*}
\sup_{0\leq t\leq T}
\mathbb E
\left[
\left\|
\rho_t^{M:k}
-
\varrho_t^{1}\otimes\cdots\otimes\varrho_t^{k}
\right\|_1
\right]
\longrightarrow0,
\qquad M\to\infty,
\end{equation*}
for every fixed \(k\) and \(T>0\), where \(\rho_t^{M:k}\) is the
\(k\)-particle marginal and
\(\varrho^1,\ldots,\varrho^k\) are independent copies of the limiting
process. This reduction is complementary to the projection and realization
methods above: those methods reduce the number of variables used to represent
one conditional state, whereas propagation of chaos reduces the number of
interacting subsystems that must be represented jointly.

Mean-field Belavkin equations and propagation-of-chaos results have been
developed for finite-dimensional monitored systems, with extensions to
heterogeneous graphon interactions, infinite-dimensional models, and
mixed-state dynamics under inefficient measurements
\cite{chalal2023meanfieldBelavkin,amini2025graphon,
deBouard2026meanfield,guo2026propagation}; see also
\cite{kolokoltsov2026quantumFiltering} for a recent mathematical review of
quantum filtering, propagation of chaos, and their applications to feedback
control and mean-field games. These results suggest decentralized
feedback laws based on a local conditional state together with a mean-field
statistic. However, their use for feedback synthesis requires more than
fixed-policy propagation of chaos. In particular, convergence of the
finite-particle and limiting dynamics for each prescribed feedback law does
not by itself imply convergence of the corresponding stability properties or
optimal-control problems. Thus, a control-oriented theory should seek
estimates that are uniform over relevant feedback classes, together with
procedures for transferring stabilizing or nearly optimal policies between
the limiting and finite-particle models. Uniform-in-time estimates under
stabilizing feedback and conditional propagation of chaos under common or
collective measurement noise remain particularly important open problems.

\paragraph{Low-complexity controller realizations.}
Reduction of the information state does not by itself guarantee an
implementable feedback law. Thus, a complementary route is to reduce the
complexity of the controller itself. To describe both deterministic and
stochastic actuation, let \(v_t\) be a classical semimartingale control
signal. The corresponding Hamiltonian action is naturally interpreted in
Stratonovich form $dU_t=-\mathrm{i}H_1U_t\circ dv_t$.
When \(v_t\) has finite variation, $dv_t=u_tdt$
and this reduces to the usual Hamiltonian control with amplitude \(u_t\).
When \(v_t\) has nonzero quadratic variation, conversion to It\^o form
introduces the corresponding quadratic-variation correction.
A general finite-dimensional measurement-driven controller may be represented as
\begin{align}
{d}z_t&=f_c(z_t){d}t+ \sum_{\nu}g_{c,\nu}(z_t){d}Y_{\nu,t},\label{eq:general_controller_state}\\
{d}v_t&=a_c(z_t){d}t+  \sum_{\nu}b_{c,\nu}(z_t){d}Y_{\nu,t}+c_c(z_t){d}B_t,\label{eq:general_control_increment}
\end{align}
where \(z_t\) is the controller state and \(B_t\) is a Wiener process
independent of the measurement noises. Equations
\eqref{eq:general_controller_state} and
\eqref{eq:general_control_increment} specify the classical controller, while
its interconnection with the plant determines the complete closed-loop SME.

In filtering-based Bayesian feedback, \(z_t=\hat\rho_t\), or
\(z_t=\hat q_t\) for a reduced observer, and the actuation typically has
finite variation, $dv_t=a_c(z_t)dt.$
The principal online cost is then the propagation of the conditional state
or its reduced representation. In analog-filter feedback, \(z_t\) is instead
the internal state of a physical signal-processing circuit. For example, a
first-order low-pass controller may be modeled as
\[
\textstyle dz_t
=
-\omega_c z_t dt+\omega_c dY_t,
\qquad
dv_t
=
\kappa_c(z_t) dt.
\]
No conditional density operator is reconstructed, reducing the online
computational burden at the cost of finite bandwidth and additional memory
in the feedback loop. The quantum-enhanced optical phase-tracking experiment
of Yonezawa \emph{et al.}~\cite{yonezawa2012quantum} provides a representative
implementation in which a homodyne record is processed by a real-time tracking
filter before feedback is applied to the local-oscillator phase.
Wiseman--Milburn Markovian feedback~\cite{wiseman2009quantum} corresponds to
the memoryless direct-feedthrough case
$
dv_t=fdt+\sum_{\nu}\sigma_\nu dY_{\nu,t},
$
with no controller state \(z_t\). It may be viewed as an idealized
negligible-delay limit of electronic measurement feedback. Noise-assisted
feedback instead introduces an additional stochastic actuation,
$
dv_t=\sigma(z_t)dB_t,
$
with a gain determined by the conditional state or a reduced estimate
\cite{cardona2020exponential}. The resulting control diffusion can remove
undesired invariant configurations while allowing the actuation to vanish
near the target. Bayesian, analog, Markovian, and noise-assisted feedback
therefore represent distinct realizations of
\eqref{eq:general_controller_state} and \eqref{eq:general_control_increment}
and may also be combined.

Therefore, the central problem is control-relevant reduction: determining the smallest information state and simplest controller realization that retain the stability, robustness, and performance properties required by the closed-loop objective. This requires linking estimator and model reduction errors, controller complexity, sampling, bandwidth, and delay directly to closed-loop guarantees.


\subsection{Sampling, delay, and experimental constraints}
\label{subsec:future_sampling_delay}

Continuous-time SMEs are formulated in terms of an ideal measurement record
\(\{Y_{\nu,t}\}_{t\geq0}\). However, in real experiments, the detector output is
amplified, filtered, and digitized over sampling intervals of duration
\(\Delta>0\). The datum acquired from the \(\nu\)-th measurement channel
during the \(k\)-th interval may be represented by
\[
\begin{aligned}
I_{\nu,k}:=\int_{(k-1)\Delta}^{k\Delta}I_{\nu,t}{d}t=\int_{(k-1)\Delta}^{k\Delta}{d}Y_{\nu,t}=Y_{\nu,k\Delta}-Y_{\nu,(k-1)\Delta},\quad k\geq1,
\end{aligned}
\]
where
\(
I_{\nu,t}:={d}Y_{\nu,t}/{d}t
\)
is understood formally for a diffusive measurement record. Thus, the
experimentally accessible data are the discrete sequence
\(\{I_{\nu,k}\}_{k\geq1}\) rather than the complete continuous-time
trajectory.

The feedback input is updated at discrete times and is
typically held constant or interpolated between successive samples. Therefore, the
implemented closed loop is more accurately described as a sampled-data stochastic system~\cite{rouchon2022tutorial}.
Stability properties established for the continuous-time SME do not
automatically carry over to its digital implementation. A systematic theory
should quantify how the sampling period, measurement strength, feedback
gain, and nominal convergence rate jointly determine closed-loop behavior.
In particular, one seeks conditions of the form $\Delta<\Delta_{\max}$ preserving the stability or practical stability, together with quantitative estimates of the degradation in convergence rate as \(\Delta\) increases.

Finite sampling also changes the conditional state relevant for estimation
and feedback. Let
\(
\mathcal F_k^{I}:=\sigma(I_{\nu,j}:1\leq\nu\leq r,1\leq j\leq k)
\)
denote the information generated by the digitized observations up to time
\(k\Delta\). The corresponding Bayesian information state is
\(
\rho^{\Delta}_k:=\mathbb E[\rho_{k\Delta}|\mathcal F_k^{I}].
\)
In general, \(\rho^{\Delta}_k\) cannot be recovered exactly by simply
inserting the finite-bin observations into a first-order discretization of
the continuous-time SME. Exact finite-bin quantum instruments and systematic
higher-order approximations have recently been developed for time-averaged
continuous measurements \cite{guilmin2025time,wonglakhon2026quantum}.
Extending these constructions to feedback stabilization and closed-loop
performance analysis remains an important open problem.

Feedback delay may arise without modifying the underlying Markovian dynamics
of the quantum plant. Such an \emph{information-path delay} results from measurement acquisition, signal processing, state estimation, communication, control computation, or actuation. 
It may be represented schematically as $u_t=u(\hat\rho_{t-\tau_c})$ or $u_t=u(\hat q_{t-\tau_c}).$ Although the plant may remain Markovian, the closed-loop dynamics become history dependent since the controller acts on delayed information. For
particular continuously monitored systems, delay-dependent stability has
been studied for feedback laws subject to estimation or computation delays
\cite{kashima2009control,ge2012quantum}. In continuous time, fixed delays
naturally lead to dynamics on a history space, while the
Lyapunov--Krasovskii and Razumikhin functionals used in stochastic delay
analysis \cite{mao2007stochastic} are difficult to construct globally on the quantum state space. Developing verifiable conditions that combine global recurrence with local delay-robust stability is an important research direction.

The photon-box experiment~\cite{sayrin2011realtime} provides a physically instructive example. After
interacting with the cavity, the Rydberg atoms require a finite time to reach
the detector. Then, the most recent measurement outcomes are unavailable
when the next control action is computed. This effect is modeled as a known
\(d\)-step delay in the discrete-time quantum trajectory and compensated by
augmenting the filter with the pending control actions and employing a
stochastic Smith predictor
\cite{amini2012stabilization,amini2013feedback}.
Although the delay originates from propagation of the probe atoms, it enters
the model as a delayed observation and control channel rather than as memory
in the cavity dynamics. Propagation-induced physical memory is introduced
separately in Subsection~\ref{subsec:future_nonmarkovian}.

Finite detector bandwidth introduces a further departure from the ideal
instantaneous-measurement model. By augmenting the state with detector
variables, one may derive a joint quantum--classical filtering equation that
captures detector dynamics, signal processing, and feedback within a unified
framework \cite{annby2022quantum}.
Such models provide a natural basis for determining when the ideal SME
remains an accurate approximation and for quantifying how finite bandwidth
affects estimation and closed-loop performance.

Other experimental constraints include bounded control amplitudes, finite
slew rates, detector inefficiency, dark counts, amplifier noise, calibration
errors, uncertainty in the measurement phase, data loss, and leakage outside
a truncated Hilbert space. These effects motivate feedback laws designed
under explicit hardware constraints, rather than unconstrained controls
followed by ad hoc saturation. Measurement strength should likewise be
treated as a design variable: stronger monitoring provides more information,
but also increases measurement back-action and generally requires larger
detector and controller bandwidths. These considerations point toward a
co-design framework in which the measurement channel, estimator, controller,
sampling architecture, and experimental hardware are treated jointly.

\subsection{Adaptation, uncertainty, and hybrid operation}
\label{subsec:future_adaptive_hybrid}

The observer-based results reviewed above can handle fixed parameter
mismatches; however, they do not generally account for unknown parameters or variations during operation. In practice, coupling strengths, detunings, measurement efficiencies, decoherence rates, and actuator gains may be imperfectly known or subject to slow drift. A natural extension is to combine the quantum filter with an online parameter estimator. Schematically, an adaptive
observer may be written as
\[
\begin{aligned}
 d\hat\rho_t&=\mathcal L_{\hat\theta_t,u_t}\bigl(\hat\rho_t\bigr) dt+ \sum_{\nu}\mathcal G_{\nu,\hat\theta_t}\bigl(\hat\rho_t\bigr)d\widehat W_{\nu,t},
\\
d\hat\theta_t&=a\bigl(\hat\rho_t,\hat\theta_t,u_t\bigr) dt+ \sum_{\nu}b_\nu\bigl(\hat\rho_t,\hat\theta_t,u_t\bigr)d\widehat W_{\nu,t},
\end{aligned}
\]
where
\(
d\widehat W_{\nu,t}=dY_{\nu,t}-h_{\nu,\hat\theta_t}\bigl(\hat\rho_t\bigr)dt
\)
is the innovation process associated with the current parameter estimate.
Before the estimated model converges to the true one, this process may not
be a Wiener process under the physical probability measure. Representative
approaches include Bayesian parameter augmentation, banks of candidate
filters, online maximum-likelihood and score-based methods, and adaptive
tuning laws \cite{enami2021proposal,clausen2024online}.

The main difficulty is the coupling between identification and control. Rapid
stabilization may drive the system into a regime in which the measurement
record contains little information about the unknown parameters, whereas
additional excitation introduced for identification may degrade regulation
and increase measurement back-action. For constant unknown parameters, one
seeks closed-loop identifiability, consistency of \(\widehat\theta_t\), and joint convergence of \((\rho_t,\hat\rho_t,\hat\theta_t)\). 
For slowly varying parameters, quantitative tracking bounds are more
appropriate. Further questions concern preservation of positivity and
normalization, confinement of parameter estimates to physically admissible
sets, and the use of finite-time confidence regions in robust feedback
design.

Abrupt faults, discrete measurement events, and changes of operating regime
require a hybrid description rather than a continuously varying parameter
model. Homodyne and heterodyne detection generate diffusive records, whereas
photon counting produces jump processes; simultaneous or switched
measurements therefore lead naturally to jump--diffusion SMEs
\cite{bouten2007introduction}. Classical modes may additionally represent
detector configurations, actuator faults, finite-state environments, or
operating regimes. The resulting quantum--classical process may remain
Markovian after these variables are included in the augmented state
\cite{barchielli2024hybrid,barchielli2024markovian}.

A related hybrid structure arises from deliberate switching among feedback
laws. Even for a purely diffusive plant, the closed loop becomes hybrid when
a discrete controller mode selects among different Hamiltonian or dissipative
actions. The switching constructions reviewed earlier show how complementary
control mechanisms can be combined, and how mode-wise invariance requirements may be relaxed while avoiding chattering or Zeno behavior through suitable modulation or hysteresis \cite{liang2024switching}.
Extending these methods to reduced observers, jump--diffusion dynamics,
delayed feedback, and uncertain switching mechanisms remains an important
direction.

Hybrid models also provide a natural framework for fault-tolerant feedback.
Quantum--classical filters may be used to estimate both the conditional
quantum state and an active fault mode
\cite{gao2016fault,wang2016fault,yu2020hybrid}.
The controller must then detect the mode transition, reconfigure the observer
and feedback law, and recover an appropriate stability or practical-stability
property. A systematic theory should combine mode estimation, switching,
robustness to missed or spurious detections, and nonlinear SME stabilization.
When a fault or mode transition destroys invariance of the original target,
practical stability and long-time statistical objectives become more
appropriate, as discussed in Subsection~\ref{subsec:future_noninvariant_target}.

\subsection{Non-Markovian quantum feedback}
\label{subsec:future_nonmarkovian}

Unlike an information-path delay, which postpones the availability or
application of feedback information, \textit{propagation delay} may introduce memory directly into the physical dynamics. When an output field travels through an optical path or waveguide and subsequently interacts again with the plant, it
carries quantum information from earlier system--field interactions. The
reduced plant dynamics is generally non-Markovian~\cite{barchielli2012quantum,grimsmo2015time,
pichler2016photonic}.

The Markovian SME approximation may fail for structured
reservoirs, strongly coupled ancillary systems, colored environments, and
coherent feedback networks with non-negligible propagation times. Certain
memory effects can be represented through generalized Lindblad equations and their jump--diffusion unravellings
\cite{barchielli2010jump}.
Alternatively, the principal system may be embedded into a larger Markovian
model containing ancillary modes, pseudomodes, or other degrees of freedom that encode memory effects. Markovian quantum-network representations, often formulated within the SLH framework~\cite{combes2017slh}, characterize each component by a triple $(S,L,H)$ describing scattering, system-field coupling, and internal Hamiltonian, and provide modular rules for network interconnection. These representations offer a systematic way to construct augmented Markovian models, from which a quantum filter can be derived for the enlarged conditional state
\cite{gough2012single,xue2019modeling,nurdin2025physical}.
Markovian embedding shifts the control problem to a larger state space.
Stability properties of the augmented filter must be translated into
corresponding statements for the principal system marginal, while the memory
subsystem may be only partially observed. The enlarged conditional state may
also be more expensive to propagate than the original filter.

Important open problems include control-oriented reduction of monitored
Markovian embeddings, observer stability under incomplete access to memory
degrees of freedom, and feedback design based only on principal-system
measurements. It is also desirable to derive stability and robustness
conditions directly in terms of memory kernels, environmental spectra, or
delayed input--output relations, without requiring an explicit
high-dimensional embedding. A systematic comparison between
measurement-based and coherent feedback in the presence of propagation
memory remains largely undeveloped.

\subsection{Non-invariant targets, stochastic optimal control, and learning}
\label{subsec:future_noninvariant_target}
\label{subsec:future_optimal_learning}

Exact asymptotic stabilization requires invariance of the target under the
admissible closed-loop dynamics. When an uncontrolled channel destroys
nominal invariance, the first question is whether the available control
resources can restore it. Switching  control may sometimes resolve this issue when individual closed-loop modes do not preserve the
target \cite{liang2024switching}.
However, in other settings, energy relaxation, persistent leakage, or
actuator limitations make exact invariance unattainable. Convergence to a
fixed state or subspace is then no longer an appropriate asymptotic
objective.

A first alternative is \emph{practical stabilization}, in which the
trajectory approaches and remains near the nominal target, with a residual
error determined by the disturbance magnitude and the available control
authority. Relevant objectives include input-to-state stability,
convergence to an invariant neighborhood, and quantitative estimates of the
asymptotic target error \cite{liang2025exploring}.
For persistent stochastic disturbances, the long-time behavior is more
naturally characterized through invariant probability measures. Fundamental
questions concern existence and uniqueness of an invariant measure,
convergence in law, mixing rates, metastability, and concentration of the
stationary distribution near the desired subspace~\cite{meyn2012markov}. Results of this type have
been established for classes of uncontrolled quantum trajectories under
ergodicity and purification assumptions \cite{benoist2021invariant}. However,  the extension to nonlinear feedback-controlled SMEs remains open.

If the closed-loop process under a feedback law \(u\) admits a unique
invariant measure \(\mu_u\), its stationary target error may be quantified by
\[
\mathcal J_{\mathrm{stat}}(u)
:=
\int_{\mathcal S(\mathcal H)}
d_S(\rho)\,
\mu_u(\mathrm{d}\rho).
\]
More general stationary criteria may include leakage, energy, or average
control effort. The design objective is not pointwise convergence, but
concentration of the long-time distribution near the desired target. This
formulation leads naturally to ergodic stochastic control~\cite{benaim2022markov}.

When exact stabilization is feasible, stability alone does not determine
the quality of the transient response. Stabilizing controllers may differ
substantially in preparation time, control effort, leakage, robustness
margin, and sensitivity to measurement noise. For a continuously monitored
system, a finite-horizon stochastic optimal-control problem~\cite{yong1999stochastic} may be written as
\begin{equation*}
J_{t,\rho}(u):=\mathbb E_{t,\rho}^{u}\left[\Phi(\rho_T)+\int_t^T\ell(\rho_s,u_s){d}s
\right],
\end{equation*}
where \(\Phi\) is a terminal cost and \(\ell\) may penalize infidelity,
control effort, leakage, or violations of experimental constraints. Other
relevant criteria include minimum expected hitting time, risk-sensitive
costs, long-time average performance, and probabilities of leaving a
prescribed safe set.

Define
\(
V(t,\rho):=\inf_{u\in\mathcal U_{t,T}}J_{t,\rho}(u).
\)
Dynamic programming formally yields
\begin{equation*}
 -\partial_t V(t,\rho)=\inf_{u\in\mathcal U}\left\{\ell(\rho,u)+\mathcal A^{u}V(t,\rho)\right\},\quad V(T,\rho)=\Phi(\rho),
\end{equation*}
where \(\mathcal A^{u}\) is the infinitesimal generator of the controlled SME
\cite{bouten2005bellman,gough2005hamilton}.
The conditional density operator serves as an information state, but the
resulting Hamilton--Jacobi--Bellman equation is posed on a state-constrained
compact convex set and is generally degenerate. Direct solution is restricted to low-dimensional systems, making reduced filters, symmetry
reduction, approximate dynamic programming, and model-predictive control
particularly relevant.

The measurement channel may itself be included in the optimization. If
\(\alpha_t\) denotes a measurement setting, such as the local-oscillator
phase, detection basis, or measurement strength, an adaptive measurement may
be represented schematically as
\(
\alpha_t = \pi_t\bigl(\mathcal F_t^Y\bigr),
\)
and then future observations depend on the previously accumulated record.
Wiseman's adaptive phase-measurement scheme provides a canonical example in
which the local-oscillator phase is updated in real time to improve the
information extracted from subsequent homodyne measurements
\cite{wiseman1995adaptive,armen2002adaptive}.
Thus, the adaptive measurement can be naturally viewed as part of a joint
measurement--estimation--control problem.

Model-based numerical optimization provides a complementary route when
closed-form feedback design or dynamic programming becomes intractable.
Gradient-based methods such as GRAPE, ensemble-control formulations for
systems subject to parameter dispersion, and pseudospectral schemes for
open-system optimal control provide complementary approaches to
high-fidelity quantum-control design
~\cite{khaneja2005optimal,li2009pseudospectral}.
Reinforcement learning and related data-driven approaches can instead use
the conditional state, a reduced observer, or a finite-dimensional summary
of the measurement history as the policy input
~\cite{dong2023learning,guatto2024improving,ma2025machine,
song2025fast}.
The principal challenge is certification: a policy that performs well in
simulation may fail to preserve target invariance, satisfy hardware
constraints, or remain robust to model mismatch and operating conditions
outside the training distribution.

The broader objective is a joint design of measurement,
estimation, and actuation that balances information acquisition,
measurement back-action, transient performance, robustness, and hardware
limitations. Developing scalable optimization and learning methods with
verifiable closed-loop guarantees remains a central challenge.

\subsection{Outlook}
\label{subsec:future_outlook}

The QND setting provides a particularly transparent methodological benchmark
for measurement-based quantum feedback. The relevant information state,
reduced filtering dynamics, and stabilization mechanism can be expressed in
terms of sector populations, while convergence can be characterized through
explicit exponential rates. More broadly, the QND analysis illustrates how
measurement structure may be exploited to obtain low-dimensional estimators,
implementable feedback laws, and quantitative closed-loop guarantees.

The central challenge is to retain this control-theoretic transparency when
the observable structure is noncommutative, the information state is
high-dimensional, or the implemented dynamics is sampled, delayed, uncertain,
hybrid, or non-Markovian. Switching feedback and adaptive measurement further
show that an effective closed-loop architecture do not need consisting of a single
smooth state-feedback law: discrete controller modes, reduced observers, and
online modification of the observation channel may all play essential roles.

Progress in these directions also requires common standards for numerical and
experimental evaluation. Nominal sample trajectories alone provide limited
evidence for stochastic closed-loop performance. Comparisons should report convergence or hitting-time statistics, control effort, computational
cost per measurement update, memory requirements, and robustness across
trajectory ensembles and model uncertainties. Whenever applicable, reduced
or learning-based controllers should be compared with open-loop, direct
Markovian-feedback, and full-filter baselines. Hardware-in-the-loop validation
is particularly important since estimator complexity, sampling rate,
bandwidth, and feedback delay are coupled. Reproducible studies should also
specify the integration method, sampling period or numerical tolerance,
control constraints, uncertainty model, and statistical confidence of the
reported results.

For bosonic cavities, optomechanical systems, and collective atomic
ensembles, numerical validation must additionally account for
Hilbert-space truncation and model-reduction errors. Stability and performance
estimates that remain uniform with respect to the truncation dimension would
provide an important link between finite-dimensional SME theory and
continuous-variable experimental platforms.

Therefore, further progress is unlikely to arise from a universal feedback
formula alone. A more realistic objective is to develop a systematic
methodology for measurement-based feedback. This requires determining what
information is needed for control, designing estimators and controllers
compatible with the available hardware, and establishing performance
guarantees appropriate to the control objective, including asymptotic
stabilization, practical stability, and long-time statistical performance.
Integrating quantum filtering, nonlinear stochastic control, model reduction,
optimization, reproducible computation, and experimental implementation
within such a methodology remains a central challenge for measurement-based
quantum feedback.




\bibliographystyle{plain}
\bibliography{cas-refs}

\end{document}